\providecommand{\tabularnewline}{\\}
\theoremstyle{definition}
\newtheorem*{example*}{\protect\examplename}
\theoremstyle{plain}
\newtheorem{assumption}{\protect\assumptionname}
\theoremstyle{plain}
\newtheorem{lem}{\protect\lemmaname}
\theoremstyle{plain}
\newtheorem{thm}{\protect\theoremname}
\theoremstyle{plain}
\newtheorem{cor}{\protect\corollaryname}
\definecolor{mygray}{rgb}{0.6,0.6,0.6}
\newcommand{\mred}[1]{#1}
\newcommand{\norm}[1]{\left\vert\left\vert #1 \right\vert\right\vert_{2}}
\newcommand*\bigcdot{\mathpalette\bigcdot@{.8}}
\newcommand*\bigcdot@[2]{\mathbin{\vcenter{\hbox{\scalebox{#2}{$\m@th#1\bullet$}}}}}
\newcommand{\bcd}{\bigcdot}
\providecommand{\assumptionname}{Assumption}
\providecommand{\corollaryname}{Corollary}
\providecommand{\examplename}{Example}
\providecommand{\lemmaname}{Lemma}
\providecommand{\theoremname}{Theorem}
\begin{document}
\title{Three-way Cross-Fitting and Pseudo-Outcome Regression for Estimation
of Conditional Effects and other Linear Functionals}
\author{Aaron Fisher \& Virginia Fisher}
\date{6/12/23}
\maketitle
\begin{abstract}
We propose an approach to better inform treatment decisions at an
individual level by adapting recent advances in average treatment
effect estimation to \emph{conditional} average treatment effect estimation.
Our work is based on doubly robust estimation methods, which combine
flexible machine learning tools to produce efficient effect estimates
while relaxing parametric assumptions about the data generating process.
Refinements to doubly robust methods have achieved faster convergence
by incorporating \emph{3-way cross-fitting}, which entails dividing
the sample into three partitions, using the first to estimate the
conditional probability of treatment, the second to estimate the conditional
expectation of the outcome, and the third to perform a first order
bias correction step.

Here, we combine the approaches of 3-way cross-fitting and pseudo-outcome
regression to produce personalized effect estimates. We show that
this approach yields fast convergence rates under a smoothness condition
on the conditional expectation of the outcome.
\end{abstract}
\textbf{Keywords:} debiased learning, orthogonal learning, personalized
medicine, second order remainder \textcolor{teal}{}

\section{Introduction}

Estimation of\emph{ }conditional effects quantifies the problem of
intervention decisions at the individual level. One of the most widely
studied estimation targets in this domain is the conditional average
treatment effect (CATE; e.g., \citealp{Robins1995-aw,Hill2011-zd,Imai2013-bz,Athey2016-vj,Richard_Hahn2017-hi,Powers2018-ya,Kunzel2019-ni}),
defined as the conditional mean of a contrast between potential outcomes
given a (possibly multi-dimensional) personalization variable. 

Formally, let $A\in\{0,1\}$ be an indicator of recieving treatment,
let $Y^{(1)}$ and $Y^{(0)}$ be the potential outcomes under treatment
and control respectively, so that $Y=AY^{(1)}+(1-A)Y^{(0)}$ is the
observed outcome. Let $X$ be a vector of confounders and effect modifiers;
let $C$ be a subvector of $X$ representing variables used as personalization
factors; and let $\mu(a,x)=\mathbb{E}\left(Y|X=x,A=a\right)$. Under
conventional identifiability assumptions (see Section \ref{sec:Estimands}),
the CATE can be expressed as $\mathbb{E}\left[Y^{(1)}-Y^{(0)}|C\right]=\mathbb{E}\left[\mu(1,X)-\mu(0,X)|C\right]$.
Our results for the CATE primarily use the fact that the CATE is a
conditional expectation of a function that is linear in $\mu$. For
this reason, our results also hold for other parameters with comparable
linearity properties (i.e., \emph{linear functionals, }see Section
\ref{sec:Estimands}), such as the conditional covariance, $Cov(A,Y|X)$.

\textcolor{black}{Many CATE estimation methods employ} the use of
so-called \emph{pseudo-outcomes} (also known as ``unbiased transformations''
or ``modified outcomes''), which are functions of the observed
data that act as stand-ins for latent, unobserved outcomes (e.g.,
$Y^{(1)}-Y^{(0)}$). By fitting a regression model against a pseudo-outcome,
we can mimic the idealized scenario in which a regression could be
fit directly to the latent outcome itself (\citealp{Rubin2005-pe,Van_der_Laan2006-os,Tian2014-fz,Chen2017-uv,Powers2018-ya,Foster2019-nb,Semenova2020-bw,Kennedy2022-da};
see also \citealt{Buckley1979-gg,Fan1994-vz,Rubin2007-ka,Diaz2018-ax}).
We define a pseudo-outcome as any variable that has the same conditional
expectation as the outcome of interest given covariates $X$. For
example, when estimating the CATE from randomized trial data with
binary treatments, \citet{Tian2014-fz} point out that $2Y(2A-1)$
functions as a pseudo-outcome (that is, $\mathbb{E}\left(2Y(2A-1)\,\,|\,\,X\right)=\mathbb{E}\left(Y^{(1)}-Y^{(0)}|X\right)$).
\textcolor{black}{{} Alternatively, when estimating the conditional
covariance $Cov\left(A,Y|X\right)$, the quantity $A(Y-\mathbb{E}\left(Y|X\right))$
functions as a pseudo-outcome (that is, $\mathbb{E}\left[A(Y-\mathbb{E}\left(Y|X\right))|X\right]=Cov\left(A,Y|X\right)$;
see, e.g., \citealp{Newey2018-da}).}

Recent work on the CATE takes inspiration from the rich literature
on double robustness and cross-fitting\emph{, }concepts that are popularly
applied to study the unconditional average treatment effect (ATE).
Both terms refer to algorithms that combine initial estimates of nuisance
functions (typically the propensity score, $\pi_{0}(X)=\mathbb{E}\left(A|X\right)$,
and outcome regression, $\mu_{0}(A,X)=\mathbb{E}\left(Y|A,X\right)$),
into a final estimate. The term\emph{ cross-fitting} (CF) refers to
procedures that split the data into two partitions; use the first
to estimate nuisance functions; and use the second to perform a bias
correction step (\citealp{Kennedy2020-ie,Kennedy2022-jv,Chernozhukov2022-jx,Chernozhukov2022-iq};
see also related work from, e.g., \citealt{Bickel1982-ht,Schick1986-qk,Bickel1988-mt},
as well as \citealt{Athey2016-vj}). 

Early use of the term \emph{doubly robust} (DR) referred to procedures
that depend on two nuisance function estimates, and that remain consistent
if at least one of the two nuisance models is correctly specified
(see, e.g., \citealt{Scharfstein1999-qw,Robins2000-xh,Bang2005-kd}).
However, this interpretation has become less common as researchers
increasingly move away from parametric assumptions in favor of flexible
machine learning tools. More recent usage of the term ``doubly robust''
emphasizes the fact that the same techniques can be used to combine
\emph{flexible} models of each nuisance function in such a way that
the final estimator has an error bound that depends on the two nuisance
errors only via their product. In particular, cross-fit DR estimates
for the ATE generally have bias on the order of \textcolor{black}{
\begin{equation}
\max_{a\in\{0,1\}}\mathbb{E}\left[\pi_{0}(X)\left\{ \frac{1}{\hat{\pi}(X)}-\frac{1}{\pi_{0}(X)}\right\} \left\{ \hat{\mu}(a,X)-\mu_{0}(a,X)\right\} |\hat{\pi},\hat{\mu}\right]\label{eq:cauchy-schwartz}
\end{equation}
}\citep{Chernozhukov2022-iq,Kennedy2022-jv,Kennedy2022-da}\textcolor{black}{.
This bias term is often described as }\textcolor{black}{\emph{second
order,}}\textcolor{black}{{} as it depends only on second order products
(see, e.g., \citealp{Kennedy2022-jv}). }

\textcolor{black}{Similar bias properties have recently been transported
to the CATE literature. For example, \citet{Kennedy2022-jv}'s DR-Learner
for the CATE attains a bias analogous to Eq (\ref{eq:cauchy-schwartz}).
\citet{Nie2020-ih} derive a CATE estimator that is efficient under
conditions strong enough to ensure that Eq (\ref{eq:cauchy-schwartz})
is $O_{\mathbb{P}}(n^{-1/2})$. \citet{Diaz2018-ax} derive a conditional
restricted mean survival time effect estimate with error terms analogous
to Eq (\ref{eq:cauchy-schwartz}).}

That said, \citet{Newey2018-da} point out that the ATE bias in Eq
\ref{eq:cauchy-schwartz} can be non-negligible when $\hat{\pi}$
and $\hat{\mu}$ are estimated non-parametrically (i.e., with increasing
flexibility) from the same sample partition. The authors show that,
if series estimators with an increasing number of basis functions
$k_{n}$ are used to estimate $\hat{\pi}$ and $\hat{\mu}$, then
the bias can be on the order of $k_{n}/n$. This is true not only
for estimates of the ATE, but also for estimates of any linear functional
(see Section \ref{sec:Estimands}). The authors suggest using two
separate subsamples to estimate $\pi_{0}$ and $\mu_{0}$, and a third
subsample to produce a final estimate. This eliminates the dependence
between $\hat{\pi}$ and $\hat{\mu}$ and greatly reduces the bias
term in Eq (\ref{eq:cauchy-schwartz}). We refer to this approach
as ``three-way'' cross-fitting, and attempt to develop intuition
for why it can be helpful in Section \ref{subsec:Intuition-for-3-way}.

Where \citeauthor{Newey2018-da} focus on estimands in the form of
expectations (e.g., the ATE), this paper studies linear functionals
in the form of \emph{conditional} expectations (e.g., the CATE). We
propose a combination 3-way cross-fitting and pseudo-outcome regression
to estimate conditional effects and other linear parameters. Here
too, 3-way CF can improve convergence rates relative to 2-way CF,
although the differences are less stark than in the unconditional
case. Our proposed approach and results are related to those of
the lp-R-learner \citep{Kennedy2022-da,Kennedy2022-cn}, although
there are several notable differences. Most importantly, our bound
is higher than the minimax rate achieved by \citet{Kennedy2022-cn}
in the $C=X$ setting, although it may provide a tighter bound in
the $C\neq X$ setting (see Section \ref{sec:Results}).

\subsection{Intuition for 3-way cross-fitting\label{subsec:Intuition-for-3-way}}

To illustrate the motivation for 3-way CF, we appeal to the intuition
in a (non-causal) prediction task where sample-splitting is well established:
estimating the conditional variance of an outcome $Y$ given covariates
$X$, i.e., $\mathbb{E}\left[\left\{ Y-\eta_{0}(X)\right\} ^{2}\right]$,
where $\eta_{0}(X)=\mathbb{E}\left(Y|X\right)$. A standard estimation
strategy is to fit a regression model $\hat{\eta}$ as an approximation
for $\eta_{0}$, and to estimate $\mathbb{E}\left[\left(Y-\eta_{0}(X)\right)^{2}\right]$
via 
\begin{equation}
\frac{1}{n}\sum_{i=1}^{n}\left[\left(Y_{i}-\hat{\eta}(X_{i})\right)^{2}\right].\label{eq:sum-var-est}
\end{equation}
Here, it is well known that bias will be incurred if the data for
the summation in Eq (\ref{eq:sum-var-est}) is the same as the data
used for training $\hat{\eta}$. This is because the true residuals
$Y-\eta_{0}(X)$ become \emph{positively correlated} with the model
errors $\hat{\eta}(X)-\eta_{0}(X)$, causing the observed residuals
$Y-\hat{\eta}(X)$ have an artificially low variance. More formally,
the bias of Eq (\ref{eq:sum-var-est}) becomes
\begin{align*}
\mathbb{E}\left[\left(Y-\hat{\eta}\right)^{2}\right]-\mathbb{E}\left[\left(Y-\eta_{0}\right)^{2}\right] & =\mathbb{E}\left[\left(Y-\eta_{0}+\eta_{0}-\hat{\eta}\right)^{2}\right]-\mathbb{E}\left[\left(Y-\eta_{0}\right)^{2}\right]\\
 & =-2\mathbb{E}\left[\left(Y-\eta_{0}\right)\left(\hat{\eta}-\eta_{0}\right)\right]+\mathbb{E}\left[\left(\eta_{0}-\hat{\eta}\right)^{2}\right],
\end{align*}
where the first term represents bias due to fitting $\hat{\eta}$
with the same data used to in Eq (\ref{eq:sum-var-est}). This term
becomes zero if $\hat{\eta}$ is estimated from a separate sample,
as in the 2-way CF workflow.

The above example shows how 2-way cross-fitting can reduce bias when
our workflow requires estimating a single nuisance function ($\eta_{0}$).
To extend this intuition to 3-way CF, consider what happens when we
must estimate \emph{two }nuisance functions. Specifically, consider
the task of estimating the expected conditional covariance $\mathbb{E}\left[\left(A-\pi_{0}(X)\right)\left(Y-\eta_{0}(X)\right)\right]$,
which is also used as a didactic example by \citet{Newey2018-da}.
Again, a standard approach is to fit models $\hat{\eta}$ and $\hat{\pi}$
for $\eta_{0}$ and $\pi_{0}$, and return the estimate 
\begin{equation}
\frac{1}{n}\sum_{i=1}^{n}\left(A-\hat{\pi}(X)\right)\left(Y-\hat{\eta}(X)\right).\label{eq:sum-cov-est}
\end{equation}
Here, even if $\hat{\pi}$ and $\hat{\eta}$ are learned from a dataset
separate from the one used in Eq (\ref{eq:sum-cov-est}), bias can
still occur as a result of correlation between $\hat{\pi}$ and $\hat{\eta}$.
For example, positive correlations between $\hat{\pi}$ and $\hat{\eta}$
can cause the observed residuals $A-\hat{\pi}$ and $Y-\hat{\eta}$
to have an artificially high correlation, creating a positive bias
for the estimate in Eq (\ref{eq:sum-cov-est}). 

Suppose we have an iid sample of size $n$, let $\mathbf{X}$ be a
matrix with $X_{i}$ in the $i^{th}$row, and $\hat{\bm{\pi}}$, $\hat{\bm{\eta}}$,
$\bm{\pi}_{0},$ $\bm{\eta}_{0}$, $\mathbf{a}$, $\mathbf{y}$ be
$n$-length vectors with $i^{th}$ elements equal to $\hat{\pi}(X_{i})$,
$\hat{\eta}(X_{i})$, $\pi_{0}(X_{i})$, $\eta_{0}(X_{i})$, $A_{i}$,
and $Y_{i}$ respectively. The bias of Eq (\ref{eq:sum-cov-est})
under 2-way CF is
\begin{align}
 & \frac{1}{n}\mathbb{E}\mathbb{E}\left[\left(\mathbf{a}-\hat{\bm{\pi}}\right)^{\top}\left(\mathbf{y}-\hat{\bm{\eta}}\right)|\mathbf{X}\right]-\mathbb{E}\left[Cov\left(A,Y|X\right)\right]\nonumber \\
 & \hspace{1em}\hspace{1em}\hspace{1em}\hspace{1em}\hspace{1em}=\frac{1}{n}\mathbb{E}\left[\left(\bm{\pi}_{0}-\mathbb{E}\left(\hat{\bm{\pi}}|\mathbf{X}\right)\right)^{\top}\left(\bm{\eta}_{0}-\mathbb{E}\left(\hat{\bm{\eta}}|\mathbf{X}\right)\right)\right]\label{eq:misspec-part}\\
 & \hspace{1em}\hspace{1em}\hspace{1em}\hspace{1em}\hspace{1em}\hspace{1em}+\frac{1}{n}\text{tr}\mathbb{E}\left[Cov\left(\mathbf{a}-\hat{\bm{\pi}},\mathbf{y}-\hat{\bm{\eta}}|\mathbf{X}\right)\right]-\mathbb{E}\left[Cov\left(A,Y|X\right)\right],\label{eq:cov-part}
\end{align}
where Line (\ref{eq:misspec-part}) represents bias from model misspecification.
Applying the fact that $\hat{\bm{\eta}},\hat{\bm{\pi}}\perp\mathbf{a},\mathbf{y|}\mathbf{X}$
under 2-way CF, Line (\ref{eq:cov-part}) equals
\begin{equation}
\frac{1}{n}\text{tr}\mathbb{E}\left[Cov\left(\hat{\bm{\pi}},\hat{\bm{\eta}}|\mathbf{X}\right)\right].\label{eq:cov-reduced}
\end{equation}
That is, the bias of Eq (\ref{eq:sum-cov-est}) depends on a model
misspecification term (Line (\ref{eq:misspec-part})) and the dependence
in the regression estimates $\hat{\bm{\pi}}$ and $\hat{\bm{\eta}}$
(Line (\ref{eq:cov-reduced})). The bias produced by Lines (\ref{eq:misspec-part})
and (\ref{eq:cov-reduced}) is \emph{second order} in the same sense
as Eq (\ref{eq:cauchy-schwartz}).

While we can see that Line (\ref{eq:cov-reduced}) becomes zero if
we adopt 3-way CF, it remains to show how severe this added bias can
be under 2-way CF, or even under no cross-fitting at all. Here, we
study the asymptotic properties in the special case of series estimators
using a $k_{n}$-dimensional set of basis functions, where $k_{n}$
grows with the sample size. If there exist constants $l$ and $u$
such that $0<l\leq Cov(A,Y|X)\leq u$, then Line (\ref{eq:cov-part})
has magnitude on the order $k_{n}/n$ when no cross-fitting is done
(Appendix \ref{sec:Conditional-covariance-results}). Under 2-way
CF, Line (\ref{eq:cov-part}) \emph{still }has magnitude on the order
of \emph{at least} $k_{n}/n$ (see \citealt{Newey2018-da}, as well
as Appendix \ref{sec:Conditional-covariance-results}). That is, for
the expected conditional covariance, 2-way CF does not diminish the
order of magnitude of Line (\ref{eq:cov-part}), while 3-way CF does.

The remainder of this paper expands the above argument in several
ways. Section \ref{sec:notation} introduces general notation. As
in \citet{Newey2018-da}, Section \ref{sec:Estimands} introduces
a broader set of estimands known as \emph{linear functionals}, which
includes the expected conditional covariance and ATE examples. We
also consider \emph{conditional} versions of these estimands that
can be used to better inform individual treatment decisions (Section
\ref{subsec:Defining-linear-functionals}). Section \ref{sec:Proposed-estimation-procedure}
proposes a spline-based estimator, and Section \ref{sec:Main-Assumptions}
presents our main assumptions. Section \ref{sec:Results} presents
our results, which pertain to the overall error of our estimator rather
than just its bias.

\section{General notation\label{sec:notation}}

Let $Z=(X,A,Y)$ denote a random vector of covariates $X$, exposures
$A$ and outcomes $Y\in\mathcal{Y}$. Let $C$ be a sub-vector of
$X$, upon which we would like to personalize our estimands. Let $d_{X}$
and $d_{C}$ respectively denote the dimensions of $X$ and $C$.

We consider the setting where an analyst has access to (up to) three
training datasets. We denote these datasets by $\tilde{\mathbf{Z}}=(\tilde{\mathbf{X}},\mathbf{\tilde{a}},\mathbf{\tilde{y}})$,
$\hat{\mathbf{Z}}=(\hat{\mathbf{X}},\hat{\mathbf{a}},\hat{\mathbf{y}})$,
and $\bar{\mathbf{Z}}=(\bar{\mathbf{X}},\bar{\mathbf{a}},\bar{\mathbf{y}})$,
where the first two are used to estimate nuisance functions and the
last is used to estimate a target functional. 

Going forward, when defining estimators, we generally include the
accents (e.g., ``hats'' or ``bars'') of all of the datasets that
contribute to the estimator. For example, the function $\hat{\tilde{f}}$
(defined in Section \ref{subsec:nuisance-estimates}, below) depends
on data from $\hat{\mathbf{Z}}$ and $\tilde{\mathbf{Z}}$. Figure
\ref{fig:Flowchart-showing-how} shows a flowchart of how each dataset
contributes to subsequent parameter estimates.

For simplicity, we assume that all three datasets are of size $n$.
In the case of 2-way CF, $\mathbf{\tilde{Z}}=\hat{\mathbf{Z}}\perp\bar{\mathbf{Z}}$.
In the case of 3-way CF, all three datasets are $iid$. We use non-bold
notation to refer to elements of each sample. For example, $\bar{Z}_{i}=(\bar{X}_{i},\bar{A}_{i},\bar{Y}_{i})$
denotes a row from $\bar{\mathbf{Z}}$, where we often omit $i$ for
brevity. Similarly, we sometimes omit the function arguments, e.g.,
abbreviating $\mathbb{E}\left[f(Z)\right]$ as $\mathbb{E}\left[f\right]$.

Next we introduce notation to describe convergence rates. From random
variables $A_{n},B_{n}$, let $A_{n}\lesssim B_{n}$ denote that there
exists a constant $c$ such that $A_{n}\leq cB_{n}$ for all $n$.
Let $A_{n}\asymp B_{n}$ denote that $A_{n}\lesssim B_{n}$ and $B_{n}\lesssim A_{n}$.\textcolor{black}{{}
Let $A_{n}\lesssim_{\mathbb{P}}c_{n}$ denote that $A_{n}=O_{\mathbb{P}}(c_{n})$
for constants $c_{n}$.}\textcolor{teal}{}

We say that a function $f$ is $s$-smooth if there exists a constant
$k$ such that $|f(x)-f_{s,x'}(x)|\leq k||x-x'||^{s}$ for all $x,x'$,
where $f_{s,x'}$ is the $\lfloor s\rfloor^{th}$ order Taylor approximation
of $f$ at $x'$. This form of smoothness is a key property of functions
in a H\"{o}lder class (see, e.g., \citealp{Tsybakov2009-yb,Kennedy2022-da}).

Finally, we introduce notation for expectations and probabilities.
For a (possibly random) function $f$, let $\mathbb{P}(f(Z))=\int f(z)d\mathbb{P}z$
be shorthand for $\mathbb{E}[f(Z)|f]$, and let $||f||_{2}^{2}=\mathbb{P}(f(Z)^{2})$.
That is, $\mathbb{P}$ marginalizes only over $Z$, while $\mathbb{E}[f(Z)]=\mathbb{E}\left[\mathbb{P}\left(f(Z)\right)\right]$
takes expectations over both $f$ and $Z$. We use $\text{Pr}(A)$
to denote the probability of event $A$, in order to disambiguate
from $\mathbb{P}$.

\section{\label{sec:Estimands}Estimands}

Throughout this paper we consider two estimation targets: the expected
conditional covariance and the conditional average treatment effect.
Section \ref{subsec:Examples} introduces these examples in more detail.
For simplicity of presentation, we begin with the unconditional versions
of these estimands. 

Section \ref{subsec:Defining-linear-functionals} introduces a class
of estimands known as \emph{linear functionals}, which includes our
examples. Section \ref{subsec:Defining-linear-functionals} also
describes several convenient properties of linear functionals that
facilitate their study. This section also introduces \emph{conditional}
versions of each estimand, tailored to a randomly selected covariate
profile $\ddot{C}$.

\subsection{Examples\label{subsec:Examples}}
\begin{example*}
\textbf{(Conditional Covariance)} Here, our estimation target in this
example is
\[
\theta_{0,\text{cov}}:=\mathbb{E}\left[\left\{ A-\mathbb{E}\left(A|X\right)\right\} \left\{ Y-\mathbb{E}\left(Y|X\right)\right\} \right].
\]
This quantity is relevant, for example, when studying the variance-weighted
treatment effect \citep{Newey2018-da,Kennedy2022-jv}\emph{\uline{}}
\end{example*}
\begin{example*}
\textbf{(Treatment Effects)} Let $A\in\{0,1\}$, and let $Y^{(a)}$
for $a\in\{0,1\}$ be the potential outcome under treatment $a$,
so that $Y^{(1)}-Y^{(0)}$ is the latent treatment effect for an individual.
Let $Y=AY^{(1)}+(1-A)Y^{(0)}$ be the observed outcome. Under standard
causal identification assumptions of conditional exchangeability ($Y^{(0)},Y^{(1)}\perp A|X$)
and positivity ($0<c\leq\text{Pr}(A=1|X)\leq1-c$ for some constant
$c$), the average outcome if all subjects received treatment is
\begin{align*}
\theta_{0,\text{trt}}:=\mathbb{E}\left(Y^{(1)}\right) & =\mathbb{E}\left[\mathbb{E}\left(Y^{(1)}|X\right)\right]\\
 & =\mathbb{E}\left[\mathbb{E}\left(Y^{(1)}|A=1,X\right)\right]\\
 & =\mathbb{E}\left[\mathbb{E}\left(Y|A=1,X\right)\right],
\end{align*}
and, similarly, the average outcome if all subjects received control
is
\[
\theta_{0,\text{ctrl}}:=\mathbb{E}\left(Y^{(0)}\right)=\mathbb{E}\left[\mathbb{E}\left(Y|A=0,X\right)\right].
\]
Combining these quantities, the average treatment effect is
\begin{align*}
\theta_{0,\text{TE}}:=\theta_{0,\text{trt}}-\theta_{0,\text{ctrl}}.
\end{align*}
In most of the sections below we estimate $\theta_{0,\text{trt}}$
and $\theta_{0,\text{ctrl}}$ separately. We return to the task of
estimating their difference in Corollary \ref{cor:(Conditional-Average-Treatment}.
\end{example*}

\subsection{Linear functionals\label{subsec:Defining-linear-functionals}}

Much previous work focuses on estimating \emph{linear functionals
}\citep{Newey2018-da,Chernozhukov2022-jx,Chernozhukov2022-iq}\emph{,
}which include all of the examples in Section \ref{subsec:Examples}.
We express a version of this property below, with respect to a generic
estimand of interest $\theta_{0}$.
\begin{assumption}
\emph{\label{assu:lin-func}} (Linear Functionals, adapted from \citealp{Newey2018-da})
There exists a random variable $J\in\{0,1\}$ and function $m$ such
that the estimand $\theta_{0}$ can be expressed in the form
\begin{equation}
\theta_{0}=\mathbb{E}\left[m(Z,\gamma_{0})\right],\label{eq:m-form}
\end{equation}
where $\gamma_{0}(x)=\mathbb{E}\left[Y|X=x,J=1\right]$. Let $\mathcal{G}$
be the set of all mappings from $\mathcal{X}\rightarrow\mathbb{R}$,
so that $\gamma_{0}\in\mathcal{G}$. Additionally, assume that (1)
$m$ is known, and linear in $\gamma_{0}$; (2) $J$ is known, and
is a deterministic function of $A$; and (3) there exists an unknown
function $\alpha_{0}:\mathcal{X}\rightarrow\mathbb{R}$ such that
$\mathbb{E}\left[J\alpha_{0}(X)^{2}\right]<\infty$ and 
\begin{equation}
\mathbb{E}\left[m(Z,\gamma)-m(Z,\gamma^{\text{zero}})\right]=\mathbb{E}\left[\alpha_{0}(X)J\gamma(X)\right]\label{eq:linear-form}
\end{equation}
for any $\gamma\in\mathcal{G}$, where $\gamma^{\text{zero}}$ is
the ``zero function'' satisfying $\gamma^{\text{zero}}(x)=0$ for
all $x\in\mathcal{X}$.
\end{assumption}
Eq (\ref{eq:linear-form}) has two important use cases. First, it
provides a convenient characterization of the bias of ``plug-in''
estimates $\frac{1}{n}\sum_{i=1}^{n}m(Z_{i},\hat{\gamma})$ of $\theta_{0}$,
where $\hat{\gamma}$ is a predetermined estimate of $\gamma_{0}$.
Specifically, for any predetermined function $\gamma\in\mathcal{G}$,
we have
\begin{equation}
\mathbb{E}\left[m(Z,\gamma)-m(Z,\gamma_{0})\right]=\mathbb{E}\left[\alpha_{0}(X)J\left\{ \gamma(X)-Y\right\} \right],\label{eq:debias}
\end{equation}
where the right-hand side of Eq (\ref{eq:debias}) is analogous to
the debiasing term in augmented inverse probability of treatment weighted
estimators. Second, Eq (\ref{eq:linear-form}) provides a means of
estimating $\alpha_{0}$ itself. By predefining a variety of functions
$\gamma\in\mathcal{G}$, analysts can use Eq (\ref{eq:linear-form})
to create moment conditions in which the only unknown quantity is
$\alpha_{0}(X)$ (\citealp{Newey2018-da,Chernozhukov2022-jx,Chernozhukov2022-iq};
see also \citealp{Robins2007-gi}). We review estimation of $\alpha_{0}$
in more detail in Section \ref{sec:Proposed-estimation-procedure}.

Table \ref{tab:Estimands-as-linear} includes the definitions of $J$
and $\alpha_{0}$ that satisfy Assumption \ref{assu:lin-func} for
$\theta_{0,\text{cov}}$, $\theta_{0,\text{trt}}$ and $\theta_{0,\text{ctrl}}$.
We illustrate how these definitions relate to Assumption \ref{assu:lin-func}
below.
\begin{example*}
\textbf{(Conditional Covariance)} For $\theta_{0,\text{cov}}$, Eq
(\ref{eq:linear-form}) becomes
\begin{align*}
 & \mathbb{E}\left[m(Z,\gamma)-m(Z,\gamma^{\text{zero}})\right]\\
 & \hspace{1em}=\mathbb{E}\left[A(Y-\gamma)-A(Y-0)\right]=\mathbb{E}\left[-A\gamma\right]=\mathbb{E}\left[-\mathbb{E}\left[A|X\right]\gamma\right].
\end{align*}
\[
\mathbb{E}\left[m(Z,\gamma)-m(Z,\gamma^{\text{zero}})|X\right]=\mathbb{E}\left[-A\gamma(X)|X\right]=\mathbb{E}\left[-A|X\right]\gamma(X)
\]
\end{example*}
\begin{example*}
\textbf{(Treatment Effects)} For $\theta_{0,\text{trt}}$, Eq (\ref{eq:linear-form})
becomes 
\begin{align}
 & \mathbb{E}\left[m(Z,\gamma)-m(Z,\gamma^{\text{zero}})\right]\nonumber \\
 & \hspace{1em}=\mathbb{E}\left[\gamma(X)\right]=\mathbb{E}\left[\gamma(X)\frac{\mathbb{E}\left[A|X\right]}{\mathbb{E}\left[A|X\right]}\right]=\mathbb{E}\left[\gamma(X)A\alpha_{0}\left(X\right)\right],\label{eq:moment-miss}
\end{align}
where $\alpha_{0}(X)=1/\mathbb{E}\left[A|X\right]$ equals the well-known
inverse probability of treatment weight. A similar derivation holds
for $\theta_{0,\text{ctrl}}$.
\end{example*}
\begin{table}
\begin{centering}
\begin{tabular}{|c|c|c|c|c|c|}
\hline 
Estimand  & $J$ & $\gamma_{0}(X)$ & $m(Z,\gamma)$ & $\alpha_{0}(X)$ & $v_{q}(Z)$\tabularnewline
label &  & (i.e., $\mathbb{E}\left[Y|X,J=1\right]$) &  &  & \tabularnewline
\hline 
\hline 
$\theta_{0,\text{cov}}$ & $1$ & $\mathbb{E}\left[Y|X\right]$ & $A(Y-\gamma(X))$ & $-\mathbb{E}\left[A|X\right]$ & $-Aq(X)$\tabularnewline
\hline 
$\theta_{0,\text{trt}}$ & $A$ & $\mathbb{E}\left[Y|X,A=1\right]$ & $\gamma(X)$ & $1/\mathbb{E}\left[A|X\right]$ & $q(X)$\tabularnewline
\hline 
$\theta_{0,\text{ctrl}}$ & $1-A$ & $\mathbb{E}\left[Y|X,A=0\right]$ & $\gamma(X)$ & $1/\mathbb{E}\left[1-A|X\right]$ & $q(X)$\tabularnewline
\hline 
\end{tabular}
\par\end{centering}
\caption{Estimands as linear functionals\label{tab:Estimands-as-linear}}
\end{table}

\textcolor{black}{As noted above, we are interested not only in estimands
that can be expressed as expectations, as in Eq (\ref{eq:m-form}),
but also in }\textcolor{black}{\emph{conditional}}\textcolor{black}{{}
estimands in the form 
\begin{equation}
\ddot{\theta}_{0}:=\mathbb{E}\left[m(Z,\gamma_{0})|C=\ddot{C}\right].\label{eq:conditional-m-form}
\end{equation}
}Let $\ddot{\theta}_{0,\text{cov}}$, $\ddot{\theta}_{0,\text{trt}}$
and $\ddot{\theta}_{0,\text{ctrl}}$ represent the conditional versions
of the estimands on Table \ref{tab:Estimands-as-linear}.

\textcolor{black}{The next section considers estimation strategies
for conditional estimands in the form of Eq (\ref{eq:conditional-m-form})}

\section{Proposed estimation procedure\label{sec:Proposed-estimation-procedure}}

\textcolor{black}{Under Assumption \ref{assu:lin-func}, the conditional
estimand in Eq (\ref{eq:conditional-m-form}) can equivalently be
expressed as
\begin{equation}
\ddot{\theta}_{0}=\mathbb{E}\left[f_{0}(Z)|C=\ddot{C}\right],\hspace{1em}\text{where}\hspace{1em}f_{0}(z):=m(z,\gamma_{0})+\alpha_{0}(X)J\left\{ \gamma_{0}(X)-Y\right\} .\label{eq:f0def}
\end{equation}
The equivalence comes from iterating expectations over $X$ to see
that the added product term in Eq (\ref{eq:f0def}) has mean zero.}\textcolor{teal}{}\textcolor{black}{\emph{
}}\textcolor{black}{We choose $f_{0}$ as our }\textcolor{black}{\emph{pseudo-outcome
}}\textcolor{black}{function due to its established relationship with
doubly robust estimation (see, e.g., \citealp{Newey2018-da,Chernozhukov2022-iq,Kennedy2022-da}).
}The remainder of this section outlines methods first for estimating
$f_{0}$ (i.e., estimating the nuisance functions $\gamma_{0}$ and
$\alpha_{0}$), and then for fitting a pseudo-outcome regression model.

We present our approach using a single iteration of cross-fitting,
where we use one partition to learn $\gamma_{0}$, one partition to
learn $\alpha_{0},$ and partition to fit a pseudo-outcome regression.
This choice is made for simplicity of presentation, and because we
focus on convergence \emph{rates} rather than proportionality constants.
In practice, we expect analysts will perform several iterations, changing
the role of each partition in each iteration. Running multiple cross-fitting
iterations is a well-established means of mitigating sample size loss
(see, e.g., \citealt{Chernozhukov2022-jx,Chernozhukov2022-iq}). Doing
so does not change the final convergence rate (for example, $3n\asymp n$),
although it can improve finite sample performance.

\subsection{\label{subsec:nuisance-estimates}Estimating nuisance functions}

We estimate nuisance functions using the same approach as \citeauthor{Newey2018-da}
(\citeyear{Newey2018-da}; see also \citealp{Chernozhukov2022-jx,Chernozhukov2022-iq}),
which we describe here for completeness. Since $\gamma_{0}$ takes
the form of a conditional expectation (Assumption \ref{assu:lin-func}),
it is relatively straightforward to estimate with standard regression
techniques. Estimation of $\alpha_{0}$ appears to be more complex,
as it does not always take the form of a conditional expectation (e.g.,
if $\alpha_{0}(X)=1/\mathbb{E}(A|X)$, as in $\theta_{0,\text{trt}}$
in Table \ref{tab:Estimands-as-linear}). 

The remarkable thing about linear functionals is that $J\alpha_{0}(X)$
can be estimated via a series regression or a method of moments approach
even when $\alpha_{0}$ is not a conditional expectation. More specifically,
we can approximate $J\alpha_{0}(X)$ with a linear combination of
basis functions $p(J,X)=Jq(X)$, where $q(X)=(q_{1}(X),\dots,q_{k_{n}}(X))$.
The optimal approximation, in sense of minimizing squared error in
the population, would be
\begin{equation}
J\alpha^{\star}(X):=Jq(x)^{\top}\mathbb{E}(p(J,X)p(J,X)^{\top})^{-1}\mathbb{E}(p(J,X)\alpha_{0}(X)),\label{eq:px-est}
\end{equation}
The first expectation can be estimated from the sample variance of
$p(J,X)$. Assumption \ref{assu:lin-func} implies that the second
expectation in Eq (\ref{eq:px-est}) is a $k_{n}$-length vector with
$j^{th}$ element equal to

\begin{equation}
\mathbb{E}(q_{j}(X)J\alpha_{0}(X))=\mathbb{E}\left[m(Z,q_{j})-m(Z,\gamma^{\text{zero}})\right].\label{eq:iqa-est}
\end{equation}
Thus, both expectations in Eq (\ref{eq:px-est}) can be esti\textcolor{black}{mated
via sample moments of observed quantities.}

Based on Eqs (\ref{eq:px-est}) \& (\ref{eq:iqa-est}), we use the
$\tilde{\mathbf{Z}}$ dataset to estimate $\alpha_{0}(x)$ as
\begin{equation}
\tilde{\alpha}(x):=q(x)^{\top}\left(\sum_{i=1}^{n}p(\tilde{J}_{i},\tilde{X}_{i})p(\tilde{J}_{i},\tilde{X}_{i})^{\top}\right)^{-1}\left(\sum_{i=1}^{n}v_{q}(\tilde{Z}_{i})\right),\label{eq:zeta-form}
\end{equation}
where $v_{q}(z)$ is the $k_{n}$-length vector with $j^{th}$ element
equal to $m(z,q_{j})-m(z,\gamma^{\text{zero}})$. In the special case
where the estimand is the conditional covariance ($m(Z,\gamma)=A(Y-\gamma(X))$),
we have $J=1$, $p(J,X)=q(X)$, and $v_{q}(Z)=-q(X)A$, and so $\tilde{\alpha}$
reduces to a standard series estimator.

Conveniently, the same basis function $p$ can also be used to estimate
$\gamma_{0}$. To see why, note that one natural method of approximating
$\gamma_{0}(x)=\mathbb{E}\left[Y|X=x,J=1\right]$ is to use the basis
$q$. As in Eq (\ref{eq:px-est}) the approximation of $\gamma_{0}$
that minimizes the squared error in the subpopulation for which $J=1$
is
\begin{align}
\gamma^{\star}(x) & :=q(x)^{\top}\left[\mathbb{E}\left\{ qq^{\top}|J=1\right\} \right]^{-1}\left\{ \mathbb{E}\left\{ qY|J=1\right\} \right\} \nonumber \\
 & =q(x)^{\top}\left[\frac{1}{\text{Pr}(J=1)}\mathbb{E}\left\{ qJq^{\top}\right\} \right]^{-1}\left[\frac{1}{\text{Pr}(J=1)}\mathbb{E}\left\{ qJY\right\} \right]\nonumber \\
 & =q(x)^{\top}\mathbb{E}\left\{ pp^{\top}\right\} ^{-1}\mathbb{E}\left\{ pY\right\} .\label{eq:gamma-star}
\end{align}
Thus, the least-squares projection of $\gamma_{0}$ onto $q$, in
the subpopulation for which $J=1$, is equivalent to the least-squares
projection of $\gamma_{0}$ onto $p$ in the overall population.

Based on Eq (\ref{eq:gamma-star}), we use the $\hat{\mathbf{Z}}$
dataset to estimate $\gamma_{0}$ with a least squares regression
using the vector $p(J,X)$ as features. We denote the result by
\begin{align}
\hat{\gamma}(x) & :=q(x)^{\top}\left(\sum_{i=1}^{n}p(\hat{J}_{i},\hat{X}_{i})p(\hat{J}_{i},\hat{X}_{i})^{\top}\right)\left(\sum_{i=1}^{n}p(\hat{J}_{i},\hat{X}_{i})\hat{Y}_{i})\right).\label{eq:gamma-hat}
\end{align}

Above, $\tilde{\alpha}\perp\hat{\gamma}$ for 3-way CF, but not for
2-way CF. This is somewhat overloaded notation in the sense that the
interpretation of $\tilde{\alpha}$ and $\hat{\gamma}$ depends on
context. However, this notation will be useful later on, as many results
will be common to both 2-way and 3-way estimators.

\subsection{\label{subsec:Individualized-estimation-with}Conditional effect
estimation via pseudo-outcome regression with splines}

\textcolor{black}{Given the estimates $\hat{\gamma}$ and $\tilde{\alpha}$
from the previous section, we define the estimated pseudo-outcome
$\hat{\tilde{f}}(z):=m(z,\hat{\gamma})+\tilde{\alpha}(x)J(a)(y-\hat{\gamma}(x))$,
and estimate $\ddot{\theta}_{0}=\mathbb{E}\left[f_{0}(Z)|C=\ddot{C}\right]$
by fitting a series regression of $\hat{\tilde{f}}(Z)$ on $C$ in
the data subset $\bar{\mathbf{Z}}$. Let $b(c)$ be a $r_{n}$-dimensional
basis used in this regression, producing the estimate
\begin{equation}
\hat{\tilde{\bar{\ddot{\theta}}}}:=\frac{1}{n}\sum_{i=1}^{n}\bar{\ddot{w}}(\bar{C}_{i})\hat{\tilde{f}}(\bar{Z}_{i}),\text{\ensuremath{\hspace{1em}\text{where\ensuremath{\hspace{1em}}}}}\bar{\ddot{w}}(c)=b(\ddot{C})^{\top}\left(\frac{1}{n}\sum_{i'=1}^{n}b(\bar{C}_{i'})b(\bar{C}_{i'})^{\top}\right)^{-1}b(c).\label{eq:my-w}
\end{equation}
Similar results could be obtained using local polynomial (LP) regression
to estimate $\mathbb{E}\left[f_{0}(Z)|C=\ddot{C}\right]$, but we
omit these results for simplicity of presentation.}

Again, Figure \ref{fig:Flowchart-showing-how} gives a summary of
our proposed workflow and associated notation. In this figure, and
below, we define $\bar{\ddot{\mathbb{P}}}_{n}(f(Z)):=\frac{1}{n}\sum_{i=1}^{n}\bar{\ddot{w}}(\bar{X}_{i})f(\hat{Z}_{i})$
to be the weighted average of a given function $f$ over the dataset
$\bar{\mathbf{Z}}$.\textcolor{black}{{} }

\begin{figure}
\begin{centering}
\includegraphics[width=0.86\columnwidth]{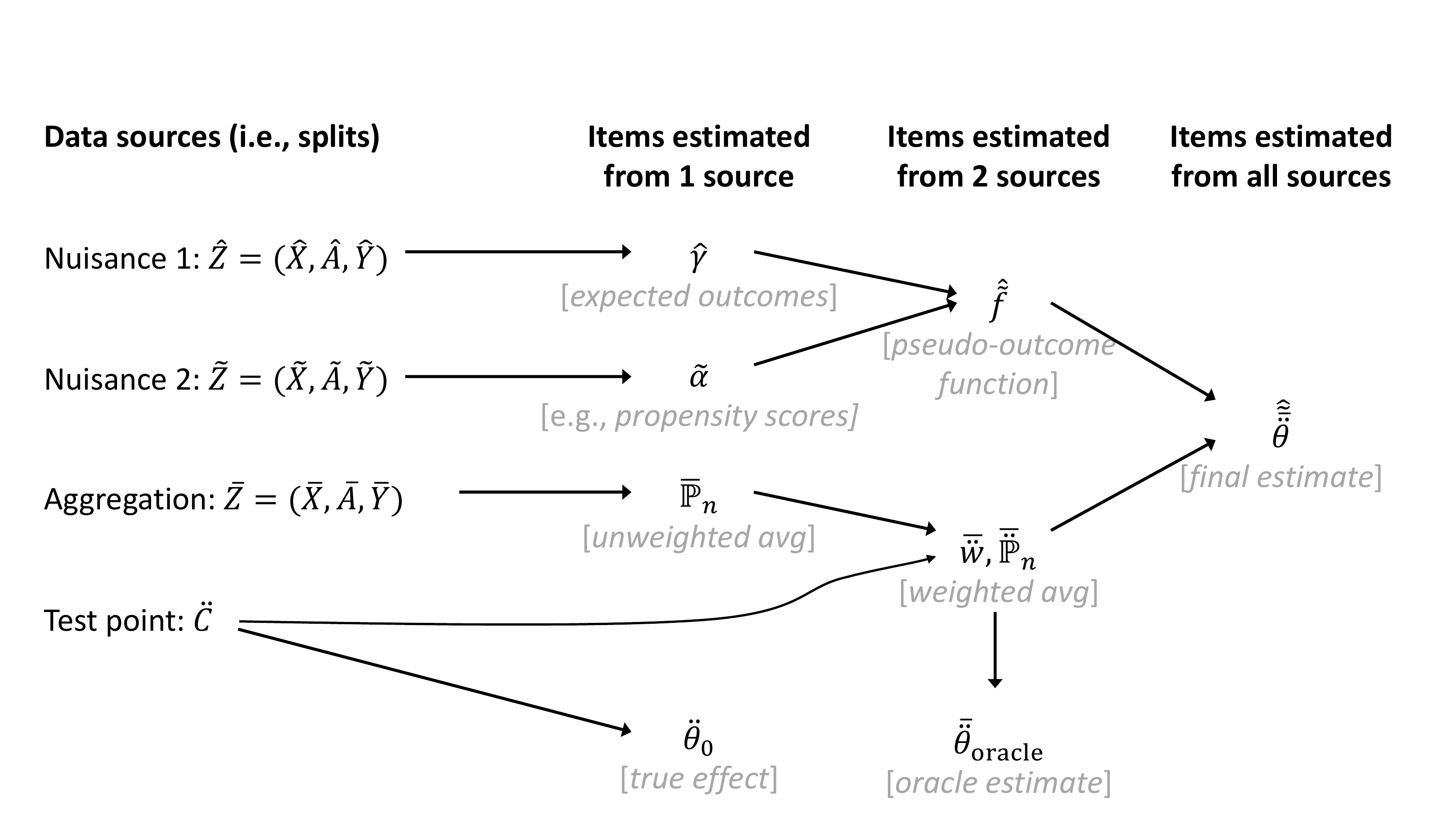}
\par\end{centering}
\caption{\label{fig:Flowchart-showing-how}Flowchart showing how each data
split is used to estimate parameters. An informal description of each
estimator is written in gray. The markings above each estimate (e.g.,
``hats'' or ``bars'') indicate the various data splits that contribute
to that estimate. }

\end{figure}

\section{Main Assumptions\label{sec:Main-Assumptions}}

In order to bound the error of 2-way and 3-way estimators we make
several assumptions. 

The next four assumptions will together imply a bound on the misspecification
error for nuisance estimation. We first assume that $\alpha$ and
$\gamma$ are smooth.
\begin{assumption}
\label{assu:(--Smoothness)--and}($\alpha$-$\gamma$-Smoothness)
$\alpha$ and $\gamma$ are $s_{\alpha}$-smooth and $s_{\gamma}$-smooth
respectively. 
\end{assumption}
Next, we assume that spline estimators are used to estimate $\tilde{\alpha}$
and $\hat{\gamma}$ (as in Section \ref{subsec:nuisance-estimates}).
\begin{assumption}
\label{assu:series-estimates}\mred{(Nuisance estimators) The function}
$\tilde{\alpha}$ is estimated using Eq (\ref{eq:zeta-form}), and
$\hat{\gamma}$ is estimated using a linear regression with the same,
$k_{n}$-dimensional \mred{basis} $p(a,x)=q(x)J(a)$. We also assume
that the basis $q$ is a $\max(\lfloor s_{\alpha}\rfloor,\lfloor s_{\gamma}\rfloor)$
order polynomial spline basis with neighborhoods that are approximately
evenly sized, so that no two points within a neighborhood are $\gtrsim k_{n}^{-1/d_{X}}$
apart.

Similarly, we assume that the basis $b$ used in Section \ref{subsec:Individualized-estimation-with}
is a $r_{n}$-dimensional polynomial spline basis with neighborhoods
sized so that no two points within a neighborhood are $\gtrsim r_{n}^{-1/d_{C}}$
apart.
\end{assumption}
The spacing requirement for $p$ in Assumption \ref{assu:series-estimates}
is satisfied, for example, if $\mathcal{X}$ is a unit hypercube and
we divide each dimension of $\mathcal{X}$ into $j$ evenly sized
segments. We obtain a total of $k_{n}=j^{d_{X}}$ neighborhoods, and
the maximum distance between any two points in a neighborhood is $\sqrt{d_{X}}/j\asymp1/j=k_{n}^{-1/d_{X}}$.
Here, we allow and expect $k_{n}$ to grow with the sample size, although
we limit the rate of growth later on, in Assumption \ref{assu:standardization}. 

Next, we assume that the densities of $X$ and $C$ are not too concentrated
in any one area. Let $K_{b}(c,c')$ be a binary indicator that $c$
and $c'$ are in the same neighborhood, as defined by the spline basis
$b$. Similarly, let $K_{q}(x,x')$ indicate that $x$ and $x'$ are
in the same neighborhood as defined by the basis $q$. 
\begin{assumption}
\label{assu:(Approximately-uniform-),}(Approximately uniform covariates)
There exists a constant $\kappa$ so that\linebreak$\text{Pr}\left(K_{b}(c,C)=1\right)\leq\kappa r_{n}^{-1}$
for any $c$ and $\text{Pr}\left(K_{q}(x,X)=1\right)\leq\kappa k_{n}^{-1}$
for any $x$.
\end{assumption}
\begin{assumption}
\label{assu:standardization}(Standardization) $||q(x)||_{2}^{2}\lesssim k_{n}$
for all $x$; $\mathbb{E}\left[||v_{q}^{(t)}(Z)||_{2}^{2}\right]\lesssim k_{n}$;
and $\mathbb{E}\left(pp^{\top}\right)=I$. Similarly, the basis function
$b$ is standardized so that $||b(c)||_{2}^{2}\lesssim r_{n}$ and
$\mathbb{E}\left(bb^{\top}\right)=I$.
\end{assumption}
For the estimands in Table \ref{tab:Estimands-as-linear}, the first
condition of Assumption \ref{assu:standardization} implies the second.

Together, Assumptions \ref{assu:(--Smoothness)--and}, \ref{assu:series-estimates},
\ref{assu:(Approximately-uniform-),} \& \ref{assu:standardization}
imply a commonly used bound on the nuisance error due to misspecification. 
\begin{lem}
\label{lem:spline-error}(Misspecification error) Let $\gamma^{\star}(x):=q(x)^{\top}\mathbb{E}(pp^{\top})^{-1}\mathbb{E}\left(p\gamma_{0}\right)$
and $\alpha^{\star}(x):=q(x)^{\top}\mathbb{E}(pp^{\top})^{-1}\mathbb{E}\left(p\alpha_{0}\right)$
respectively represent the projections of $\gamma_{0}$ and $\alpha_{0}$
onto $p$. If Assumptions \ref{assu:(--Smoothness)--and}, \ref{assu:series-estimates},
\ref{assu:(Approximately-uniform-),} \& \ref{assu:standardization}
hold, we have 

\begin{equation}
\sup_{x}|\alpha^{\star}(x)-\alpha_{0}(x)|\lesssim k_{n}^{-s_{\alpha}/d_{X}}\hspace{1em}\text{and}\hspace{1em}\sup_{x}|\gamma^{\star}(x)-\gamma_{0}(x)|\lesssim k_{n}^{-s_{\gamma}/d_{X}}.\label{eq:nuisance-error-sup}
\end{equation}
\end{lem}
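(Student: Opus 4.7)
The plan is to follow the classical two-step argument for spline approximation of smooth functions: first construct an explicit element $\tilde{\gamma}$ of the spline span whose uniform error against $\gamma_{0}$ achieves the target rate $k_{n}^{-s_{\gamma}/d_{X}}$, and then transfer this bound from $\tilde{\gamma}$ to the $L^{2}$-projection $\gamma^{\star}$ using the local support of the basis $q$ together with the normalizations in Assumptions \ref{assu:(Approximately-uniform-),} and \ref{assu:standardization}. The $\alpha$-bound is obtained by exactly the same argument applied to $\alpha_{0}$, since both projections use the same basis $p$ and both target functions are smooth of the required order.

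For the first step, I would pick a reference point $x_{\ell}$ in each spline neighborhood $N_{\ell}$ and let $\tilde{\gamma}$ equal, on $N_{\ell}$, the $\lfloor s_{\gamma}\rfloor$-order Taylor polynomial of $\gamma_{0}$ at $x_{\ell}$. Assumption \ref{assu:series-estimates} forces the basis to have order at least $\lfloor s_{\gamma}\rfloor$, so (up to the standard quasi-interpolation correction for enforced continuity) this piecewise polynomial lies in the span of $q$, i.e.\ $\tilde{\gamma}=q^{\top}c^{\star}$ for some $c^{\star}\in\mathbb{R}^{k_{n}}$. Because $\gamma_{0}$ is $s_{\gamma}$-smooth and every neighborhood has diameter $\lesssim k_{n}^{-1/d_{X}}$, the Hölder/Taylor remainder gives $|\gamma_{0}(x)-\tilde{\gamma}(x)|\lesssim k_{n}^{-s_{\gamma}/d_{X}}$ uniformly in $x$.

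The main obstacle is the second step: translating the $L^{\infty}$-bound on $\tilde{\gamma}-\gamma_{0}$ to one on $\gamma^{\star}-\gamma_{0}$, given that $\gamma^{\star}$ is an $L^{2}$-projection rather than an $L^{\infty}$ near-minimizer. Using $\mathbb{E}(pp^{\top})=I$ and the identity $\mathbb{E}(pq^{\top})=\mathbb{E}(Jqq^{\top})=\mathbb{E}(pp^{\top})$, one checks that $q^{\top}c^{\star}=\tilde{\gamma}$ implies $c^{\star}=\mathbb{E}(pp^{\top})^{-1}\mathbb{E}(p\tilde{\gamma})$, so with $\delta:=\gamma_{0}-\tilde{\gamma}$,
\[
\gamma^{\star}(x)-\tilde{\gamma}(x)\;=\;q(x)^{\top}\,\mathbb{E}(p\,\delta).
\]
A naïve Cauchy--Schwarz loses a factor of $k_{n}^{1/2}$, which is why locality is crucial. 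Since each coordinate $q_{j}$ is supported on a single neighborhood, at any fixed $x$ only $O(1)$ entries of $q(x)$ are nonzero, each of magnitude $\lesssim k_{n}^{1/2}$ by $\|q(x)\|_{2}^{2}\lesssim k_{n}$. For each relevant $j$, $|\mathbb{E}(p_{j}\delta)|\leq\|\delta\|_{\infty}\,\mathbb{E}|p_{j}|$, and Cauchy--Schwarz together with $\mathbb{E}(p_{j}^{2})=1$ and Assumption \ref{assu:(Approximately-uniform-),} yields $\mathbb{E}|p_{j}|\leq\sqrt{\Pr(X\in N_{j})}\lesssim k_{n}^{-1/2}$. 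Multiplying the $O(1)$ surviving terms gives $|\gamma^{\star}(x)-\tilde{\gamma}(x)|\lesssim k_{n}^{1/2}\cdot k_{n}^{-1/2}\|\delta\|_{\infty}=\|\delta\|_{\infty}$, and the triangle inequality with the first step delivers the claimed $k_{n}^{-s_{\gamma}/d_{X}}$ bound.

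The $\alpha$-projection is handled by repeating the argument verbatim with $\alpha_{0}$, $s_{\alpha}$, and $\alpha^{\star}(x)=q(x)^{\top}\mathbb{E}(pp^{\top})^{-1}\mathbb{E}(p\alpha_{0})$ in place of their $\gamma$-counterparts; the smoothness hypothesis on $\alpha_{0}$ in Assumption \ref{assu:(--Smoothness)--and} and the matched spline order in Assumption \ref{assu:series-estimates} provide the analogous piecewise-Taylor approximator, and the locality/normalization stability argument carries over unchanged. The heart of the proof is therefore the $L^{\infty}$-stability of the local spline $L^{2}$-projection, which is where Assumptions \ref{assu:(Approximately-uniform-),} and \ref{assu:standardization} pay their way.
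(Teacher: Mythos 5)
Your proof is correct and rests on exactly the same ingredients the paper uses (local support of the spline basis, the standardization $\mathbb{E}(pp^{\top})=I$, the approximately uniform covariate assumption to get $\Pr(X\in N_{j})\lesssim k_n^{-1}$, and the H\"older/Taylor remainder), but the packaging differs. The paper works with the $x$-\emph{dependent} Taylor expansion $f_{s,x}$ centered at the evaluation point itself, writes $f^{\star}(x)-f(x)=q(x)^{\top}\mathbb{E}\left[qJ\{f-f_{s,x}\}\right]$ in one shot, and then invokes locality via the indicator $K_q(X,x)$; since representability is only needed inside the neighborhood of $x$, no issue about piecing polynomials together across neighborhoods arises. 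You instead fix a single piecewise-Taylor interpolant $\tilde\gamma$, prove the sup-norm bound for $\tilde\gamma-\gamma_{0}$, and then establish sup-norm stability of the $L^{2}$-projection onto the spline span (the $q(x)^{\top}\mathbb{E}(p\delta)$ step, with the $O(1)\cdot k_n^{1/2}\cdot k_n^{-1/2}$ accounting). This two-step version is more modular and makes the ``projection is $L^{\infty}$-stable'' content explicit, but it creates the representability wrinkle you flag: a piecewise-Taylor function generally fails the continuity constraints of the spline space, so $\tilde\gamma=q^{\top}c^{\star}$ holds only after replacing $\tilde\gamma$ by a quasi-interpolant (or by the sup-norm-best spline approximant, which also achieves the $k_n^{-s/d_X}$ rate). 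You acknowledge this in passing but do not carry it through; it should be addressed for a fully rigorous argument. The paper's $x$-local formulation is exactly the device that sidesteps this, which is what it buys over your route.
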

The intuition for Lemma \ref{lem:spline-error} is that, within any
neighborhood, we can find a polynomial that approximates $\gamma$
and $\alpha$ with a maximum error that is exponentially decreasing
in the neighborhood's size. For example, for $\gamma^{\star}$, this
error is $\lesssim h^{s_{\gamma}}$, where $h$ is the neighborhood
size. Spline estimators essentially partition the covariate space
into neighborhoods that have size proportional to $k_{n}^{-1/d_{X}}$,
and create a local polynomial approximation in each neighborhood.
Thus, as $n$ increases, the misspecification error is $h^{s_{\gamma}}\asymp\left(k_{n}^{-1/d_{X}}\right)^{s_{\gamma}}$.
\begin{assumption}
\label{assu:(Similar-distributions)-}(Positivity) There exists a
constant $k$ such that $0<k\leq\text{Pr}(J=1|X)$.
\end{assumption}
Assumption (\ref{assu:(Similar-distributions)-}) is highly conventional
in the causal inference literature. It essentially states that all
subgroups in the overall population are well represented in the $J=1$
subpopulation.
\begin{assumption}
\label{assu:(Matrix-Bernstein-Inequality)}(Limited basis growth)
$k_{n},r_{n}<n$, and $\frac{k_{n}\log(k_{n})}{n},\frac{r_{n}\log(r_{n})}{n}\rightarrow0$. 
\end{assumption}
Assumption \ref{assu:(Matrix-Bernstein-Inequality)} is needed to
study the asymptotic behavior of the sample covariance matrices for
$p(A,X)$. It is satisfied, for example, if $k_{n}\asymp n^{r}$ for
any $r\in[0,1)$.
\begin{assumption}
\label{assu:Regularity}(Regularity 1) $\mathbb{E}\left(Y^{2}|A,X\right)$
and $J\alpha_{0}(X)$ are bounded.
\end{assumption}
\begin{assumption}
\label{assu:(Regularity-2)}(Regularity 2) The matrix $\bar{\ddot{\mathbb{P}}}_{n}(p(A,X)p(A,X)^{\top})$
is positive semi-definite with probability approaching 1.
\end{assumption}
\begin{assumption}
\label{assu:vvt}(Regularity 3) $\lambda_{\max}\left\{ \mathbb{E}\left(v_{q}(Z)v_{q}(Z)^{\top}\right)\right\} \lesssim1$.
\end{assumption}
For all of the estimands in Table \ref{tab:Estimands-as-linear},
we show later on that Assumption \ref{assu:vvt} follows from Assumptions
\ref{assu:standardization} \& \ref{assu:(Similar-distributions)-}
(Lemma \ref{lem:similar-dist}).
\begin{assumption}
\label{assu:m-convergence}($m$-convergence) For $\bar{Z}\perp\hat{\gamma}$,
we have $\mathbb{E}\left[\left\{ m(\bar{Z},\hat{\gamma})-m(\bar{Z},\gamma_{0})\right\} ^{2}|\hat{\gamma}\right]\lesssim\mathbb{E}\left[\;||\hat{\gamma}(\bar{X})-\gamma_{0}(\bar{X})||_{2}^{2}\;\;|\;\hat{\gamma}\right]$.
\end{assumption}
Assumption \ref{assu:m-convergence} will generally follow from the
fact that $m(z,\gamma)$ is linear in $\gamma$ (Assumption \ref{assu:lin-func}).

Finally, for some of our results, we require $f_{0}$ to be bounded,
and to have a smooth conditional expectation given $C$. Let $\psi(c):=\mathbb{E}(f_{0}(Z)|C=c)$
denote this conditional expectation, so that $\ddot{\theta}=\psi(\ddot{C})$.
\begin{assumption}
\label{assu:bounded-f0}The true pseudo-outcome function $f_{0}$
is bounded on the support of $Z$.
\end{assumption}
\begin{assumption}
\label{assu:smooth-tau} (Smooth $\psi$) The function $\psi$ is
$s_{\psi}$-smooth and the polynomial spline basis $b$ is of order
$\lfloor s_{\psi}\rfloor$. 
\end{assumption}

\section{Main results\label{sec:Results}}

We'll start by comparing $\hat{\tilde{\bar{\ddot{\theta}}}}$ against
the ``oracle'' estimator $\bar{\ddot{\theta}}_{\text{oracle}}=\frac{1}{n}\sum_{i=1}^{n}\bar{\ddot{w}}(\bar{C}_{i})f_{0}(\bar{Z}_{i})$,
where $\bar{\ddot{w}}$ is defined as in Eq (\ref{eq:my-w}).
\begin{thm}
\label{thm:(Error-relative-to-oracle}(Error relative to oracle) Under
Assumptions \ref{assu:lin-func}-\ref{assu:m-convergence}, for both
2-way and 3-way estimators, we have
\begin{align}
\hat{\tilde{\bar{\ddot{\theta}}}}-\bar{\ddot{\theta}}_{\text{oracle}} & \lesssim_{\mathbb{P}}k_{n}^{-\left(s_{\gamma}+s_{\alpha}\right)/d_{X}}+\frac{k_{n}^{-s_{\gamma}/d_{X}}+k_{n}^{-s_{\alpha}/d_{X}}}{\sqrt{n/r_{n}}}+\frac{\sqrt{k_{n}}}{n/\sqrt{r_{n}}}+\frac{k_{n}}{n}.\hphantom{+\frac{k_{n}^{1/2-s_{\gamma}/d_{X}}}{\sqrt{n}}}\label{eq:2-way-upper-bound}
\end{align}

For 3-way estimators in particular,
\begin{equation}
\hat{\tilde{\bar{\ddot{\theta}}}}-\bar{\ddot{\theta}}_{\text{oracle}}\lesssim_{\mathbb{P}}k_{n}^{-\left(s_{\gamma}+s_{\alpha}\right)/d_{X}}+\frac{k_{n}^{-s_{\gamma}/d_{X}}+k_{n}^{-s_{\alpha}/d_{X}}}{\sqrt{n/r_{n}}}+\frac{\sqrt{k_{n}}}{n/\sqrt{r_{n}}}+\frac{k_{n}^{1/2-s_{\gamma}/d_{X}}}{\sqrt{n}}.\hphantom{\frac{k_{n}}{n}ai}\label{eq:3-way-variance}
\end{equation}
Additionally, for 3-way estimates of the conditional covariance (where
$m(Z,\gamma)=A(Y-\gamma(X))$), we have
\begin{equation}
\hat{\tilde{\bar{\ddot{\theta}}}}-\bar{\ddot{\theta}}_{\text{oracle}}\lesssim_{\mathbb{P}}k_{n}^{-\left(s_{\gamma}+s_{\alpha}\right)/d_{X}}+\frac{k_{n}^{-s_{\gamma}/d_{X}}+k_{n}^{-s_{\alpha}/d_{X}}}{\sqrt{n/r_{n}}}+\frac{\sqrt{k_{n}}}{n/\sqrt{r_{n}}}.\hphantom{\frac{k_{n}^{1/2-s_{\gamma}/d_{X}}}{\sqrt{n}}\frac{k_{n}}{n}aaa}\label{eq:3-way-cov-bound}
\end{equation}
\end{thm}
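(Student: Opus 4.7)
The plan is to start from the identity $\hat{\tilde{\bar{\ddot\theta}}} - \bar{\ddot\theta}_{\text{oracle}} = \bar{\ddot{\mathbb{P}}}_n(\hat{\tilde{f}} - f_0)$ and algebraically split the integrand into three pieces by inserting and subtracting $\alpha_0(X) J (\hat\gamma - \gamma_0)(X)$ and $\alpha_0(X) J (Y - \gamma_0(X))$:
\[
\hat{\tilde{f}} - f_0 = \underbrace{\bigl[m(Z,\hat\gamma) - m(Z,\gamma_0) - \alpha_0(X) J (\hat\gamma - \gamma_0)(X)\bigr]}_{U_1} + \underbrace{(\alpha_0 - \tilde\alpha)(X) J (\hat\gamma - \gamma_0)(X)}_{U_2} + \underbrace{(\tilde\alpha - \alpha_0)(X) J (Y - \gamma_0(X))}_{U_3}.
\]
By the linear-functional identity (Assumption~\ref{assu:lin-func}, Eq~\eqref{eq:linear-form}) applied to $\gamma = \hat\gamma - \gamma_0$, the term $U_1$ has conditional expectation zero given $\hat\gamma$ (using $\bar Z \perp \hat\gamma$); and $U_3$ has conditional expectation zero given $\tilde\alpha$ and $X$, since $\mathbb{E}[J(Y - \gamma_0(X)) \mid X] = 0$ by definition of $\gamma_0$. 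Only $U_2$ carries the systematic doubly-robust bias. Setting $T_j := \bar{\ddot{\mathbb{P}}}_n(U_j)$, the theorem reduces to bounding each $T_j$ in probability.

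For $T_1$ and $T_3$, I would compute their conditional variance given the nuisance training sets. Writing $\bar{\ddot{w}}(c) = b(\ddot C)^\top \hat\Sigma_b^{-1} b(c)$, a matrix-Bernstein argument combined with Assumptions~\ref{assu:standardization} and~\ref{assu:(Matrix-Bernstein-Inequality)} controls $\lambda_{\max}(\hat\Sigma_b^{-1})$, so the conditional variance of $T_j$ is of order $\tfrac{r_n}{n}\mathbb{E}[U_j^2 \mid \text{nuisance}]$ up to lower-order plug-in error. Assumption~\ref{assu:m-convergence}, Lemma~\ref{lem:spline-error}, and standard series-regression $L_2$ bounds give $\|\hat\gamma - \gamma_0\|_2^2 \lesssim_\mathbb{P} k_n^{-2s_\gamma/d_X} + k_n/n$ and analogously for $\|\tilde\alpha - \alpha_0\|_2^2$. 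Chebyshev then yields the $\tfrac{k_n^{-s_\gamma/d_X} + k_n^{-s_\alpha/d_X}}{\sqrt{n/r_n}} + \tfrac{\sqrt{k_n}}{n/\sqrt{r_n}}$ contribution common to Eqs~\eqref{eq:2-way-upper-bound}--\eqref{eq:3-way-cov-bound}.

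The decisive term is $T_2$. Splitting both factors via their $p$-projections,
\[
(\alpha_0 - \tilde\alpha)(\hat\gamma - \gamma_0) = (\alpha_0 - \alpha^\star)(\gamma^\star - \gamma_0) + (\alpha_0 - \alpha^\star)(\hat\gamma - \gamma^\star) + (\alpha^\star - \tilde\alpha)(\gamma^\star - \gamma_0) + (\alpha^\star - \tilde\alpha)(\hat\gamma - \gamma^\star),
\]
the bias-bias piece is $O(k_n^{-(s_\alpha + s_\gamma)/d_X})$ by Lemma~\ref{lem:spline-error}, and the two mixed pieces give $k_n^{-s/d_X}\sqrt{k_n/n}$ each by Cauchy--Schwarz. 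The variance-variance piece $(\alpha^\star - \tilde\alpha)(\hat\gamma - \gamma^\star)$ is the 2-way vs.\ 3-way divergence point. A blunt Cauchy--Schwarz yields $\|\tilde\alpha - \alpha^\star\|_2 \cdot \|\hat\gamma - \gamma^\star\|_2 = O_\mathbb{P}(k_n/n)$, producing the extra term in Eq~\eqref{eq:2-way-upper-bound} under 2-way CF (where $\tilde\alpha$ and $\hat\gamma$ are correlated). Under 3-way CF, $\tilde\alpha \perp \hat\gamma \perp \bar{\mathbf{Z}}$, and the explicit series form of $\tilde\alpha, \hat\gamma$ shows their pointwise expectations equal $\alpha^\star, \gamma^\star$ up to lower-order matrix-inversion error; integrating out one factor and applying Cauchy--Schwarz only against the deterministic $\|\gamma^\star - \gamma_0\|_\infty \lesssim k_n^{-s_\gamma/d_X}$ yields the improved $k_n^{1/2 - s_\gamma/d_X}/\sqrt n$ of Eq~\eqref{eq:3-way-variance}. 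For the conditional covariance, the extra structure $J \equiv 1$ and $p = q$ upgrades the $\{J=1\}$-subpopulation orthogonality built into the projection of $\gamma_0$ to the full-population orthogonality $\mathbb{E}[q_j(X)(\gamma^\star - \gamma_0)(X)] = 0$, which kills even the $k_n^{1/2 - s_\gamma/d_X}/\sqrt n$ contribution and delivers Eq~\eqref{eq:3-way-cov-bound}.

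The main obstacle is the 3-way refinement of the variance-variance piece: sharpening it from $O_\mathbb{P}(k_n/n)$ to $O_\mathbb{P}(k_n^{1/2 - s_\gamma/d_X}/\sqrt n)$ requires careful accounting of bias versus standard-deviation contributions inside the closed-form series estimators, together with handling the plug-in errors in $\hat\Sigma_p^{-1}$ and $\hat\Sigma_b^{-1}$ uniformly in $\ddot C$. The covariance improvement is further delicate because it relies on a specific population-level orthogonality not available in the general linear-functional setting.
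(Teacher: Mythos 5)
Your decomposition is the same as the paper's: $U_1, U_3$ are the two mean-zero pieces handled with a weighted-sum second-moment argument (matching Section~\ref{subsec:Proof-apply-linear-ie-common-terms}), and $U_2$ is split via the projections $\alpha^\star,\gamma^\star$ into bias-bias, two mixed, and variance-variance pieces. However, your treatment of the mixed pieces has a gap that prevents you from reaching the theorem's stated bounds.

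You bound both mixed pieces by Cauchy--Schwarz as $\|\tilde\alpha-\alpha^\star\|_2\cdot\|\gamma^\star-\gamma_0\|_\infty$ and $\|\alpha^\star-\alpha_0\|_\infty\cdot\|\hat\gamma-\gamma^\star\|_2$, each giving order $k_n^{1/2-s/d_X}/\sqrt{n}$. This is a bound symmetric in $s_\alpha$ and $s_\gamma$, but the theorem's statement is \emph{asymmetric}: every bound in Eqs~(\ref{eq:2-way-upper-bound})--(\ref{eq:3-way-cov-bound}) contains (at most) $k_n^{1/2-s_\gamma/d_X}/\sqrt{n}$ but never $k_n^{1/2-s_\alpha/d_X}/\sqrt{n}$, and the latter is not generically dominated by any term present (it is not smaller than $k_n^{-s_\alpha/d_X}/\sqrt{n/r_n}$ unless $k_n\lesssim r_n$, nor than $k_n^{1/2-s_\gamma/d_X}/\sqrt{n}$ unless $s_\alpha\ge s_\gamma$, etc.). The paper eliminates the $s_\alpha$-version of this term for the mixed piece $(\alpha^\star-\alpha_0)J(\hat\gamma-\gamma^\star)$ by exploiting the fact that $\hat\gamma$ is a spline regression with a \emph{known, observed outcome}: conditioning on the training design $(\hat{\mathbf{X}},\hat{\mathbf{J}})$, it splits $\hat\gamma-\gamma^\star$ into the bias $\gamma^\dagger-\gamma^\star$ (pointwise $\lesssim k_n^{-s_\gamma/d_X}$) and the noise $\hat\gamma-\gamma^\dagger$, whose conditional covariance is supported only on pairs within the same spline neighborhood; integrating against the $\bar{\ddot{w}}$ weights then gives $k_n^{-s_\alpha/d_X}/\sqrt{n/r_n}$ with no spurious $\sqrt{k_n}$ factor (Section~\ref{subsec:Reducing-Line_a_star}). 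A blunt Cauchy--Schwarz cannot replicate this.

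You also misattribute the origin of $k_n^{1/2-s_\gamma/d_X}/\sqrt n$ and the covariance improvement. That term actually arises from the \emph{other} mixed piece, $(\tilde\alpha-\alpha^\star)J(\gamma^\star-\gamma_0)$, where $\tilde\alpha$ is a moment estimator and the conditional-variance trick does not apply directly; the paper introduces an ``oracle'' spline regression $\tilde\alpha_{\text{oracle}}$ against the unobserved $\alpha_0$ and a remainder $\tilde\phi:=\tilde\Sigma^{-1}\tilde{\mathbb{P}}_n(v_q-p\alpha_0)$, and $\tilde\phi$ is what contributes $k_n^{1/2-s_\gamma/d_X}/\sqrt n$ (Section~\ref{subsec:Reducing-Line_g_star}). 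This term is present for both 2-way and 3-way estimators, not only in the 3-way variance-variance piece (the 3-way var-var piece requires a separate matrix-Bernstein argument, Lemma~\ref{lem:Under-3-way-splitting}). Relatedly, your explanation of the conditional-covariance improvement via ``full-population orthogonality'' is not the mechanism: the orthogonality $\mathbb{E}[q_j(X)J\{\gamma^\star(X)-\gamma_0(X)\}]=0$ holds by definition of $\gamma^\star$ for \emph{any} linear functional. What actually changes for the conditional covariance is that $J\equiv1$, $p=q$, and $v_q(Z)=-q(X)A$ make $\tilde\alpha$ a known-outcome spline regression of $-A$ on $q$, so the Section~\ref{subsec:Reducing-Line_a_star}-style conditional-covariance argument applies directly and the $\tilde\phi$ remainder never appears.
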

The difference between the bounds for 2-way and 3-way estimators is
that Eq (\ref{eq:3-way-variance}) replaces the $k_{n}/n$ term in
Eq (\ref{eq:2-way-upper-bound}) with $k_{n}^{1/2-s_{\gamma}/d_{X}}/\sqrt{n}$,
which is $\leq1/\sqrt{n}$ whenever $s_{\gamma}>d_{X}/2$. As we will
see in next theorem, this difference is most pronounced when $k_{n}$
grows at a rate faster than $\sqrt{nr_{n}}$, otherwise the $k_{n}/n$
term is $\lesssim\sqrt{r_{n}/n}$. In the special case of the conditional
covariance, the $k_{n}^{1/2-s_{\gamma}/d_{X}}/\sqrt{n}$ term can
be removed completely due to the fact that $\tilde{\alpha}$ reduces
to a series estimator with properties comparable to $\hat{\gamma}$
(see Section \ref{subsec:nuisance-estimates}).

For the simpler scenario of unconditional estimands (where $C$ is
assumed to be constant), \citet{Newey2018-da} are able to replace
the $k_{n}^{1/2-s_{\gamma}/d_{X}}/\sqrt{n}$ in Eq (\ref{eq:3-way-variance})
with a smaller quantity in their Theorem 8. However, not all of the
techniques employed by the authors are easily applicable for conditional
effect estimands. We briefly discuss one key difference in Appendix
\ref{sec:Proofs-for-Section}.

Next, we review a bound for the error of the oracle itself, which
follows from fairly standard spline results (e.g., \citealp{Tsybakov2009-yb};
see also \citealp{Kennedy2022-da}).
\begin{thm}
\label{thm:(Oracle-Error)-Under}(Oracle Error) Under Assumptions
\ref{assu:series-estimates}, \ref{assu:(Approximately-uniform-),},
\ref{assu:standardization}, \ref{assu:bounded-f0} \& \ref{assu:smooth-tau},
we have
\begin{equation}
\bar{\ddot{\theta}}_{\text{oracle}}-\ddot{\theta}_{0}\lesssim_{\mathbb{P}}r_{n}^{-s_{\psi}/d_{C}}+\sqrt{\frac{r_{n}}{n}},\label{eq:oracle-error}
\end{equation}
where the first term captures bias and the second term captures variance. 
\end{thm}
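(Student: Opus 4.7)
The plan is to decompose $\bar{\ddot{\theta}}_{\text{oracle}} - \ddot{\theta}_0$ into an approximation piece and a stochastic piece using the population spline projection of $\psi$. Define $\beta^{\star} := \mathbb{E}[b(C)\psi(C)]$ and $\psi^{\star}(c) := b(c)^{\top}\beta^{\star}$, which by Assumption~\ref{assu:standardization} is the $L^{2}$ projection of $\psi$ onto the span of $b$. Let $\hat{Q} := \tfrac{1}{n}\sum_{i} b(\bar{C}_i)b(\bar{C}_i)^{\top}$ and $\epsilon_i := f_0(\bar{Z}_i) - \psi(\bar{C}_i)$. Writing $f_0(\bar{Z}_i) = \psi^{\star}(\bar{C}_i) + (\psi-\psi^{\star})(\bar{C}_i) + \epsilon_i$ and using the fact that $\hat{Q}^{-1}\tfrac{1}{n}\sum_i b(\bar{C}_i) b(\bar{C}_i)^{\top}\beta^{\star} = \beta^{\star}$ yields
\[
\bar{\ddot{\theta}}_{\text{oracle}} - \ddot{\theta}_0 \;=\; \bigl[\psi^{\star}(\ddot{C}) - \psi(\ddot{C})\bigr] \;+\; b(\ddot{C})^{\top}\hat{Q}^{-1}\tfrac{1}{n}\sum_i b(\bar{C}_i)(\psi - \psi^{\star})(\bar{C}_i) \;+\; b(\ddot{C})^{\top}\hat{Q}^{-1}\tfrac{1}{n}\sum_i b(\bar{C}_i)\,\epsilon_i.
\]

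The first two terms produce the $r_n^{-s_{\psi}/d_C}$ bias. For the leading bracket, Assumption~\ref{assu:smooth-tau} together with the polynomial spline structure of $b$ (Assumption~\ref{assu:series-estimates}) and approximately-uniform covariates (Assumption~\ref{assu:(Approximately-uniform-),}) allow for the same Taylor/Jackson-type argument used to prove Lemma~\ref{lem:spline-error}, but applied to $\psi$ and $b$ on $\mathcal{C}$; this yields $\sup_c |\psi^{\star}(c) - \psi(c)| \lesssim r_n^{-s_{\psi}/d_C}$. For the middle term I would invoke $\|\hat{Q}^{-1}\|_{\mathrm{op}} \lesssim_{\mathbb{P}} 1$ (see below) together with the locality of polynomial splines: $b(\ddot{C})^{\top}b(c')$ is nonzero only when $c'$ shares a neighborhood with $\ddot{C}$, so Assumption~\ref{assu:(Approximately-uniform-),} and $\|b(\cdot)\|_2^2 \lesssim r_n$ (Assumption~\ref{assu:standardization}) give $\tfrac{1}{n}\sum_i |b(\ddot{C})^{\top}b(\bar{C}_i)| \lesssim_{\mathbb{P}} 1$. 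Multiplying by the sup-norm approximation bound keeps this term at order $r_n^{-s_{\psi}/d_C}$.

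For the variance term $T := b(\ddot{C})^{\top}\hat{Q}^{-1}\tfrac{1}{n}\sum_i b(\bar{C}_i)\epsilon_i$, I would condition on $\bar{\mathbf{C}}$ (and on $\ddot{C}$). Since $\mathbb{E}(\epsilon_i \mid \bar{C}_i)=0$ and $f_0$ is bounded by Assumption~\ref{assu:bounded-f0} (so $\mathrm{Var}(\epsilon_i \mid \bar{C}_i) \lesssim 1$), the cross terms vanish and
\[
\mathbb{E}(T^2 \mid \bar{\mathbf{C}}) \;\lesssim\; \tfrac{1}{n^2}\sum_i \bigl(b(\ddot{C})^{\top}\hat{Q}^{-1} b(\bar{C}_i)\bigr)^{2} \;=\; \tfrac{1}{n}\, b(\ddot{C})^{\top}\hat{Q}^{-1} b(\ddot{C}) \;\lesssim_{\mathbb{P}}\; r_n/n,
\]
where the equality collapses the quadratic form using $\hat{Q}=\tfrac{1}{n}\sum b(\bar{C}_i)b(\bar{C}_i)^{\top}$, and the final bound uses $\|b(\ddot{C})\|_2^2 \lesssim r_n$ together with $\|\hat{Q}^{-1}\|_{\mathrm{op}} \lesssim_{\mathbb{P}} 1$. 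Markov's inequality then yields $T \lesssim_{\mathbb{P}} \sqrt{r_n/n}$, and combining with the bias bounds produces Eq.~(\ref{eq:oracle-error}).

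The only genuinely nontrivial step is controlling $\|\hat{Q}^{-1}\|_{\mathrm{op}}$, i.e., showing that the sample Gram matrix of the spline basis is well-conditioned. Using $\mathbb{E}(bb^{\top}) = I$ (Assumption~\ref{assu:standardization}) and $\|b\|_2^2 \lesssim r_n$, a matrix Bernstein inequality gives $\|\hat{Q} - I\|_{\mathrm{op}} \lesssim_{\mathbb{P}} \sqrt{r_n \log r_n / n}$, which vanishes under the growth condition of Assumption~\ref{assu:(Matrix-Bernstein-Inequality)}; this is exactly the machinery the paper already deploys on the $X$-side basis $p$, so the work amounts to repeating it for $b$. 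Everything else reduces to textbook polynomial spline approximation arguments (see, e.g., \citealp{Tsybakov2009-yb} and \citealp{Kennedy2022-da}).
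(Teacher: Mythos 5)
Your proof is correct, but it follows a different (though standard) route from the paper's. You decompose around the population $L^2$ projection $\psi^{\star}(c) = b(c)^{\top}\mathbb{E}[b(C)\psi(C)]$, split the error into population-projection bias, a remainder term, and a stochastic term, and control $\hat{Q}^{-1}$ directly with a matrix Bernstein argument. The paper instead works entirely through the weight function $\bar{\ddot{w}}$: it proves a ``reproducing property'' of the spline weights, namely $\tfrac{1}{n}\sum_i \bar{\ddot{w}}(\bar{C}_i)g(\bar{C}_i) = g(\ddot{C})$ for every $\lfloor s_\psi\rfloor$-degree polynomial $g$, and applies it with $g = \psi_{s_\psi,\ddot{C}}$ (the Taylor approximation of $\psi$ \emph{centered at} $\ddot{C}$) so that the conditional bias collapses to $\tfrac{1}{n}\sum_i \bar{\ddot{w}}(\bar{C}_i)\{\psi(\bar{C}_i)-\psi_{s_\psi,\ddot{C}}(\bar{C}_i)\}$; smoothness plus locality of $K_b$ and Lemma \ref{lem:w-bounds} then give $r_n^{-s_\psi/d_C}$ directly, without ever introducing $\psi^\star$ or separately analyzing the middle ``projection residual'' term. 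The paper's variance computation is identical in substance to yours, just expressed as $\tfrac{1}{n^2}\sum_i \bar{\ddot{w}}(\bar{C}_i)^2\,\mathrm{Var}(f_0|\bar{C}_i)$ and bounded by Lemma \ref{lem:w-bounds}.

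The paper's route is slightly tighter because the well-conditioning of $\bar{\mathbf{B}}$ (your $\hat{Q}$) is already packaged into the indicator $\bar{\ddot{1}}$ and Lemma \ref{lem:w-bounds}, so there is no need to re-derive the Bernstein bound; your version makes this dependency explicit but at the cost of another decomposition term. One thing to note: you correctly flag that bounding $\|\hat{Q}^{-1}\|_{\mathrm{op}}$ needs the growth condition of Assumption \ref{assu:(Matrix-Bernstein-Inequality)}, which the theorem's stated hypotheses do not list; the paper's own proof has the same implicit dependence (Lemma \ref{lem:w-bounds} carries the $\bar{\ddot{1}}$ indicator, whose removal rests on Lemma \ref{lem:(Stable-eigenvalues)-Under} and hence Assumption \ref{assu:(Matrix-Bernstein-Inequality)}), so this is not a flaw specific to your argument.
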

In the special case where we do no individualization and $C=1$ (i.e.,
estimating population average parameters in the form of $\mathbb{E}\left[m(Z,\gamma_{0})\right]$),
the first term in Eq (\ref{eq:oracle-error}) drops away as the oracle
estimator is unbiased, and the second term becomes $\lesssim1/\sqrt{n}$.
Here, the $r_{n}$ quantity in Eqs (\ref{eq:2-way-upper-bound}) \&
(\ref{eq:3-way-variance}) can also be replaced with 1.

In the more general case of conditional estimates, the $\sqrt{r_{n}/n}$
term in Eq (\ref{eq:oracle-error}) dominates both the $\left(k_{n}^{-s_{\gamma}/d_{X}}+k_{n}^{-s_{\alpha}/d_{X}}\right)/\sqrt{n/r_{n}}$
and $\sqrt{k_{n}}/(n/\sqrt{r_{n}})$ terms in Eq (\ref{eq:3-way-variance}),
so that combining results gives

\begin{equation}
\hat{\tilde{\bar{\ddot{\theta}}}}-\ddot{\theta}_{0}\lesssim_{\mathbb{P}}r_{n}^{-s_{\psi}/d_{C}}+\sqrt{\frac{r_{n}}{n}}+k_{n}^{-\left(s_{\gamma}+s_{\alpha}\right)/d_{X}}+\frac{k_{n}^{1/2-s_{\gamma}/d_{X}}}{\sqrt{n}}.\label{eq:combined-bound}
\end{equation}
If $s_{\gamma}\geq d_{X}/2$, this implies that $k_{n}$ should be
as large as possible while satisfying Assumption \ref{assu:(Matrix-Bernstein-Inequality)}. 

Thus far, we have considered the estimands in Table \ref{tab:Estimands-as-linear}.
However, it is important to separately consider estimation of the
conditional average treatment effect 
\[
\tau(c):=\mathbb{E}\left[\mathbb{E}\left(Y|X,A=1\right)-\mathbb{E}\left(Y|X,A=0\right)|C=c\right]
\]
While Theorems \ref{thm:(Error-relative-to-oracle} \& \ref{thm:(Oracle-Error)-Under}
imply that estimation of $\tau$ will be no worse than estimation
of either $\mathbb{E}\left(Y|X,A=0\right)$ or $\mathbb{E}\left(Y|X,A=1\right)$,
we also would like to ensure that estimation of $\tau$ performs \emph{better}
than the estimation of it's components in the case where $\tau$ is
smoother than either $\mathbb{E}\left(Y|X,A=0\right)$ or $\mathbb{E}\left(Y|X,A=1\right)$.
We tackle this specific question in the next corollary.
\begin{cor}
\label{cor:(Conditional-Average-Treatment}(Conditional Average Treatment
Effects) Let $\hat{\tilde{\bar{\ddot{\theta}}}}_{\text{trt}}$ and
$\hat{\tilde{\bar{\ddot{\theta}}}}_{\text{ctrl}}$ be the 3-way CF
estimates of $\ddot{\theta}_{\text{trt}}$ and $\ddot{\theta}_{\text{ctrl}}$,
using the procedure described in Section \ref{sec:Proposed-estimation-procedure}.
If Assumptions \ref{assu:lin-func}-\ref{assu:bounded-f0} hold separately
for $\hat{\tilde{\bar{\ddot{\theta}}}}_{\text{trt}}$ and $\hat{\tilde{\bar{\ddot{\theta}}}}_{\text{ctrl}}$;
the function $\tau$ is $s_{\tau}$-smooth; and the polynomial spline
basis $b$ is of order $\lfloor s_{\tau}\rfloor$, then
\end{cor}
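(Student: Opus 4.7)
The plan is to decompose the CATE estimator into the difference of the two treatment/control estimators, invoke Theorem~\ref{thm:(Error-relative-to-oracle} for each piece separately, and then invoke a slightly repurposed version of Theorem~\ref{thm:(Oracle-Error)-Under} on the \emph{combined} oracle, so that the smoothness of $\tau$ (rather than of the individual regressions $\psi_{\text{trt}}$ and $\psi_{\text{ctrl}}$) governs the bias of the spline step.

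Concretely, I would write
\begin{align*}
\hat{\tilde{\bar{\ddot{\theta}}}}_{\text{trt}} - \hat{\tilde{\bar{\ddot{\theta}}}}_{\text{ctrl}} - \tau(\ddot{C})
&= \bigl(\hat{\tilde{\bar{\ddot{\theta}}}}_{\text{trt}} - \bar{\ddot{\theta}}_{\text{oracle,trt}}\bigr)
 - \bigl(\hat{\tilde{\bar{\ddot{\theta}}}}_{\text{ctrl}} - \bar{\ddot{\theta}}_{\text{oracle,ctrl}}\bigr) \\
&\quad + \bigl(\bar{\ddot{\theta}}_{\text{oracle,trt}} - \bar{\ddot{\theta}}_{\text{oracle,ctrl}} - \tau(\ddot{C})\bigr),
\end{align*}
where the two oracle quantities use pseudo-outcomes $f_{0,\text{trt}}$ and $f_{0,\text{ctrl}}$ and share the same spline weights $\bar{\ddot{w}}$. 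The first two parenthesized terms are each bounded by Theorem~\ref{thm:(Error-relative-to-oracle} (3-way version, Eq.~(\ref{eq:3-way-variance})), contributing $k_n^{-(s_\gamma+s_\alpha)/d_X} + (k_n^{-s_\gamma/d_X}+k_n^{-s_\alpha/d_X})/\sqrt{n/r_n} + \sqrt{k_n}/(n/\sqrt{r_n}) + k_n^{1/2-s_\gamma/d_X}/\sqrt{n}$ each.

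The third term is the oracle error for the composite pseudo-outcome $f_{0,\Delta} := f_{0,\text{trt}} - f_{0,\text{ctrl}}$. Because $\bar{\ddot{w}}$ is linear in the outcome, the difference of oracle estimators equals the spline regression of $f_{0,\Delta}$ on $\bar{C}$ evaluated at $\ddot{C}$. Its conditional mean $\mathbb{E}[f_{0,\Delta}(Z)\mid C=c] = \tau(c)$ is $s_\tau$-smooth by hypothesis, and $f_{0,\Delta}$ is bounded (since $f_{0,\text{trt}}$ and $f_{0,\text{ctrl}}$ are each bounded by Assumption~\ref{assu:bounded-f0}). Thus the hypotheses of Theorem~\ref{thm:(Oracle-Error)-Under} hold with $s_\psi$ replaced by $s_\tau$, yielding $r_n^{-s_\tau/d_C} + \sqrt{r_n/n}$.

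Combining the three pieces and simplifying (the $(k_n^{-s_\gamma/d_X}+k_n^{-s_\alpha/d_X})/\sqrt{n/r_n}$ and $\sqrt{k_n}/(n/\sqrt{r_n})$ terms are dominated by $\sqrt{r_n/n}$ whenever the nuisance errors are $o(1)$ and $k_n/n \to 0$, as in Assumption~\ref{assu:(Matrix-Bernstein-Inequality)}), I expect the corollary to read
\[
\hat{\tilde{\bar{\ddot{\theta}}}}_{\text{trt}} - \hat{\tilde{\bar{\ddot{\theta}}}}_{\text{ctrl}} - \tau(\ddot{C}) \lesssim_{\mathbb{P}} r_n^{-s_\tau/d_C} + \sqrt{\tfrac{r_n}{n}} + k_n^{-(s_\gamma+s_\alpha)/d_X} + \tfrac{k_n^{1/2-s_\gamma/d_X}}{\sqrt{n}}.
\]
The one nontrivial step is the third: I need the oracle-error argument of Theorem~\ref{thm:(Oracle-Error)-Under} to apply to the \emph{difference} of pseudo-outcomes rather than to a single one, and I need the bias of the spline regression to pick up $s_\tau$ rather than $\min(s_{\psi_{\text{trt}}}, s_{\psi_{\text{ctrl}}})$. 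Since splines are linear smoothers and Theorem~\ref{thm:(Oracle-Error)-Under} derives its bias bound purely from the smoothness of the conditional mean of the pseudo-outcome, this extension should be immediate, but it must be stated carefully so that the reader sees why $\tau$'s smoothness is the only assumption required on the conditional-mean side.
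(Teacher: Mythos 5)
Your proposal is correct and takes essentially the same approach as the paper: decompose into two errors-relative-to-oracle (each handled by Theorem~\ref{thm:(Error-relative-to-oracle}) plus the oracle error of the combined pseudo-outcome $f_{0,\text{trt}}-f_{0,\text{ctrl}}$, which by linearity of the spline weights is a single spline regression whose conditional mean $\tau$ is $s_{\tau}$-smooth and so inherits the Theorem~\ref{thm:(Oracle-Error)-Under} bound with $s_{\psi}$ replaced by $s_{\tau}$. Your justification for the ``nontrivial'' step (splines are linear, $f_{0,\Delta}$ is bounded, and only the smoothness of its conditional mean enters the oracle bias) and the subsequent term-dominance simplification match the paper's proof exactly.
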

\begin{equation}
\left(\hat{\tilde{\bar{\ddot{\theta}}}}_{\text{trt}}-\hat{\tilde{\bar{\ddot{\theta}}}}_{\text{ctrl}}\right)-\tau(\ddot{C})\lesssim_{\mathbb{P}}r_{n}^{-s_{\tau}/d_{C}}+\sqrt{\frac{r_{n}}{n}}+k_{n}^{-\left(s_{\gamma}+s_{\alpha}\right)/d_{X}}+\frac{k_{n}^{1/2-s_{\gamma}/d_{X}}}{\sqrt{n}}.\label{eq:combined-bound-tau}
\end{equation}
Again, if $s_{\gamma}\geq d_{X}/2$, this implies that $k_{n}$ should
be as large as possible while satisfying Assumption \ref{assu:(Matrix-Bernstein-Inequality)}.
\citet{Kennedy2022-da} comes to a similar conclusion after their
Corollary 2. 

Our approach and results are similar to those of the lp-R-learner
\citep{Kennedy2022-da,Kennedy2022-cn}, although there are important
differences. \citet{Kennedy2022-da} replaces the $k_{n}^{1/2-s_{\gamma}/d_{X}}/\sqrt{n}$
term with a term that depends on the smoothness of the propensity
score. \citet{Kennedy2022-cn} go further, removing this term altogether
and achieving a minimax lower bound. That said, our results more easily
accomodate the scenario where $C\neq X$, taking advantage of a lower
oracle bias $r_{n}^{-s_{\tau}/d_{C}}$. By contrast, because the lp-R-learner
involves weights that depend on the full $X$ vector, it is less easily
extended to the $C\neq X$ scenario. This $C\neq X$ setting may be
of interest in many medical decision making settings where doctors
do not have access to the full covariate vector, or cannot base decisions
off of the full vector $X$ for ethical reasons (e.g., determining
treatment based on income).  \textcolor{purple}{}

\section{Discussion}

In this work we propose a combination of 3-way cross-fitting and pseudo-outcome
regression that produces estimates of conditional effects with quickly
converging second order remainder terms. Our results apply to both
the CATE and the conditional covariance, as well as other linear functionals.

Several exciting areas remain open for future work. First, while we
propose using quantities such as the CATE to inform individual treatment
decisions, it would be interesting to see if similar results hold
when optimizing a binary decision rule \citep{Zhao2012-qx,Zhang2012-ve,Foster2019-nb,Laber2014-hi,Luedtke2016-mi,Luedtke2016-te}.
Second, while we show that 3-way CF can produce faster convergence
\emph{rates} than 2-way CF, it could underperform in finite sample
due to the fact that 3-way CF estimates nuisance functions from smaller
samples splits. Simulation studies could be helpful in studying this
trade-off, and the extent to which is mitigated by multiple iterations
of cross-fitting (see Section \ref{sec:Proposed-estimation-procedure}).

\bibliographystyle{apalike}
\bibliography{_surv}

\begin{thebibliography}{}

\bibitem[Athey and Imbens, 2016]{Athey2016-vj}
Athey, S. and Imbens, G. (2016).
\newblock Recursive partitioning for heterogeneous causal effects.
\newblock {\em Proc. Natl. Acad. Sci. U. S. A.}, 113(27):7353--7360.

\bibitem[Bang and Robins, 2005]{Bang2005-kd}
Bang, H. and Robins, J.~M. (2005).
\newblock Doubly robust estimation in missing data and causal inference models.
\newblock {\em Biometrics}, 61(4):962--973.

\bibitem[Belloni et~al., 2015]{Belloni2015-tn}
Belloni, A., Chernozhukov, V., Chetverikov, D., and Kato, K. (2015).
\newblock Some new asymptotic theory for least squares series: Pointwise and
  uniform results.
\newblock {\em J. Econom.}, 186(2):345--366.

\bibitem[Bickel, 1982]{Bickel1982-ht}
Bickel, P.~J. (1982).
\newblock On adaptive estimation.
\newblock {\em Ann. Stat.}, 10(3):647--671.

\bibitem[Bickel and Ritov, 1988]{Bickel1988-mt}
Bickel, P.~J. and Ritov, Y. (1988).
\newblock Estimating integrated squared density derivatives: Sharp best order
  of convergence estimates.
\newblock {\em Sankhy{\=a}: The Indian Journal of Statistics, Series A
  (1961-2002)}, 50(3):381--393.

\bibitem[Buckley and James, 1979]{Buckley1979-gg}
Buckley, J. and James, I. (1979).
\newblock Linear regression with censored data.
\newblock {\em Biometrika}, 66(3):429--436.

\bibitem[Chen et~al., 2017]{Chen2017-uv}
Chen, S., Tian, L., Cai, T., and Yu, M. (2017).
\newblock A general statistical framework for subgroup identification and
  comparative treatment scoring.
\newblock {\em Biometrics}, 73(4):1199--1209.

\bibitem[Chernozhukov et~al., 2018]{Chernozhukov2018-va}
Chernozhukov, V., Chetverikov, D., Demirer, M., Duflo, E., Hansen, C., Newey,
  W., and Robins, J. (2018).
\newblock Double/debiased machine learning for treatment and structural
  parameters.
\newblock {\em Econom. J.}, 21(1):C1--C68.

\bibitem[Chernozhukov et~al., 2022a]{Chernozhukov2022-jx}
Chernozhukov, V., Escanciano, J.~C., Ichimura, H., Newey, W.~K., and Robins,
  J.~M. (2022a).
\newblock Locally robust semiparametric estimation.
\newblock {\em Econometrica}, 90(4):1501--1535.

\bibitem[Chernozhukov et~al., 2022b]{Chernozhukov2022-iq}
Chernozhukov, V., Newey, W.~K., and Singh, R. (2022b).
\newblock Automatic debiased machine learning of causal and structural effects.
\newblock {\em Econometrica}, 90(3):967--1027.

\bibitem[D{\'\i}az et~al., 2018]{Diaz2018-ax}
D{\'\i}az, I., Savenkov, O., and Ballman, K. (2018).
\newblock Targeted learning ensembles for optimal individualized treatment
  rules with time-to-event outcomes.
\newblock {\em Biometrika}, 105(3):723--738.

\bibitem[Fan and Gijbels, 1994]{Fan1994-vz}
Fan, J. and Gijbels, I. (1994).
\newblock Censored regression: Local linear approximations and their
  applications.
\newblock {\em J. Am. Stat. Assoc.}, 89(426):560--570.

\bibitem[Foster and Syrgkanis, 2019]{Foster2019-nb}
Foster, D.~J. and Syrgkanis, V. (2019).
\newblock Orthogonal statistical learning.

\bibitem[Groves and Rothenberg, 1969]{Groves1969-bt}
Groves, T. and Rothenberg, T. (1969).
\newblock A note on the expected value of an inverse matrix.
\newblock {\em Biometrika}, 56(3):690--691.

\bibitem[Hill, 2011]{Hill2011-zd}
Hill, J.~L. (2011).
\newblock Bayesian nonparametric modeling for causal inference.
\newblock {\em J. Comput. Graph. Stat.}, 20(1):217--240.

\bibitem[Imai and Ratkovic, 2013]{Imai2013-bz}
Imai, K. and Ratkovic, M. (2013).
\newblock Estimating treatment effect heterogeneity in randomized program
  evaluation.
\newblock {\em aoas}, 7(1):443--470.

\bibitem[Kennedy, 2022a]{Kennedy2022-jv}
Kennedy, E.~H. (2022a).
\newblock Semiparametric doubly robust targeted double machine learning: a
  review.

\bibitem[Kennedy, 2022b]{Kennedy2022-da}
Kennedy, E.~H. (2022b).
\newblock Towards optimal doubly robust estimation of heterogeneous causal
  effects.

\bibitem[Kennedy et~al., 2020]{Kennedy2020-ie}
Kennedy, E.~H., Balakrishnan, S., and G'Sell, M. (2020).
\newblock Sharp instruments for classifying compliers and generalizing causal
  effects.
\newblock {\em Ann. Stat.}

\bibitem[Kennedy et~al., 2022]{Kennedy2022-cn}
Kennedy, E.~H., Balakrishnan, S., Robins, J.~M., and Wasserman, L. (2022).
\newblock Minimax rates for heterogeneous causal effect estimation.

\bibitem[K{\"u}nzel et~al., 2019]{Kunzel2019-ni}
K{\"u}nzel, S.~R., Sekhon, J.~S., Bickel, P.~J., and Yu, B. (2019).
\newblock Metalearners for estimating heterogeneous treatment effects using
  machine learning.
\newblock {\em Proc. Natl. Acad. Sci. U. S. A.}, 116(10):4156--4165.

\bibitem[Laber et~al., 2014]{Laber2014-hi}
Laber, E.~B., Lizotte, D.~J., Qian, M., Pelham, W.~E., and Murphy, S.~A.
  (2014).
\newblock Dynamic treatment regimes: technical challenges and applications.
\newblock {\em Electron. J. Stat.}, 8(1):1225--1272.

\bibitem[Luedtke and van~der Laan, 2016a]{Luedtke2016-te}
Luedtke, A.~R. and van~der Laan, M.~J. (2016a).
\newblock Statistical inference for the mean outcome under a possibly
  non-unique optimal treatment strategy.
\newblock {\em Ann. Stat.}, 44(2):713--742.

\bibitem[Luedtke and van~der Laan, 2016b]{Luedtke2016-mi}
Luedtke, A.~R. and van~der Laan, M.~J. (2016b).
\newblock {Super-Learning} of an optimal dynamic treatment rule.
\newblock {\em Int. J. Biostat.}, 12(1):305--332.

\bibitem[Newey and Robins, 2018]{Newey2018-da}
Newey, W.~K. and Robins, J.~R. (2018).
\newblock {Cross-Fitting} and fast remainder rates for semiparametric
  estimation.

\bibitem[Nie and Wager, 2020]{Nie2020-ih}
Nie, X. and Wager, S. (2020).
\newblock Quasi-oracle estimation of heterogeneous treatment effects.
\newblock {\em Biometrika}.

\bibitem[Powers et~al., 2018]{Powers2018-ya}
Powers, S., Qian, J., Jung, K., Schuler, A., Shah, N.~H., Hastie, T., and
  Tibshirani, R. (2018).
\newblock Some methods for heterogeneous treatment effect estimation in high
  dimensions.
\newblock {\em Stat. Med.}, 37(11):1767--1787.

\bibitem[Richard~Hahn et~al., 2017]{Richard_Hahn2017-hi}
Richard~Hahn, P., Murray, J.~S., and Carvalho, C. (2017).
\newblock Bayesian regression tree models for causal inference: regularization,
  confounding, and heterogeneous effects.

\bibitem[Robins et~al., 2007]{Robins2007-gi}
Robins, J., Sued, M., Lei-Gomez, Q., and Rotnitzky, A. (2007).
\newblock Comment: Performance of double-robust estimators when`` inverse
  probability'' weights are highly variable.
\newblock {\em Stat. Sci.}, 22(4):544--559.

\bibitem[Robins and Rotnitzky, 1995]{Robins1995-aw}
Robins, J.~M. and Rotnitzky, A. (1995).
\newblock Semiparametric efficiency in multivariate regression models with
  missing data.
\newblock {\em J. Am. Stat. Assoc.}, 90(429):122--129.

\bibitem[Robins et~al., 2000]{Robins2000-xh}
Robins, J.~M., Rotnitzky, A., and van~der Laan, M. (2000).
\newblock On profile likelihood: Comment.
\newblock {\em J. Am. Stat. Assoc.}, 95(450):477--482.

\bibitem[Rubin and van~der Laan, 2005]{Rubin2005-pe}
Rubin, D. and van~der Laan, M.~J. (2005).
\newblock A general imputation methodology for nonparametric regression with
  censored data.

\bibitem[Rubin and van~der Laan, 2007]{Rubin2007-ka}
Rubin, D. and van~der Laan, M.~J. (2007).
\newblock A doubly robust censoring unbiased transformation.
\newblock {\em Int. J. Biostat.}, 3(1).

\bibitem[Rudelson, 1999]{Rudelson1999-gn}
Rudelson, M. (1999).
\newblock Random vectors in the isotropic position.
\newblock {\em J. Funct. Anal.}, 164(1):60--72.

\bibitem[Scharfstein et~al., 1999]{Scharfstein1999-qw}
Scharfstein, D.~O., Rotnitzky, A., and Robins, J.~M. (1999).
\newblock Adjusting for nonignorable {Drop-Out} using semiparametric
  nonresponse models.
\newblock {\em J. Am. Stat. Assoc.}, 94(448):1096--1120.

\bibitem[Schick, 1986]{Schick1986-qk}
Schick, A. (1986).
\newblock On asymptotically efficient estimation in semiparametric models.
\newblock {\em Ann. Stat.}, 14(3):1139--1151.

\bibitem[Semenova and Chernozhukov, 2020]{Semenova2020-bw}
Semenova, V. and Chernozhukov, V. (2020).
\newblock Debiased machine learning of conditional average treatment effects
  and other causal functions.
\newblock {\em Econom. J.}, 24(2):264--289.

\bibitem[Tian et~al., 2014]{Tian2014-fz}
Tian, L., Alizadeh, A.~A., Gentles, A.~J., and Tibshirani, R. (2014).
\newblock A simple method for estimating interactions between a treatment and a
  large number of covariates.
\newblock {\em J. Am. Stat. Assoc.}, 109(508):1517--1532.

\bibitem[Tropp, 2015]{Tropp2015-dx}
Tropp, J.~A. (2015).
\newblock An introduction to matrix concentration inequalities.
\newblock {\em Foundations and Trends\textregistered{} in Machine Learning},
  8(1-2):1--230.

\bibitem[Tsybakov, 2009]{Tsybakov2009-yb}
Tsybakov, A.~B. (2009).
\newblock Introduction to nonparametric estimation.
\newblock {\em Springer Series in Statistics}.

\bibitem[van~der Laan, 2006]{Van_der_Laan2006-os}
van~der Laan, M.~J. (2006).
\newblock Statistical inference for variable importance.
\newblock {\em Int. J. Biostat.}, 2(1).

\bibitem[Zhang et~al., 2012]{Zhang2012-ve}
Zhang, B., Tsiatis, A.~A., Davidian, M., Zhang, M., and Laber, E. (2012).
\newblock Estimating optimal treatment regimes from a classification
  perspective.
\newblock {\em Stat}, 1(1):103--114.

\bibitem[Zhao et~al., 2012]{Zhao2012-qx}
Zhao, Y., Zeng, D., Rush, A.~J., and Kosorok, M.~R. (2012).
\newblock Estimating individualized treatment rules using outcome weighted
  learning.
\newblock {\em J. Am. Stat. Assoc.}, 107(449):1106--1118.

\end{thebibliography}

\appendix

\section{\label{sec:Proofs-for-Section}Proof of Theorem \ref{thm:(Error-relative-to-oracle}}

We start with a general proof outline. First, we review a fact that
\citet{Newey2018-da} frequently apply in their results: if $1_{n}A_{n}\lesssim_{\mathbb{P}}b_{n}$
and $1_{n}$ is an indicator satisfying $\text{Pr}(1_{n}=1)\rightarrow1$
(at any rate), then $A_{n}\lesssim_{\mathbb{P}}b_{n}$ as well. Thus,
when attempting to bound $\hat{\tilde{\bar{\ddot{\theta}}}}-\bar{\ddot{\theta}}_{\text{oracle}}$
in probability it will be sufficient to show that $1_{n}\times(\hat{\tilde{\bar{\ddot{\theta}}}}-\bar{\ddot{\theta}}_{\text{oracle}})$
is bounded in probability for some indicator $1_{n}$ satisfying $\text{Pr}(1_{n}=1)\rightarrow1$.

With this in mind, we choose $1_{n}$ to be the product of several
relevant indicators that depend on the sample moments of $p$ and
$b$. Let $\hat{\mathbb{P}}_{n}$, $\tilde{\mathbb{P}}_{n}$ and $\bar{\mathbb{P}}_{n}$
denote sample averages with respect to the three splits $\hat{\mathbf{Z}}$,
$\tilde{\mathbf{Z}}$ and $\bar{\mathbf{Z}}$. That is, for any function
$f$, 
\[
\hat{\mathbb{P}}_{n}(f(Z))=\frac{1}{n}\sum_{i=1}^{n}f(\hat{Z}_{i});\;\;\tilde{\mathbb{P}}_{n}(f(Z))=\frac{1}{n}\sum_{i=1}^{n}f(\tilde{Z}_{i});\;\;\text{and}\;\;\bar{\mathbb{P}}_{n}(f(Z))=\frac{1}{n}\sum_{i=1}^{n}f(\bar{Z}_{i}).
\]
Recall also that $\bar{\ddot{\mathbb{P}}}_{n}(f(Z))=\frac{1}{n}\sum_{i=1}^{n}\bar{\ddot{w}}(\bar{X}_{i})f(\bar{Z}_{i})$
is the weighted average over $\bar{\mathbf{Z}}$. Let
\[
\hat{\Sigma}:=\hat{\mathbb{P}}_{n}\left[pp^{\top}\right],\tilde{\Sigma}:=\tilde{\mathbb{P}}_{n}\left[pp^{\top}\right],\bar{\Sigma}:=\bar{\mathbb{P}}_{n}\left[pp^{\top}\right],\bar{\ddot{\Sigma}}:=\bar{\ddot{\mathbb{P}}}_{n}\left[pp^{\top}\right],\text{ and }\bar{\mathbf{B}}:=\bar{\mathbb{P}}_{n}\left[bb^{\top}\right].
\]
\label{pg-indicators}Let $\bar{\ddot{1}}$ be the indicator that
$\bar{\ddot{\Sigma}}$ is positive definite (p.d., see Assumption
\ref{assu:(Regularity-2)}); that $\lambda_{\max}\left(\bar{\Sigma}\right)\leq3/2$;
and that $\lambda_{\min}\left(\bar{\mathbf{B}}\right)\geq1/2$. Let
$\hat{1}$ and $\tilde{1}$ respectively be the indicators that $\lambda_{\text{min}}\left(\hat{\Sigma}\right)\geq1/2$,
and $\lambda_{\text{min}}\left(\tilde{\Sigma}\right)\geq1/2$. We
will see in Lemma \ref{lem:(Stable-eigenvalues)-Under} that $\text{Pr}(\hat{1}=1)\rightarrow1$,
$\text{Pr}(\tilde{1}=1)\rightarrow1$, and $\text{Pr}(\bar{\ddot{1}}=1)\rightarrow1$.
Going forward, we aim to bound $\hat{1}\tilde{1}\bar{\ddot{1}}\left(\hat{\tilde{\bar{\ddot{\theta}}}}-\bar{\ddot{\theta}}_{\text{oracle}}\right)=\hat{1}\tilde{1}\bar{\ddot{1}}\bar{\ddot{\mathbb{P}}}_{n}\left(\hat{\tilde{f}}(Z)-f_{0}(Z)\right)$
in probability.

We start by expanding difference $\hat{\tilde{f}}-f_{0}$ to obtain
\begin{align}
\hat{\tilde{f}}(Z)-f_{0}(Z) & =m(Z,\hat{\gamma})-m(Z,\gamma_{0})+\tilde{\alpha}J\left(Y-\hat{\gamma}\right)-\alpha_{0}J\left(Y-\gamma_{0}\right)\nonumber \\
 & =m(Z,\hat{\gamma})-m(Z,\gamma_{0})\mred{-\alpha_{0}J\hat{\gamma}+\alpha_{0}J\gamma_{0}}\nonumber \\
 & \hspace{1em}\hspace{1em}+\tilde{\alpha}JY-\alpha_{0}JY\mred{-\tilde{\alpha}J\gamma_{0}}+\alpha_{0}J\gamma_{0}\nonumber \\
 & \hspace{1em}\hspace{1em}-\tilde{\alpha}J\hat{\gamma}\mred{+\tilde{\alpha}J\gamma_{0}+\alpha_{0}J\hat{\gamma}-\alpha_{0}J\gamma_{0}}\nonumber \\
 & =m(Z,\hat{\gamma})-m(Z,\gamma_{0})-\alpha_{0}J(\hat{\gamma}-\gamma_{0})\label{eq:mean01}\\
 & \hspace{1em}\hspace{1em}+\left(\tilde{\alpha}-\alpha_{0}\right)J(Y-\gamma_{0})\label{eq:mean02}\\
 & \hspace{1em}\hspace{1em}-(\tilde{\alpha}-\alpha_{0})J(\hat{\gamma}-\gamma_{0}).\label{eq:not-mean0}
\end{align}

\textcolor{black}{}We replace the term in Line (\ref{eq:not-mean0}),
$(\tilde{\alpha}-\alpha_{0})J(\hat{\gamma}-\gamma_{0})$, with $(\tilde{\alpha}-\alpha^{\star}+\alpha^{\star}-\alpha_{0})J(\hat{\gamma}-\gamma^{\star}+\gamma^{\star}-\gamma_{0})$,
to obtain
\begin{align}
\hat{1}\tilde{1}\bar{\ddot{1}}\left(\hat{\tilde{\bar{\ddot{\theta}}}}-\bar{\ddot{\theta}}_{\text{oracle}}\right) & =\hat{1}\tilde{1}\bar{\ddot{1}}\bar{\ddot{\mathbb{P}}}_{n}\left(\hat{\tilde{f}}(Z)-f_{0}(Z)\right)\nonumber \\
 & =\hat{1}\tilde{1}\bar{\ddot{1}}\bar{\ddot{\mathbb{P}}}_{n}\left\{ m(Z,\hat{\gamma})-m(Z,\gamma_{0})-\alpha_{0}J(\hat{\gamma}-\gamma_{0})\right.\label{eq:apply-linear}\\
 & \hspace{1em}\hspace{1em}+\left(\tilde{\alpha}-\alpha_{0}\right)J(Y-\gamma_{0})\label{eq:apply-IE}\\
 & \hspace{1em}\hspace{1em}\mred{-(\alpha^{\star}-\alpha_{0})J(\hat{\gamma}-\gamma^{\star})}\label{eq:a_star_0-g_hat_star}\\
 & \hspace{1em}\hspace{1em}\mred{-(\tilde{\alpha}-\alpha^{\star})J(\gamma^{\star}-\gamma_{0})}\label{eq:a_tilde_star-g_star_0}\\
 & \hspace{1em}\hspace{1em}\mred{-(\alpha^{\star}-\alpha_{0})J(\gamma^{\star}-\gamma_{0})}\label{eq:2stars}\\
 & \hspace{1em}\hspace{1em}\left.\mred{-(\tilde{\alpha}-\alpha^{\star})J(\hat{\gamma}-\gamma^{\star})}\right\} .\label{eq:diff-term}
\end{align}

The remainder of our proofs are organized as follows. Section \ref{subsec:Helpful-Lemmas}
introduces notation and provides several helpful Lemmas that are used
throughout our results. Section \ref{subsec:Proof-apply-linear-ie-common-terms}
shows that Lines (\ref{eq:apply-linear}) \& (\ref{eq:apply-IE})
are

\[
\lesssim_{\mathbb{P}}\frac{k_{n}^{-s_{\gamma}/d_{X}}+k_{n}^{-s_{\alpha}/d_{X}}}{\sqrt{n/r_{n}}}+\frac{\sqrt{k_{n}}}{n/\sqrt{r_{n}}},
\]
Section \ref{subsec:Reducing-Line_a_star} shows that Line (\ref{eq:a_star_0-g_hat_star})
is 
\[
\lesssim_{\mathbb{P}}k^{-(s_{\gamma}+s_{\alpha})/d_{X}}+\frac{k_{n}^{-s_{\alpha}/d_{X}}}{\sqrt{n/r_{n}}},
\]
Section \ref{subsec:Reducing-Line_g_star} shows that Line (\ref{eq:a_tilde_star-g_star_0})
is 

\[
\lesssim_{\mathbb{P}}k^{-(s_{\gamma}+s_{\alpha})/d_{X}}+\frac{k_{n}^{-s_{\gamma}/d_{X}}}{\sqrt{n/r_{n}}}+\frac{k_{n}^{1/2-s_{\gamma}/d_{X}}}{\sqrt{n}},
\]
where the last term can be removed when the estimand is the conditional
covariance ($m(Z,\gamma)=A(Y-\gamma(X))$). Section \ref{subsec:Reducing-2stars}
shows that Line (\ref{eq:2stars}) is 
\[
\lesssim_{\mathbb{P}}k_{n}^{-(s_{\alpha}+s_{\gamma})/d_{X}},
\]
All of these results from Sections \ref{subsec:Proof-apply-linear-ie-common-terms}-\ref{subsec:Reducing-2stars}
apply to both 2-way and 3-way estimators. Section \ref{subsec:Reducing-Line_g_star}
is where our results differ most from those of \citet{Newey2018-da},
who are able to use the fact that the residuals after projecting onto
a set of basis vectors are orthogonal to those basis vectors. In our
case however, this technique does not translate, as the residuals
are not orthogonal conditional on the personalization traits $C$.

The primary difference between 2-way and 3-way estimators comes in
Line (\ref{eq:diff-term}), as it involves interactions between the
errors of each nuisance estimator. For 2-way estimators, Section \ref{subsec:Proof-of-last-2way-term}
shows that Line (\ref{eq:diff-term}) is $\lesssim_{\mathbb{P}}k/n$.
For 3-way estimators, Section \ref{subsec:Proof-of-last-term-3way}
leverages the fact that the nuisance errors in Line (\ref{eq:diff-term})
are independent (conditional on $\bar{\mathbf{Z}},\ddot{C}$) to show
that Line (\ref{eq:diff-term}) is

\begin{align}
 & \lesssim_{\mathbb{P}}k_{n}^{-\left(s_{\alpha}+s_{\gamma}\right)/d_{X}}+\frac{k^{-s_{\gamma}/d_{X}}}{\sqrt{n/r_{n}}}+\frac{\sqrt{k_{n}}}{n/\sqrt{r_{n}}}+\frac{k_{n}}{n}\left(k^{-s_{\gamma}/d_{X}}\frac{k_{n}\log\left(k_{n}\right)}{n}\right)\nonumber \\
 & \leq k_{n}^{-\left(s_{\alpha}+s_{\gamma}\right)/d_{X}}+\frac{k^{-s_{\gamma}/d_{X}}}{\sqrt{n/r_{n}}}+\frac{\sqrt{k_{n}}}{n/\sqrt{r_{n}}}+\frac{k_{n}^{1/2-s_{\gamma}/d_{X}}}{\sqrt{n}},\label{eq:leq-k-half}
\end{align}
where Line (\ref{eq:leq-k-half}) comes from Assumption \ref{assu:(Matrix-Bernstein-Inequality)}
and
\begin{align*}
\frac{k_{n}}{n}\left(k^{-s_{\gamma}/d_{X}}\frac{k_{n}\log\left(k_{n}\right)}{n}\right) & =\frac{k_{n}^{1/2-s_{\gamma}/d_{X}}}{n^{1/2}}\left\{ \frac{k_{n}^{3/2}\log\left(k_{n}\right)}{n^{3/2}}\right\} .
\end{align*}
Section \ref{subsec:Proof-of-last-term-3way} also shows that the
$k_{n}^{1/2-s_{\gamma}/d_{X}}/\sqrt{n}$ term in Eq (\ref{eq:leq-k-half})
can be removed when estimating the conditional covariance.

\subsection{Helpful Lemmas\label{subsec:Helpful-Lemmas}}

First, we provide the proof of Lemma \ref{lem:spline-error}, which
says that the spline regression coefficients that minimize the expected
squared error loss have bounded error. The steps are similar to those
in Appendix \ref{sec:proof-Oracle-error}. 
\begin{proof}
For any $s$-smooth function $f$, let $f_{s,x}$ denote the $\lfloor s\rfloor$
order Taylor approximation of $f$ at $x$. Let $q$ be a defined
as in Assumption \ref{assu:series-estimates} \& \ref{assu:standardization}.
Let $\beta_{s,x}$ be a set of coefficients such that $f_{s,x}(x')=\beta_{s,x}^{\top}q(x')$.
Since $q$ contains several neighborhoods, there are multiple valid
choices for $\beta_{s,x}$.

Let 
\[
f^{\star}(x)=q(x)^{\top}\mathbb{E}\left(pp^{\top}\right)^{-1}\mathbb{E}\left(pf\right)=q(x)^{\top}\mathbb{E}\left(qJf\right)
\]
 be the projection of $f$ onto the basis $p$. We will show that
$\sup_{x}|f^{\star}(x)-f(x)|\lesssim r_{n}^{-s/d_{X}}$.

Note that 
\begin{align}
f(x)=f_{s,x}(x)=q(x)^{\top}\beta_{s,x}= & q(x)^{\top}\mathbb{E}\left[qJq^{\top}\right]\beta_{s,x}\nonumber \\
= & q(x)^{\top}\mathbb{E}\left[q(X)Jf_{s,x}(X)\right].\label{eq:combine-f}
\end{align}
Applying this, we have
\begin{align*}
f^{\star}(x)-f(x) & =q(x)^{\top}\mathbb{E}\left[qJf\right]-f(x)\\
 & =q(x)^{\top}\mathbb{E}\left[q(X)J\left\{ f(X)-f_{s,x}(X)\right\} \right] &  & \text{From Eq \eqref{eq:combine-f}}\\
 & =\mathbb{E}\left[q(x)^{\top}q(X)J\left\{ f(X)-f_{s,x}(X)\right\} K_{q}(X,x)\right] &  & \text{Def of }q\text{ and }K_{q}\\
 & \leq\mathbb{E}\left[||q(x)||_{2}^{2}\times|f(X)-f_{s,x}(X)|\times K_{q}(X,x)\right]\\
 & \lesssim k_{n}\mathbb{E}\left[||X-x||^{s}\times K_{q}(X,x)\right] &  & \text{Assm \ref{assu:standardization}}\\
 & \lesssim k_{n}^{1-s/d_{X}}\mathbb{E}\left[K_{q}(X,x)\right] &  & \text{Assm \ref{assu:series-estimates}}\\
 & \lesssim k_{n}^{-s/d} &  & \text{Assm \ref{assu:(Approximately-uniform-),}}.
\end{align*}
\end{proof}
\begin{lem}
\label{lem:(Stable-eigenvalues)-Under}(Stable eigenvalues) Under
Assumptions \ref{assu:standardization} \& \ref{assu:(Matrix-Bernstein-Inequality)},
$\text{Pr}(\hat{1}=1),\text{Pr}(\tilde{1}=1),\text{Pr}(\bar{\ddot{1}}=1)\rightarrow1$.
Additionally $\lambda_{\text{max}}(\hat{\Sigma}-I)\lesssim_{\mathbb{P}}\sqrt{\left\{ k_{n}\log\left(k_{n}\right)\right\} /n}$.
\end{lem}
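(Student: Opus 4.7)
The plan is to apply the Matrix Bernstein inequality to each of the centered sample covariance matrices and then use the union bound. Start with $\hat{\Sigma} - I = \frac{1}{n}\sum_{i=1}^{n}\{p(\hat{Z}_i)p(\hat{Z}_i)^{\top} - I\}$. Each summand is a centered i.i.d.\ symmetric matrix, and because $p(A,X)p(A,X)^{\top}$ is rank one, its operator norm equals $\|p(A,X)\|_2^2 \lesssim k_n$ by Assumption \ref{assu:standardization}; together with $\mathbb{E}(pp^{\top}) = I$, this yields a uniform bound $\lesssim k_n$ on the operator norm of each summand. The matrix variance parameter $\|\mathbb{E}[(pp^{\top}-I)^2]\|_{op}$ is likewise controlled by $\|pp^{\top}\|_{op}\cdot\|\mathbb{E}(pp^{\top})\|_{op} \lesssim k_n$. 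Matrix Bernstein then gives
\begin{equation*}
\Pr\bigl(\|\hat{\Sigma} - I\|_{op} > t\bigr) \;\leq\; 2k_n \exp\!\left(\frac{-nt^2/2}{C\,k_n + C\,k_n t}\right)
\end{equation*}
for a universal constant $C$. Taking $t \asymp \sqrt{k_n \log(k_n)/n}$ and invoking Assumption \ref{assu:(Matrix-Bernstein-Inequality)} (which ensures $k_n \log(k_n)/n \to 0$) drives this probability to zero, delivering both the quantitative bound $\lambda_{\max}(\hat{\Sigma} - I) \lesssim_{\mathbb{P}} \sqrt{k_n \log(k_n)/n}$ and (since the right-hand side is $o(1)$) the conclusion $\lambda_{\min}(\hat{\Sigma}) \geq 1 - o_{\mathbb{P}}(1) \geq 1/2$ with probability approaching~$1$. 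Hence $\Pr(\hat{1}=1)\to 1$.

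The identical argument applied to $\tilde{\mathbf{Z}}$ gives $\Pr(\tilde{1}=1)\to 1$, and applied to $\bar{\mathbf{Z}}$ gives $\|\bar{\Sigma} - I\|_{op} = o_{\mathbb{P}}(1)$, so $\lambda_{\max}(\bar{\Sigma}) \leq 3/2$ with probability approaching $1$. For the $b$-basis block of $\bar{\ddot{1}}$, Assumption \ref{assu:standardization} supplies $\|b(c)\|_2^2 \lesssim r_n$ and $\mathbb{E}(bb^{\top}) = I$, so an analogous Matrix Bernstein application (now with $r_n$ in place of $k_n$, using the second half of Assumption \ref{assu:(Matrix-Bernstein-Inequality)}) produces $\|\bar{\mathbf{B}} - I\|_{op} \lesssim_{\mathbb{P}} \sqrt{r_n \log(r_n)/n} = o_{\mathbb{P}}(1)$, whence $\lambda_{\min}(\bar{\mathbf{B}}) \geq 1/2$ with probability tending to $1$.

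The remaining component of $\bar{\ddot{1}}$ is positive semi-definiteness of $\bar{\ddot{\Sigma}}$, which is granted directly by Assumption \ref{assu:(Regularity-2)}. A union bound over the three events just established yields $\Pr(\bar{\ddot{1}}=1)\to 1$, completing the lemma. The only nontrivial step is verifying the variance parameter for Matrix Bernstein, which is routine bookkeeping once one notes that the rank-one structure reduces the operator norm of $pp^{\top}$ to $\|p\|_2^2$ and hence permits the Assumption \ref{assu:standardization} bound to be invoked directly.
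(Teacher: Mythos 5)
Your proof is correct and follows essentially the same approach as the paper: both rely on Assumption \ref{assu:standardization} (giving $\|p\|_2^2 \lesssim k_n$ and $\mathbb{E}(pp^\top)=I$) and the Matrix Bernstein Inequality to control $\lambda_{\max}(\hat{\Sigma}-I)$, then deduce the eigenvalue statements. The one small difference is mechanical: the paper invokes the \emph{expectation} form of Matrix Bernstein, $\mathbb{E}[\lambda_{\max}(\hat{\Sigma}-I)] \lesssim \sqrt{k_n\log k_n/n}$, and then applies Markov's inequality to get both the $\lesssim_{\mathbb{P}}$ bound and $\Pr(\lambda_{\max}(\hat{\Sigma}-I)>1/2)\to 0$ in a single stroke, whereas you use the \emph{tail-probability} form, which requires you to work out the matrix variance parameter explicitly and to pick $t \asymp \sqrt{k_n\log(k_n)/n}$ with a sufficiently large implied constant. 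The expectation-plus-Markov route is marginally cleaner since it sidesteps tuning the constant in $t$, but your variance computation (using that $pp^\top$ is rank one so $\|pp^\top\|_{op}=\|p\|_2^2$, hence $\|\mathbb{E}[(pp^\top-I)^2]\|_{op}\lesssim k_n$) is the same calculation the expectation bound relies on under the hood, so the content is identical. You also spell out the union bound over the three events comprising $\bar{\ddot{1}}$ and the appeal to Assumption \ref{assu:(Regularity-2)} for the positive-definiteness piece, which the paper leaves implicit — a nice bit of added clarity.
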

\begin{proof}
As in \citeauthor{Belloni2015-tn} (\citeyear{Belloni2015-tn}; see
their page 358), note that $\lambda_{\text{min}}\left(\hat{\Sigma}\right)<1/2$
only if there exists $u$ such that $||u||_{2}^{2}=1$ and 
\[
1/2<|u^{\top}\hat{\Sigma}u-1|=|u^{\top}\left(\hat{\Sigma}-I\right)u|\leq\lambda_{\text{max}}(\hat{\Sigma}-I).
\]
We can use the the Matrix Bernstein Inequality (MBI; Sections 1.6.2-1.6.3
of \citealp{Tropp2015-dx}; see also Lemma 6.2 \citealp{Belloni2015-tn},
and \citealp{Rudelson1999-gn}) to limit the probability that $\frac{1}{2}<\lambda_{\text{max}}(\hat{\Sigma}-I)$
occurs. The MBI states that
\[
\mathbb{E}\left(\lambda_{\text{max}}(\hat{\Sigma}-I)\right)\lesssim\frac{k_{n}\log(2k_{n})}{n}+\sqrt{\frac{\lambda_{\text{max}}(I)k_{n}\log(2k_{n})}{n}}\lesssim\sqrt{\frac{k_{n}\log k_{n}}{n}}.
\]

Markov's Inequality and Assumption (\ref{assu:(Matrix-Bernstein-Inequality)})
then tells us that $\lambda_{\text{max}}(\hat{\Sigma}-I)\lesssim_{\mathbb{P}}\sqrt{\left\{ k_{n}\log\left(k_{n}\right)\right\} /n}$,
and
\[
\text{Pr}\left(\frac{1}{2}<\lambda_{\text{max}}(\hat{\Sigma}-I)\right)\lesssim2\sqrt{\frac{k_{n}\log k_{n}}{n}}\rightarrow0.
\]
Thus, $\text{Pr}\left(\frac{1}{2}<\lambda_{\text{min}}(\hat{\Sigma})\right)\rightarrow1$.
The same steps show that $\text{Pr}\left(\frac{1}{2}<\lambda_{\text{min}}(\tilde{\Sigma})\right)$
and $\text{Pr}\left(\frac{1}{2}<\lambda_{\text{min}}(\bar{\mathbf{B}})\right)\rightarrow1$.

It remains to show that $\text{Pr}\left(\lambda_{\text{max}}(\bar{\Sigma})\leq3/2\right)\rightarrow1$.
Following similar steps, we see that $\lambda_{\text{max}}(\bar{\Sigma})>3/2$
only if there exists $u$ such that $||u||_{2}^{2}=1$ and 
\[
1/2<|u^{\top}\bar{\Sigma}u-1|=|u^{\top}\left(\bar{\Sigma}-I\right)u|\leq\lambda_{\text{max}}(\bar{\Sigma}-I).
\]
The same steps as above show that $\text{Pr}\left(\frac{1}{2}<\lambda_{\text{max}}(\bar{\Sigma}-I)\right)\rightarrow0.$ 
\end{proof}
We next show a related fact about the moments of $q$, which is useful
in showing Assumption \ref{assu:vvt}
\begin{lem}
\label{lem:similar-dist}Under Assumptions \ref{assu:standardization}
\& \ref{assu:(Similar-distributions)-}, $\lambda_{\max}\left\{ \mathbb{E}\left(q(X)q(X)^{\top}\right)\right\} \lesssim1.$
\end{lem}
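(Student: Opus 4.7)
The plan is to use Assumption \ref{assu:standardization}, which gives $\mathbb{E}(p(A,X)p(A,X)^\top) = I$, together with the structural fact that $p(A,X) = q(X) J(A)$ with $J \in \{0,1\}$. Because $J$ is an idempotent scalar, $pp^\top = q q^\top J^2 = q q^\top J$. Taking expectations and iterating over $X$, we get
\begin{equation*}
I = \mathbb{E}\bigl[q(X) q(X)^\top J\bigr] = \mathbb{E}\bigl[q(X) q(X)^\top\, \text{Pr}(J=1 \mid X)\bigr].
\end{equation*}
This identity relates the second moment of $p$ to a $\text{Pr}(J=1 \mid X)$-weighted second moment of $q$.

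Next, I would invoke Assumption \ref{assu:(Similar-distributions)-}, which guarantees $\text{Pr}(J=1 \mid X) \geq k > 0$. For any unit vector $u \in \mathbb{R}^{k_n}$,
\begin{equation*}
u^\top \mathbb{E}\bigl[q q^\top\bigr] u = \mathbb{E}\bigl[(u^\top q)^2\bigr] \leq \frac{1}{k}\, \mathbb{E}\bigl[(u^\top q)^2\, \text{Pr}(J=1\mid X)\bigr] = \frac{1}{k}\, u^\top I\, u = \frac{1}{k}.
\end{equation*}
Taking the supremum over $u$ with $\|u\|_2 = 1$ yields $\lambda_{\max}\{\mathbb{E}(qq^\top)\} \leq 1/k \lesssim 1$, which is the claim.

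There is no real obstacle here; the result is essentially a one-line consequence of positivity combined with the normalization $\mathbb{E}(pp^\top)=I$. The only subtlety worth flagging is the use of $J^2 = J$, which is what allows the positivity bound on $\text{Pr}(J=1 \mid X)$ to transfer from a weighted second moment of $q$ to the unweighted second moment.
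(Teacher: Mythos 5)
Your proof is correct and follows essentially the same approach as the paper: both combine the standardization $\mathbb{E}(pp^\top)=I$ with the positivity bound $\text{Pr}(J=1\mid X)\geq k$ from Assumption \ref{assu:(Similar-distributions)-}. The paper phrases the step as inserting $\mathbb{E}(J\mid X)/\mathbb{E}(J\mid X)$ and bounding $\lambda_{\max}$ directly via the p.s.d.\ ordering, while you make the same comparison explicit through the quadratic form $u^\top\mathbb{E}(qq^\top)u$, but the underlying argument is identical.
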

Thus, for all of the estimands in Table \ref{tab:Estimands-as-linear},
Assumption \ref{assu:vvt} follows from Lemma \ref{lem:similar-dist},
Assumption \ref{assu:(Similar-distributions)-} \& Assumption \ref{assu:standardization}.
\begin{proof}
We have
\begin{align*}
\lambda_{\max}\left\{ \mathbb{E}\left(q(X)q(X)^{\top}\right)\right\}  & =\lambda_{\max}\left\{ \mathbb{E}\left(q(X)\frac{\mathbb{E}\left(J|X\right)}{\mathbb{E}\left(J|X\right)}q(X)^{\top}\right)\right\} \\
 & =\lambda_{\max}\left\{ \mathbb{E}\left(q(X)\frac{J}{\mathbb{E}\left(J|X\right)}q(X)^{\top}\right)\right\} \\
 & \lesssim\lambda_{\max}\left\{ \mathbb{E}\left(q(X)Jq(X)^{\top}\right)\right\}  &  & \text{Assm \ref{assu:(Similar-distributions)-}}\\
 & =1 &  & \text{Assm \ref{assu:standardization}}.
\end{align*}
\end{proof}
We next introduce notation to help study the differences $\hat{\gamma}(X)-\gamma^{\star}(X)$
and $\tilde{\alpha}(X)-\alpha^{\star}(X)$. Our strategy here is almost
identical to that of \citeauthor{Newey2018-da}' Lemma A2, although
we include it here for completeness. Let $\delta_{\gamma}^{\star}:=\mathbb{E}\left(p\gamma_{0}\right)$
and $\delta_{\alpha}^{\star}:=\mathbb{E}\left(p\alpha_{0}\right)$,
so that $\gamma^{\star}(x)=q(x)^{\top}\delta_{\gamma}^{\star}$ and
$\alpha^{\star}(x)=q(x)^{\top}\delta_{\alpha}^{\star}$. Let $\hat{\delta}_{\gamma}=\hat{\Sigma}^{-1}\hat{\mathbb{P}}_{n}\left(pY\right)$
and $\tilde{\delta}_{\alpha}=\tilde{\Sigma}^{-1}\hat{\mathbb{P}}_{n}\left(v\right)$,
so that $\hat{\gamma}(x)=q(x)^{\top}\hat{\delta}_{\gamma}$ and $\tilde{\alpha}(x)=q(x)^{\top}\hat{\delta}_{\alpha}$.
Let $\bar{p}_{i}:=p(\bar{A}_{i},\bar{X}_{i}),$ and let
\begin{align}
\hat{p}_{i} & :=p(\hat{A}_{i},\hat{X}_{i}), & \hat{H}_{\gamma i}:= & \hat{p}_{i}\left\{ \hat{Y}_{i}-\gamma_{0}(\hat{X}_{i})\right\} , & \hat{H}_{\gamma\bcd}:= & \frac{1}{n}\sum_{i=1}^{n}\hat{H}_{\gamma i}, & \hat{\Delta}_{\gamma} & :=\hat{\Sigma}^{-1}\hat{H}_{\gamma\bcd},\nonumber \\
 &  & \hat{H}_{\gamma i}^{\star}:= & \hat{p}_{i}\left\{ \gamma_{0}(\hat{X}_{i})-\gamma^{\star}(\hat{X}_{i})\right\} , & \hat{H}_{\gamma\bcd}^{\star}:= & \frac{1}{n}\sum_{i=1}^{n}\hat{H}_{\gamma i}^{\star}, & \hat{\Delta}_{\gamma}^{\star} & :=\hat{\Sigma}^{-1}\hat{H}_{\gamma\bcd}^{\star},\nonumber \\
\tilde{p}_{i} & :=p(\tilde{A}_{i},\tilde{X}_{i}), & \tilde{H}_{\alpha i}:= & v_{q}(\tilde{Z}_{i})-\tilde{p}_{i}\alpha_{0}(\tilde{X}_{i}), & \tilde{H}_{\alpha\bcd}:= & \frac{1}{n}\sum_{i=1}^{n}\tilde{H}_{\alpha i}, & \tilde{\Delta}_{\alpha} & :=\tilde{\Sigma}^{-1}\tilde{H}_{\alpha\bcd},\nonumber \\
 &  & \tilde{H}_{\alpha i}^{\star}:= & \tilde{p}_{i}\left\{ \alpha_{0}(\tilde{X}_{i})-\alpha^{\star}(\tilde{X}_{i})\right\} , & \tilde{H}_{\alpha\bcd}^{\star}:= & \frac{1}{n}\sum_{i=1}^{n}\tilde{H}_{\alpha i}^{\star},\text{ and} & \tilde{\Delta}_{\alpha}^{\star} & :=\tilde{\Sigma}^{-1}\tilde{H}_{\alpha\bcd}^{\star}.\label{eq:all_my_Hs}
\end{align}
Let $\hat{\mathbf{J}},\tilde{\mathbf{J}}$ and $\bar{\mathbf{J}}$
be the vectors containing the values of $J$ in the datasets \textbf{$\hat{\mathbf{Z}},\tilde{\mathbf{Z}}$}
and $\bar{\mathbf{Z}}$ respectively. 

As in Eq (7.3) of \citet{Newey2018-da}, we leverage the fact that
\begin{align}
\hat{\Delta}_{\gamma}+\hat{\Delta}_{\gamma}^{\star} & =\hat{\Sigma}^{-1}\hat{\mathbb{P}}_{n}\left[p\times\left\{ Y-\gamma^{\star}\right\} \right]\nonumber \\
 & =\hat{\Sigma}^{-1}\hat{\mathbb{P}}_{n}\left[p\times\left\{ Y-p^{\top}\delta_{\gamma}^{\star}\right\} \right]\nonumber \\
 & =\hat{\Sigma}^{-1}\hat{\mathbb{P}}_{n}\left(pY\right)-\hat{\Sigma}^{-1}\hat{\mathbb{P}}_{n}\left(pp^{\top}\right)\delta_{\gamma}^{\star}\nonumber \\
 & =\hat{\delta}_{\gamma}-\delta_{\gamma}^{\star},\label{eq:Delta-sum-gam}
\end{align}
and, similarly, 
\begin{equation}
\tilde{\Delta}_{\alpha}+\tilde{\Delta}_{\alpha}^{\star}=\tilde{\delta}_{\alpha}-\delta_{\alpha}^{\star}.\label{eq:Delta-sum-alpha}
\end{equation}
We will see that studying the left-hand sides of Eqs (\ref{eq:Delta-sum-gam})
\& (\ref{eq:Delta-sum-alpha}) provides a tractable means of studying
s $\hat{\gamma}-\gamma^{\star}$ and $\tilde{\alpha}-\alpha^{\star}$.

First, we note $\hat{H}_{\gamma i},\hat{H}_{\gamma i}^{\star},\tilde{H}_{\alpha i},$
and $\tilde{H}_{\alpha i}^{\star}$ all have mean zero.
\begin{lem}
\label{lem:H-moments}Under Assumptions \ref{assu:lin-func} \& \ref{assu:series-estimates},
$\mathbb{E}\left(\hat{H}_{\gamma i}\right)=\mathbb{E}\left(\hat{H}_{\gamma i}^{\star}\right)=\mathbb{E}\left(\tilde{H}_{\alpha i}\right)=\mathbb{E}\left(\tilde{H}_{\alpha i}^{\star}\right)=0$. 
\end{lem}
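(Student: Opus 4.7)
The plan is to verify the four mean-zero claims one at a time, exploiting in each case the defining property of the relevant object ($\gamma_0$, $v_q$, or the population projection).

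First I would dispose of $\mathbb{E}(\hat H_{\gamma i})$. Write $\hat p_i = \hat J_i\, q(\hat X_i)$ and condition on $\hat X_i$. Since $J$ is a deterministic function of $A$ (Assumption \ref{assu:lin-func}) and $\gamma_0(x) = \mathbb{E}[Y\mid X=x, J=1]$, I get $\mathbb{E}[\hat J_i(\hat Y_i - \gamma_0(\hat X_i))\mid \hat X_i] = \Pr(J=1\mid \hat X_i)\gamma_0(\hat X_i) - \Pr(J=1\mid \hat X_i)\gamma_0(\hat X_i) = 0$, so iterated expectation finishes this case.

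Next, for $\mathbb{E}(\tilde H_{\alpha i})$ I would invoke the linearity property of Assumption \ref{assu:lin-func} directly. The $j$-th component of $v_q(Z)$ is $m(Z,q_j) - m(Z,\gamma^{\text{zero}})$, so Eq (\ref{eq:linear-form}) gives $\mathbb{E}[v_q(Z)_j] = \mathbb{E}[\alpha_0(X) J q_j(X)]$. Stacking the coordinates yields $\mathbb{E}[v_q(Z)] = \mathbb{E}[\alpha_0(X) p(J,X)] = \mathbb{E}[\tilde p_i \alpha_0(\tilde X_i)]$, so subtraction gives zero.

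For the two starred terms $\hat H_{\gamma i}^\star$ and $\tilde H_{\alpha i}^\star$ I would use the observation that, because $J\in\{0,1\}$, we have $p p^\top = J^2 q q^\top = J q q^\top = p q^\top$. Hence
\begin{align*}
\mathbb{E}[\hat p_i\, \gamma^\star(\hat X_i)] &= \mathbb{E}[p q^\top]\,\mathbb{E}(pp^\top)^{-1}\mathbb{E}(p\gamma_0) = \mathbb{E}(p\gamma_0) = \mathbb{E}[\hat p_i\,\gamma_0(\hat X_i)],
\end{align*}
which shows $\mathbb{E}(\hat H_{\gamma i}^\star) = 0$; the identical argument with $\alpha_0$ in place of $\gamma_0$ handles $\tilde H_{\alpha i}^\star$.

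None of the steps should be hard: the only mildly subtle observation is the identity $pq^\top = pp^\top$, which collapses the projection residuals into genuine $L^2(\mathbb{P})$ orthogonalities and is what makes the starred terms centered. Everything else is bookkeeping with iterated expectations and Assumption \ref{assu:lin-func}.
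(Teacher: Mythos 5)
Your proof is correct and follows essentially the same route as the paper: iterated expectations for $\hat H_{\gamma i}$, the linearity identity in Assumption \ref{assu:lin-func} for $\tilde H_{\alpha i}$, and the $L^2$ projection identity for the two starred terms. The only cosmetic difference is that for the starred terms you carry $\mathbb{E}(pp^\top)^{-1}$ explicitly and close via $pq^\top = pp^\top$ (from $J^2 = J$), whereas the paper invokes the standardization $\mathbb{E}(pp^\top) = I$ from Assumption \ref{assu:standardization} to read $\delta_\gamma^\star$ directly as the projection coefficients; the underlying cancellation is the same.
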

\begin{proof}
We can see that $\mathbb{E}\left(\hat{H}_{\gamma i}\right)=0$ from
iterated expectations, and $\mathbb{E}\left(\tilde{H}_{\alpha i}\right)=0$
from Assumption \ref{assu:lin-func}. Next,
\begin{align}
\mathbb{E}\left(\hat{H}_{\gamma i}^{\star}\right) & =\mathbb{E}\left[\mred{p(A,X)}\left\{ \gamma_{0}(X)-\mred{\gamma^{\star}(X)}\right\} \right]\nonumber \\
 & =\mathbb{E}\left[\mred{q(X)J}\left\{ \gamma_{0}(X)-\mred{q(X)^{\top}\delta_{\gamma}^{\star}}\right\} \right]\nonumber \\
 & =\mathbb{E}\left[q(X)J\gamma_{0}(X)\right]-\mathbb{E}\left[q(X)Jq(X)^{\top}\right]\delta_{\gamma}^{\star}\nonumber \\
 & =\delta_{\gamma}^{\star}-I\delta_{\gamma}^{\star}\\
 & =0.\nonumber 
\end{align}
The same steps show $\mathbb{E}\left(\tilde{H}_{\alpha i}^{\star}\right)=0$.
\end{proof}
Using this fact, we can study the terms $\hat{\Delta}_{\gamma},\hat{\Delta}_{\gamma}^{\star},\tilde{\Delta}_{\alpha},$
and $\tilde{\Delta}_{\alpha}^{\star}$. 
\begin{lem}
(Delta-values)\label{lem:(Delta-values)} Under Assumptions \ref{assu:lin-func}-\ref{assu:m-convergence}\textcolor{purple}{}
we have:
\begin{enumerate}
\item \label{enu:D-gam}$\mathbb{E}\left(\hat{1}\norm{\hat{\Delta}_{\gamma}}^{2}\right)\lesssim k_{n}/n.$
\item \label{enu:D-alpha}$\mathbb{E}\left(\tilde{1}\norm{\tilde{\Delta}_{\alpha}}^{2}\right)\lesssim k_{n}/n.$
\item \label{enu:D-gam-star}$\mathbb{E}\left(\hat{1}\norm{\hat{\Delta}_{\gamma}^{\star}}^{2}\right)\lesssim k_{n}^{-2s_{\gamma}/d_{X}+1}/n.$
\item \label{enu:D-alpha-star}$\mathbb{E}\left(\tilde{1}\norm{\tilde{\Delta}_{\alpha}^{\star}}^{2}\right)\lesssim k_{n}^{-2s_{\alpha}/d_{X}+1}/n.$
\item \label{enu:DDhat}$\mathbb{E}\left(\hat{1}\hat{\Delta}_{\gamma}\hat{\Delta}_{\gamma}^{\top}|\hat{\mathbf{J}},\hat{\mathbf{X}}\right)\lesssim_{\text{p.s.d.}}\hat{1}n^{-1}\hat{\Sigma}^{-1}$,
where $A\lesssim_{\text{p.s.d.}}B$ indicates that there exists $c$
such that $cB-A$ is positive semi-definite (p.s.d.).
\item \label{enu:DDtilde}$\lambda_{\max}\left\{ \mathbb{E}\left(\tilde{H}_{\alpha\bcd}\tilde{H}_{\alpha\bcd}^{\top}\right)\right\} \lesssim1/n$.
\item \label{enu:ddiff-gam}$\mathbb{E}\left(\hat{1}||\hat{\delta}_{\gamma}-\delta_{\gamma}^{\star}||_{2}^{2}\right)\lesssim k_{n}/n$.
\item \label{enu:ddiff-alpha}$\mathbb{E}\left(\tilde{1}||\tilde{\delta}_{\alpha}-\delta_{\alpha}^{\star}||_{2}^{2}\right)\lesssim k_{n}/n$.
\item \label{enu:gam-hat-star}If $\hat{1},\hat{\delta}_{\gamma}\perp\bar{X}$,
then $\mathbb{E}\left\{ \hat{1}(\hat{\gamma}(\bar{X})-\gamma^{\star}(\bar{X}))^{2}\right\} \lesssim k_{n}/n$.
\item \label{enu:alpha-tilde-star}If $\tilde{1},\tilde{\delta}_{\alpha}\perp\bar{X},$
then $\mathbb{E}\left\{ \tilde{1}(\tilde{\alpha}(\bar{X})-\alpha^{\star}(\bar{X}))^{2}\right\} \lesssim k_{n}/n$.
\end{enumerate}
\end{lem}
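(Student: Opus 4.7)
The plan is to exploit four ingredients throughout: (i) by Lemma~\ref{lem:H-moments}, each of the four $H$-terms is mean zero, so the iid structure gives $\mathbb{E}\,\|H_{\bcd}\|_2^2 = n^{-1}\mathbb{E}\,\|H_1\|_2^2$ and an analogous outer-product identity; (ii) on the indicator $\hat{1}$ (resp.\ $\tilde{1}$), the spectral norms $\|\hat{\Sigma}^{-1}\|_{\text{op}}$ and $\|\tilde{\Sigma}^{-1}\|_{\text{op}}$ are bounded by $2$; (iii) $\|q(x)\|_2^2 \lesssim k_n$ pointwise and $\mathbb{E}\,\|v_q\|_2^2 \lesssim k_n$ by Assumption~\ref{assu:standardization}; and (iv) the uniform misspecification bounds $\sup_x |\gamma^\star - \gamma_0| \lesssim k_n^{-s_\gamma/d_X}$, $\sup_x |\alpha^\star - \alpha_0| \lesssim k_n^{-s_\alpha/d_X}$ from Lemma~\ref{lem:spline-error}.

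For Parts~\ref{enu:D-gam}--\ref{enu:D-alpha-star}, the master computation is $\hat{1}\,\|\hat{\Delta}_\gamma\|_2^2 \leq 4\,\|\hat{H}_{\gamma\bcd}\|_2^2$, whose expectation is $4n^{-1}\mathbb{E}\,\|\hat{H}_{\gamma 1}\|_2^2 \leq 4n^{-1} \sup_{a,x}\|p(a,x)\|_2^2 \cdot \mathbb{E}[(Y-\gamma_0(X))^2]$, which is $\lesssim k_n/n$ by (iii) and the bounded conditional second moment of $Y$ (Assumption~\ref{assu:Regularity}), giving Part~\ref{enu:D-gam}. For Part~\ref{enu:D-alpha}, the same template applies after splitting $\|\tilde{H}_{\alpha 1}\|_2^2 \leq 2\|v_q\|_2^2 + 2\,\alpha_0^2\|\tilde{p}_1\|_2^2$, then using $\mathbb{E}\|v_q\|_2^2 \lesssim k_n$ together with $\mathbb{E}[\alpha_0^2 \|p\|_2^2] = \mathbb{E}[J\alpha_0(X)^2 \|q\|_2^2] \lesssim \mathbb{E}\|p\|_2^2 = \text{tr}(I) = k_n$ via bounded $J\alpha_0$. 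For Parts~\ref{enu:D-gam-star}--\ref{enu:D-alpha-star}, the spline residual factors $(\gamma_0 - \gamma^\star)^2$ and $(\alpha_0 - \alpha^\star)^2$ contribute uniform factors $k_n^{-2s_\gamma/d_X}$ and $k_n^{-2s_\alpha/d_X}$ by (iv), producing $k_n^{1-2s_\gamma/d_X}/n$ and $k_n^{1-2s_\alpha/d_X}/n$.

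Part~\ref{enu:DDhat} requires a conditional argument, since $\hat{\Sigma}^{-1}$ is random. Given $(\hat{\mathbf{J}},\hat{\mathbf{X}})$, the $\hat{p}_i$ are fixed and $\mathbb{E}[\hat{H}_{\gamma i} \mid \hat{J}_i,\hat{X}_i] = 0$ (the factor $\hat{J}_i$ kills the term unless $\hat{J}_i=1$, in which case $\gamma_0 = \mathbb{E}[Y \mid X, J=1]$), so conditional independence across $i$ yields $\mathbb{E}[\hat{H}_{\gamma\bcd}\hat{H}_{\gamma\bcd}^\top \mid \hat{\mathbf{J}},\hat{\mathbf{X}}] = n^{-2}\sum_i \hat{p}_i\hat{p}_i^\top\,\mathbb{E}[(Y-\gamma_0)^2 \mid \hat{J}_i,\hat{X}_i] \lesssim_{\text{p.s.d.}} n^{-1}\hat{\Sigma}$ via Assumption~\ref{assu:Regularity}; sandwiching by $\hat{\Sigma}^{-1}$ gives the claim. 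Part~\ref{enu:DDtilde} is unconditional: iid mean-zero structure gives $\mathbb{E}[\tilde{H}_{\alpha\bcd}\tilde{H}_{\alpha\bcd}^\top] = n^{-1}\mathbb{E}[\tilde{H}_{\alpha 1}\tilde{H}_{\alpha 1}^\top]$, and by $\|X\!+\!Y\|^2 \leq 2\|X\|^2 + 2\|Y\|^2$ in the operator sense one bounds $\lambda_{\max}\bigl(\mathbb{E}[\tilde{H}_{\alpha 1}\tilde{H}_{\alpha 1}^\top]\bigr) \lesssim \lambda_{\max}\mathbb{E}[v_q v_q^\top] + \lambda_{\max}\mathbb{E}[pp^\top\alpha_0^2] \lesssim 1$ by Assumption~\ref{assu:vvt} and by $\mathbb{E}(pp^\top) = I$ plus bounded $J\alpha_0$.

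Finally, Parts~\ref{enu:ddiff-gam}--\ref{enu:ddiff-alpha} follow from the identities $\hat{\delta}_\gamma - \delta_\gamma^\star = \hat{\Delta}_\gamma + \hat{\Delta}_\gamma^\star$ (Eq.~\ref{eq:Delta-sum-gam}) and its analog (Eq.~\ref{eq:Delta-sum-alpha}) via $\|a+b\|_2^2 \leq 2\|a\|_2^2 + 2\|b\|_2^2$ and Parts~\ref{enu:D-gam}--\ref{enu:D-alpha-star}, noting $k_n^{-2s_\gamma/d_X} \leq 1$. For Parts~\ref{enu:gam-hat-star}--\ref{enu:alpha-tilde-star}, write the pointwise squared error as a quadratic form $(\hat{\delta}_\gamma - \delta_\gamma^\star)^\top q(\bar{X})q(\bar{X})^\top (\hat{\delta}_\gamma - \delta_\gamma^\star)$; conditioning on $(\hat{1}, \hat{\delta}_\gamma)$ and using $\hat{1},\hat{\delta}_\gamma \perp \bar{X}$ replaces the random rank-one matrix with $\mathbb{E}[q(\bar{X})q(\bar{X})^\top]$, whose operator norm is $\lesssim 1$ by Lemma~\ref{lem:similar-dist}. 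The claims then follow from Parts~\ref{enu:ddiff-gam}--\ref{enu:ddiff-alpha}. The main subtlety throughout is Part~\ref{enu:DDhat}: the data-dependent inverse $\hat{\Sigma}^{-1}$ must be frozen via the correct conditioning so that the mean-zero, conditional-independence structure of $\hat{H}_{\gamma i}$ can collapse the cross terms.
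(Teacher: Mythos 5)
Your proof is correct and follows essentially the same route as the paper: the iid mean-zero identity $\mathbb{E}\,\|\hat{\mathbb{P}}_n(V)\|_2^2 = n^{-1}\mathbb{E}\,\|V_1\|_2^2$ combined with the eigenvalue bound on $\hat{\Sigma}^{-1}$ under the indicator for Parts~\ref{enu:D-gam}--\ref{enu:D-alpha-star}; the conditional-on-$(\hat{\mathbf{J}},\hat{\mathbf{X}})$ sandwich argument for Part~\ref{enu:DDhat}; the operator-norm version of the triangle inequality with Assumptions~\ref{assu:standardization}, \ref{assu:Regularity} and \ref{assu:vvt} for Part~\ref{enu:DDtilde}; the decomposition $\hat{\delta}_\gamma-\delta_\gamma^\star=\hat{\Delta}_\gamma+\hat{\Delta}_\gamma^\star$ for Parts~\ref{enu:ddiff-gam}--\ref{enu:ddiff-alpha}; and Lemma~\ref{lem:similar-dist} applied to the quadratic form for Parts~\ref{enu:gam-hat-star}--\ref{enu:alpha-tilde-star}. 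The only cosmetic difference is that in Part~\ref{enu:D-gam} you split $\mathbb{E}[\|p\|_2^2(Y-\gamma_0)^2]$ into $\sup\|p\|_2^2 \cdot \mathbb{E}[(Y-\gamma_0)^2]$ rather than conditioning on $(A,X)$ as the paper implicitly does, but both yield $\lesssim k_n/n$ from the same assumptions.
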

\begin{proof}
A fact we use for Points \ref{enu:D-gam}, \ref{enu:D-alpha}, \ref{enu:D-gam-star},
\& \ref{enu:D-alpha-star} is that, for any $iid$ random vector $V$
with mean zero, we have 
\begin{equation}
\mathbb{E}\left[\hat{\mathbb{P}}_{n}\left(V\right)^{\top}\hat{\mathbb{P}}_{n}\left(V\right)\right]=\frac{1}{n^{2}}\sum_{i,j}^{n}\mathbb{E}\left(\hat{V}_{i}^{\top}\hat{V}_{j}\right)=\frac{1}{n^{2}}\sum_{i=1}^{n}\mathbb{E}\left(\hat{V}_{i}^{\top}\hat{V}_{i}\right)=\frac{1}{n}\mathbb{E}\left(\hat{V}_{i}^{\top}\hat{V}_{i}\right),\label{eq:vtv}
\end{equation}
and, similarly,
\begin{equation}
\mathbb{E}\left[\hat{\mathbb{P}}_{n}\left(V\right)\hat{\mathbb{P}}_{n}\left(V\right)^{\top}\right]=\frac{1}{n}\mathbb{E}\left(\hat{V}_{i}\hat{V}_{i}^{\top}\right).\label{eq:vvtsum}
\end{equation}
Next, applying Line \ref{eq:vtv} \& Lemma \ref{lem:H-moments},
\begin{align*}
\mathbb{E}\left[\hat{1}\left\vert \left\vert \hat{\Delta}_{\gamma}\right\vert \right\vert _{2}^{2}\right]\leq\mathbb{E}\left[\hat{1}\hat{H}_{\gamma\bcd}^{\top}\left(\hat{\Sigma}^{-}\right)^{2}\hat{H}_{\gamma\bcd}\right] & \leq4\mathbb{E}\left[\hat{H}_{\gamma\bcd}^{\top}\hat{H}_{\gamma\bcd}\right]\\
 & =\frac{4}{n}\mathbb{E}\left[\hat{H}_{\gamma i}^{\top}\hat{H}_{\gamma i}\right] &  & \text{Eq \ref{eq:vtv}}\\
 & =\frac{4}{n}\mathbb{E}\left[||p||_{2}^{2}(Y-\gamma_{0})^{2}\right]\\
 & \lesssim\frac{k_{n}}{n}. &  & \text{Assms \ref{assu:Regularity} \& \ref{assu:standardization}.}
\end{align*}

\textcolor{red}{}Similar steps show Point \ref{enu:D-alpha}. Specifically,
Eq \ref{eq:vtv} \& Assumptions \ref{assu:standardization} \& \ref{assu:Regularity}
again show that

\[
\mathbb{E}\left(\hat{1}\left\vert \left\vert \hat{\Delta}_{\gamma}^{\star}\right\vert \right\vert _{2}^{2}\right)\leq\frac{4}{n}\mathbb{E}\left(\tilde{H}_{\alpha i}^{\top}\tilde{H}_{\alpha i}\right)\leq\frac{8}{n}\left\{ \mathbb{E}\left(||v_{q}||_{2}^{2}\right)+2\mathbb{E}\left(||p||_{2}^{2}\alpha_{0}^{2}\right)\right\} \lesssim\frac{k_{n}}{n}.
\]

For Point \ref{enu:D-gam-star}, Eq (\ref{eq:vtv}) \& Assumptions
\ref{assu:standardization} show that\textcolor{red}{}
\begin{align*}
\mathbb{E}\left(\hat{1}\left\vert \left\vert \hat{\Delta}_{\gamma}^{\star}\right\vert \right\vert _{2}^{2}\right) & \leq\frac{4}{n}\mathbb{E}\left(\hat{H}_{\gamma i}^{\star^{\top}}\hat{H}_{\gamma i}^{\star}\right)=\frac{4}{n}\mathbb{E}\left\{ ||p||_{2}^{2}(\gamma_{0}-\gamma^{\star})^{2}\right\} \lesssim\frac{k_{n}^{-2s_{\gamma}/d_{X}+1}}{n}.
\end{align*}
where the last $\lesssim$ comes from Assumptions \ref{assu:series-estimates}
\& \ref{assu:standardization}.

Similarly, for Point \ref{enu:D-gam-star}, Eq (\ref{eq:vtv}) \&
Assumptions \ref{assu:standardization} show that \textcolor{red}{}

\begin{align*}
\mathbb{E}\left(\tilde{1}\left\vert \left\vert \tilde{\Delta}_{\alpha}^{\star}\right\vert \right\vert _{2}^{2}\right) & \leq\frac{4}{n}\mathbb{E}\left(\tilde{H}_{\alpha i}^{\star^{\top}}\tilde{H}_{\alpha i}^{\star}\right)=\frac{4}{n}\mathbb{E}\left\{ ||p||_{2}^{2}(\alpha_{0}-\alpha^{\star})^{2}\right\} \lesssim\frac{k_{n}^{-2s_{\alpha}/d_{X}+1}}{n}.
\end{align*}

For Point \ref{enu:DDhat},
\begin{align*}
\mathbb{E}\left(\hat{1}\hat{\Delta}_{\gamma}\hat{\Delta}_{\gamma}^{\top}|\hat{\mathbf{J}},\hat{\mathbf{X}}\right) & =\hat{1}\hat{\Sigma}^{-1}\mathbb{E}\left(\hat{H}_{\gamma\bcd}\hat{H}_{\gamma\bcd}^{\top}|\hat{\mathbf{J}},\hat{\mathbf{X}}\right)\hat{\Sigma}^{-1}\\
 & =\hat{1}\hat{\Sigma}^{-1}\left\{ \frac{1}{n^{2}}\sum_{i=1}^{n}\mathbb{E}\left(\hat{H}_{\gamma i}\hat{H}_{\gamma i}^{\top}|\hat{\mathbf{J}},\hat{\mathbf{X}}\right)\right\} \hat{\Sigma}^{-1} &  & \text{Eq \ref{eq:vvtsum}}\\
 & =\hat{1}\hat{\Sigma}^{-1}\left[\frac{1}{n^{2}}\sum_{i=1}^{n}\hat{p_{i}}\mathbb{E}\left\{ (Y-\gamma_{0})^{2}|\hat{\mathbf{J}},\hat{\mathbf{X}}\right\} \hat{p_{i}}^{\top}\right]\hat{\Sigma}^{-1}\\
 & \lesssim_{\text{p.s.d.}}\hat{1}\hat{\Sigma}^{-1}\left[\frac{1}{n^{2}}\sum_{i=1}^{n}\hat{p_{i}}\hat{p}_{i}^{\top}\right]\hat{\Sigma}^{-1} &  & \text{Assm \ref{assu:Regularity}}\\
 & =\frac{\hat{1}}{n}\hat{\Sigma}^{-1}\hat{\Sigma}\hat{\Sigma}^{-1}\\
 & =\frac{\hat{1}}{n}\hat{\Sigma}^{-1}.
\end{align*}
Similarly, for Point \ref{enu:DDtilde},
\begin{align*}
\lambda_{\max}\left[\mathbb{E}\left(\tilde{H}_{\alpha\bcd}\tilde{H}_{\alpha\bcd}^{\top}\right)\right] & =\frac{1}{n}\lambda_{\max}\left[\mathbb{E}\left(\tilde{H}_{\alpha i}\tilde{H}_{\alpha i}^{\top}\right)\right] &  & \text{Eq \eqref{eq:vvtsum}}\\
 & =\frac{1}{n}\lambda_{\max}\left[\mathbb{E}\left\{ \left(v_{q}-\tilde{p}_{i}\alpha_{0}(\tilde{X}_{i})\right)\left(v_{q}-\tilde{p}_{i}\alpha_{0}(\tilde{X}_{i})\right)^{\top}\right\} \right]\\
 & \lesssim\frac{1}{n}\lambda_{\max}\left[\mathbb{E}\left\{ v_{q}v_{q}^{\top}\right\} \right]+\frac{1}{n}\lambda_{\max}\left[\mathbb{E}\left\{ \tilde{p}_{i}\alpha_{0}(\tilde{X}_{i})^{2}\tilde{p}_{i}^{\top}\right\} \right] &  & \text{Lemma \ref{lem:sq-of-sum}}\\
 & \lesssim\frac{1}{n}+\frac{1}{n}\lambda_{\max}\left[\mathbb{E}\left\{ \tilde{p}_{i}\tilde{p}_{i}^{\top}\right\} \right] &  & \text{Assms \ref{assu:vvt} \& \ref{assu:Regularity}.}\\
 & \lesssim1/n.
\end{align*}

Point \ref{enu:ddiff-gam} follows from Points \ref{enu:D-gam} \&
\ref{enu:D-gam-star} and Eq (\ref{eq:Delta-sum-gam}):
\[
\mathbb{E}\left(\hat{1}||\hat{\delta}_{\gamma}-\delta_{\gamma}^{\star}||_{2}^{2}\right)=\mathbb{E}\left(\hat{1}||\hat{\Delta}_{\gamma}-\hat{\Delta}_{\gamma}^{\star}||_{2}^{2}\right)\leq\mathbb{E}\left(\hat{1}||\hat{\Delta}_{\gamma}||_{2}^{2}\right)+\mathbb{E}\left(\hat{1}||\hat{\Delta}_{\gamma}^{\star}||_{2}^{2}\right)\lesssim k_{n}/n.
\]
Similarly, Point \ref{enu:ddiff-alpha} follows from Points \ref{enu:D-alpha}
\& \ref{enu:D-alpha-star} and Eq (\ref{eq:Delta-sum-alpha}):
\[
\mathbb{E}\left(\tilde{1}||\tilde{\delta}_{\alpha}-\delta_{\alpha}^{\star}||_{2}^{2}\right)=\mathbb{E}\left(\tilde{1}||\tilde{\Delta}_{\alpha}-\Delta_{\alpha}^{\star}||_{2}^{2}\right)\leq\mathbb{E}\left(\tilde{1}||\tilde{\Delta}_{\alpha}||_{2}^{2}\right)+\mathbb{E}\left(\tilde{1}||\Delta_{\alpha}^{\star}||_{2}^{2}\right)\lesssim k_{n}/n.
\]
For Points \ref{enu:gam-hat-star} \& \ref{enu:alpha-tilde-star},
\begin{align*}
\mathbb{E}\left\{ \hat{1}(\hat{\gamma}(\bar{X})-\gamma^{\star}(\bar{X}))^{2}\right\}  & =\mathbb{E}\left[\hat{1}\left(q(\bar{X})^{\top}(\hat{\delta}_{\gamma}-\delta_{\gamma}^{\star})\right)^{2}\right]\\
 & =\mathbb{E}\left[\hat{1}\left(\hat{\delta}_{\gamma}-\delta_{\gamma}^{\star}\right)^{\top}\mathbb{E}\left(q(\bar{X})q(\bar{X})^{\top}|\hat{\delta}_{\gamma},\hat{1}\right)\left(\hat{\delta}_{\gamma}-\delta_{\gamma}^{\star}\right)\right]\\
 & \lesssim\mathbb{E}\left[\hat{1}||\hat{\delta}_{\gamma}-\delta_{\gamma}^{\star}||_{2}^{2}{}^{\top}\right] &  & \text{Lemma \ref{lem:similar-dist}}\\
 & \lesssim k_{n}/n,
\end{align*}
and, following the same steps,
\begin{align*}
\mathbb{E}\left\{ \tilde{1}(\tilde{\alpha}(\bar{X})-\alpha^{\star}(\bar{X}))^{2}\right\}  & =\mathbb{E}\left\{ \tilde{1}\left(q(\bar{X})^{\top}(\tilde{\delta}_{\alpha}-\delta_{\alpha}^{\star})\right)^{2}\right\} \lesssim k_{n}/n.
\end{align*}
\end{proof}
Next, we introduce a bounds on $\bar{\ddot{w}}(\bar{C}_{i})$ and
its conditional moments. 
\begin{lem}
\label{lem:w-bounds}Under Assumptions \ref{assu:series-estimates},
\ref{assu:(Approximately-uniform-),}, \& \ref{assu:standardization},
for any $m\geq1$, we have\textcolor{teal}{}
\begin{enumerate}
\item $\bar{\ddot{1}}|\bar{\ddot{w}}(\bar{C}_{i})|\lesssim r_{n}K_{b}(\ddot{C},\bar{C}_{i})$;
\label{enu:max-val-w}
\item $\mathbb{E}\left[\bar{\ddot{1}}\left|\bar{\ddot{w}}(\bar{C}_{i})\right|^{m}|\ddot{C}\right]\lesssim r_{n}^{m-1}$;
\label{enu:w|ddot}
\item $\mathbb{E}\left[\bar{\ddot{1}}\left|\bar{\ddot{w}}(\bar{C}_{i})\right|^{m}|\bar{C}_{i}\right]\lesssim r_{n}^{m-1}$;\label{enu:w|bar}
\item $\mathbb{E}\left[\bar{\ddot{1}}\left|\bar{\ddot{w}}(\bar{C}_{i})\bar{\ddot{w}}(\bar{C}_{j})\right||\ddot{C}\right]\lesssim1$
for any $i,j\in\{1,\dots,n\}$ such that $i\neq j$\label{enu:w-ij|ddot};
and
\item $\mathbb{E}\left[\bar{\ddot{1}}\left|\bar{\ddot{w}}(\bar{C}_{i})\bar{\ddot{w}}(\bar{C}_{j})\right||\bar{\mathbf{Z}}\right]\lesssim r_{n}$
for any $i,j\in\{1,\dots,n\}$ such that $i\neq j$.\label{enu:w-ij|bar}
\end{enumerate}
\end{lem}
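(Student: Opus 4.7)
The plan is to reduce Parts 2--5 to Part 1, so the main task is to prove the pointwise bound $\bar{\ddot{1}}|\bar{\ddot{w}}(\bar{C}_i)| \lesssim r_n K_b(\ddot{C}, \bar{C}_i)$. Recall that $\bar{\ddot{w}}(c) = b(\ddot{C})^\top \bar{\mathbf{B}}^{-1} b(c)$, where $\bar{\mathbf{B}} = \bar{\mathbb{P}}_n(b b^\top)$. Two ingredients combine to give Part 1. First, on the event $\bar{\ddot{1}}$ we have $\lambda_{\min}(\bar{\mathbf{B}}) \geq 1/2$, so $\|\bar{\mathbf{B}}^{-1}\|_{\mathrm{op}} \leq 2$; combined with $\|b(c)\|_2^2 \lesssim r_n$ from Assumption \ref{assu:standardization}, Cauchy--Schwarz yields $|\bar{\ddot{w}}(c)| \leq \|b(\ddot{C})\|_2 \|\bar{\mathbf{B}}^{-1}\|_{\mathrm{op}} \|b(c)\|_2 \lesssim r_n$. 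Second, because Assumption \ref{assu:series-estimates} specifies a polynomial spline basis whose neighborhoods partition the support, the basis functions have disjoint local support across neighborhoods, so both $\bar{\mathbf{B}}$ and $\bar{\mathbf{B}}^{-1}$ are block diagonal with one block per neighborhood. Consequently $b(\ddot{C})^\top \bar{\mathbf{B}}^{-1} b(c) = 0$ whenever $\ddot{C}$ and $c$ lie in different neighborhoods, i.e., whenever $K_b(\ddot{C}, c) = 0$. Multiplying the $r_n$ bound by the indicator $K_b(\ddot{C}, c)$ gives Part 1.

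For Parts 2 and 3, since $K_b \in \{0,1\}$, Part 1 gives $\bar{\ddot{1}} |\bar{\ddot{w}}(\bar{C}_i)|^m \lesssim r_n^m K_b(\ddot{C}, \bar{C}_i)$. Taking conditional expectation and applying Assumption \ref{assu:(Approximately-uniform-),} shows $\mathbb{E}[K_b(\ddot{C}, \bar{C}_i) \mid \ddot{C}] \lesssim r_n^{-1}$ and symmetrically $\mathbb{E}[K_b(\ddot{C}, \bar{C}_i) \mid \bar{C}_i] \lesssim r_n^{-1}$, so both conditional moments are $\lesssim r_n^{m-1}$.

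For Parts 4 and 5, applying Part 1 twice gives
\[
\bar{\ddot{1}} |\bar{\ddot{w}}(\bar{C}_i) \bar{\ddot{w}}(\bar{C}_j)| \lesssim r_n^2 \, K_b(\ddot{C}, \bar{C}_i) \, K_b(\ddot{C}, \bar{C}_j).
\]
For Part 4, condition on $\ddot{C}$: since $\ddot{C} \perp \bar{\mathbf{Z}}$ and $\bar{C}_i \perp \bar{C}_j$ for $i \neq j$, the two indicator factors are conditionally independent, each with conditional mean $\lesssim r_n^{-1}$ by Assumption \ref{assu:(Approximately-uniform-),}, giving $\lesssim r_n^2 \cdot r_n^{-2} = 1$. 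For Part 5, conditioning on $\bar{\mathbf{Z}}$ fixes $\bar{C}_i, \bar{C}_j$ and leaves only $\ddot{C}$ random; the trivial bound $K_b(\ddot{C}, \bar{C}_i) K_b(\ddot{C}, \bar{C}_j) \leq K_b(\ddot{C}, \bar{C}_i)$ combined with Assumption \ref{assu:(Approximately-uniform-),} then yields $\lesssim r_n^2 \cdot r_n^{-1} = r_n$.

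The principal obstacle is the local-support statement underlying Part 1: it relies on the polynomial spline basis partitioning the support into neighborhoods with disjoint basis support, so that $\bar{\mathbf{B}}$ (and hence $\bar{\mathbf{B}}^{-1}$) is block diagonal in the neighborhood structure. This is the conventional partitioned series setup of \citet{Newey2018-da} and is consistent with Assumption \ref{assu:series-estimates}, but it is worth stating explicitly before invoking the block structure. Once Part 1 is in hand, the remaining parts are immediate consequences of Assumption \ref{assu:(Approximately-uniform-),} and the independence structure of $\ddot{C}$ and $\bar{\mathbf{Z}}$.
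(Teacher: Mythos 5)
Your proof is correct and takes essentially the same approach as the paper's: the pointwise bound $\bar{\ddot{1}}|\bar{\ddot{w}}(\bar{C}_i)| \lesssim r_n K_b(\ddot{C},\bar{C}_i)$ via the vanishing of $b(\ddot{C})^\top\bar{\mathbf{B}}^{-1}b(\bar{C}_i)$ across neighborhoods plus Cauchy--Schwarz and the eigenvalue control from $\bar{\ddot{1}}$, followed by the same conditional-expectation and independence arguments for Parts 2--5. Your explicit remark that the block-diagonal structure of $\bar{\mathbf{B}}$ (hence $\bar{\mathbf{B}}^{-1}$) is what underlies the vanishing is a slightly more careful statement than the paper's terse appeal to the definition of $K_b$, but the substance is identical.
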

\begin{proof}
For Point \ref{enu:max-val-w}, the definition of $K_{b}$ implies
that $b(c)^{\top}\bar{\text{\textbf{B}}}^{-1}b(c')=0$ whenever $K_{b}(c,c')=0$.
Thus,

\begin{align*}
\bar{\ddot{1}}|\bar{\ddot{w}}(\bar{C}_{i})| & =\bar{\ddot{1}}\left|b(\ddot{C})^{\top}\bar{\text{\textbf{B}}}^{-1}b(\bar{C}_{i})\right| &  & \text{Def of }\bar{\ddot{w}}\text{ (Assm \ref{assu:series-estimates})}\\
 & =\bar{\ddot{1}}\left|b(\ddot{C})^{\top}\bar{\text{\textbf{B}}}^{-1}b(\bar{C}_{i})K_{b}(\ddot{C},\bar{C}_{i})\right|\\
 & \lesssim\bar{\ddot{1}}||b(\ddot{C})||_{2}\left|\left|\bar{\text{\textbf{B}}}^{-1}b(\bar{C}_{i})K_{b}(\ddot{C},\bar{C}_{i})\right|\right|_{2} &  & \text{Cauchy-Schwartz}\\
 & \lesssim||b(\ddot{C})||_{2}\left|\left|b(\bar{C}_{i})\right|\right|_{2}K_{b}(\ddot{C},\bar{C}_{i}) &  & \text{Def of }\bar{\ddot{1}}\\
 & \lesssim r_{n}K_{b}(\ddot{C},\bar{C}_{i}) &  & \text{Assm \ref{assu:standardization}.}
\end{align*}
For Points \ref{enu:w|ddot} \& \ref{enu:w|bar}, applying Assumption
\ref{assu:(Approximately-uniform-),} gives gives
\begin{align*}
\mathbb{E}\left[\bar{\ddot{1}}\left|\bar{\ddot{w}}(\bar{C}_{i})\right|^{m}|\ddot{C}\right] & \lesssim r_{n}^{m}\mathbb{E}\left[K_{b}(\ddot{C},\bar{C}_{i})|\ddot{C}\right]\lesssim r_{n}^{m-1},
\end{align*}
and
\begin{align*}
\mathbb{E}\left[\bar{\ddot{1}}\left|\bar{\ddot{w}}(\bar{C}_{i})\right|^{m}|\bar{\mathbf{A}},\bar{\mathbf{X}}\right]\lesssim r_{n}^{m}\mathbb{E}\left[K_{b}(\ddot{C},\bar{C}_{i})|\bar{C}_{i}\right]\lesssim r_{n}^{m-1}.
\end{align*}
For Point \ref{enu:w-ij|ddot}, 
\begin{align*}
\mathbb{E}\left[\bar{\ddot{1}}\left|\bar{\ddot{w}}(\bar{C}_{i})\bar{\ddot{w}}(\bar{C}_{j})\right||\ddot{C}\right] & \lesssim r_{n}^{2}\mathbb{E}\left[K_{b}(\ddot{C},\bar{C}_{i})K_{b}(\ddot{C},\bar{C}_{j})\;\;|\ddot{C}\right]\\
 & =r_{n}^{2}\mathbb{E}\left[K_{b}(\ddot{C},\bar{C}_{i})\;|\ddot{C}\right]\mathbb{E}\left[K_{b}(\ddot{C},\bar{C}_{j})\;\;|\ddot{C}\right]\\
 & \lesssim1.
\end{align*}
Finally, for Point \ref{enu:w-ij|bar},
\begin{align*}
\mathbb{E}\left[\bar{\ddot{1}}\left|\bar{\ddot{w}}(\bar{C}_{i})\bar{\ddot{w}}(\bar{C}_{j})\right|\;|\bar{\mathbf{C}}\right] & \lesssim r_{n}^{2}\mathbb{E}\left[K_{b}(\ddot{C},\bar{C}_{i})|\bar{\mathbf{C}}\right]\lesssim r_{n}.
\end{align*}
\end{proof}
\textcolor{black}{The next Lemma provides a general strategy for studying
weighted averages. It is based on \citet{Kennedy2020-ie}'s Lemma
2 for }\textcolor{black}{\emph{unweighted}}\textcolor{black}{{} averages,
but takes advantage of Lemma \ref{lem:w-bounds} to obtain a more
specific result.}
\begin{lem}
\textcolor{black}{\label{lem:PnW-convergence} (Bound for randomly
weighted sums) Let $\hat{\tilde{g}}$ be a random function estimated
from data splits $\hat{\mathbf{Z}}$ and $\tilde{\mathbf{Z}}$. If
$\mathbb{E}\left\{ \hat{\tilde{g}}(\bar{Z}_{i})|\bar{C}_{i},\hat{\tilde{g}}\right\} =0$
and }the assumptions of Lemma \ref{lem:w-bounds} hold,\textcolor{black}{{}
then
\[
\bar{\ddot{1}}\bar{\ddot{\mathbb{P}}}_{n}\left(\hat{\tilde{g}}(Z)\right)\lesssim_{\mathbb{P}}\sqrt{\frac{1}{n/r_{n}}\mathbb{E}\left(\hat{\tilde{g}}(Z)^{2}\right)}.
\]
}
\end{lem}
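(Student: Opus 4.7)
}

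The plan is to bound the second moment $\mathbb{E}[(\bar{\ddot{1}}\bar{\ddot{\mathbb{P}}}_n(\hat{\tilde{g}}(Z)))^2]$ and then invoke Markov's inequality to obtain the in-probability statement. Expanding the square yields
\[
\mathbb{E}\!\left[\bar{\ddot{1}}\bar{\ddot{\mathbb{P}}}_n(\hat{\tilde{g}}(Z))^2\right]
= \frac{1}{n^2}\sum_{i=1}^n\mathbb{E}\!\left[\bar{\ddot{1}}\bar{\ddot{w}}(\bar{C}_i)^2\hat{\tilde{g}}(\bar{Z}_i)^2\right]
+ \frac{1}{n^2}\sum_{i\neq j}\mathbb{E}\!\left[\bar{\ddot{1}}\bar{\ddot{w}}(\bar{C}_i)\bar{\ddot{w}}(\bar{C}_j)\hat{\tilde{g}}(\bar{Z}_i)\hat{\tilde{g}}(\bar{Z}_j)\right],
\]
so the main task is to show that the cross-terms vanish and that each diagonal term is $\lesssim r_n\mathbb{E}(\hat{\tilde{g}}(Z)^2)/n$.

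For the off-diagonal sum, I would condition on the sigma-algebra generated by $\ddot{C}$, the covariate vector $\bar{\mathbf{C}}$, and $\hat{\tilde{g}}$ (which depends only on $\hat{\mathbf{Z}},\tilde{\mathbf{Z}}$, and is therefore independent of $\bar{\mathbf{Z}}$). Under this conditioning, both $\bar{\ddot{1}}$ and the weights $\bar{\ddot{w}}(\bar{C}_i),\bar{\ddot{w}}(\bar{C}_j)$ are measurable (they are functions of $\bar{\mathbf{C}}$, $\ddot{C}$, and nuisance basis moments independent of the $\bar{A}_k,\bar{Y}_k$'s), while $\bar{Z}_i$ and $\bar{Z}_j$ are conditionally independent because the original $\bar{Z}_k$'s are iid. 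Hence the cross expectation factors as
\[
\mathbb{E}\!\left[\bar{\ddot{1}}\bar{\ddot{w}}(\bar{C}_i)\bar{\ddot{w}}(\bar{C}_j)\,\mathbb{E}[\hat{\tilde{g}}(\bar{Z}_i)\mid\bar{C}_i,\hat{\tilde{g}}]\,\mathbb{E}[\hat{\tilde{g}}(\bar{Z}_j)\mid\bar{C}_j,\hat{\tilde{g}}]\right]=0
\]
by the hypothesis $\mathbb{E}[\hat{\tilde{g}}(\bar{Z}_i)\mid\bar{C}_i,\hat{\tilde{g}}]=0$.

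For each diagonal term, I would again condition on $\bar{C}_i$ and $\hat{\tilde{g}}$, which separates the weight from the pseudo-outcome factor:
\[
\mathbb{E}\!\left[\bar{\ddot{1}}\bar{\ddot{w}}(\bar{C}_i)^2\hat{\tilde{g}}(\bar{Z}_i)^2\right]
=\mathbb{E}\!\left[\mathbb{E}[\bar{\ddot{1}}\bar{\ddot{w}}(\bar{C}_i)^2\mid\bar{C}_i]\cdot\mathbb{E}[\hat{\tilde{g}}(\bar{Z}_i)^2\mid\bar{C}_i,\hat{\tilde{g}}]\right].
\]
By Lemma \ref{lem:w-bounds}, Point \ref{enu:w|bar}, applied with $m=2$, the first conditional expectation is $\lesssim r_n$; iterating expectations on the remaining factor gives $\mathbb{E}(\hat{\tilde{g}}(Z)^2)$. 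Summing over $i$ and dividing by $n^2$ yields $\mathbb{E}[(\bar{\ddot{1}}\bar{\ddot{\mathbb{P}}}_n(\hat{\tilde{g}}(Z)))^2]\lesssim r_n\mathbb{E}(\hat{\tilde{g}}(Z)^2)/n$. Markov's inequality then delivers
\[
\bar{\ddot{1}}\bar{\ddot{\mathbb{P}}}_n(\hat{\tilde{g}}(Z))\lesssim_{\mathbb{P}}\sqrt{\frac{\mathbb{E}(\hat{\tilde{g}}(Z)^2)}{n/r_n}}.
\]

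The main obstacle I anticipate is bookkeeping around the conditional-independence step: the weights $\bar{\ddot{w}}(\bar{C}_i)$ couple the observations through $\bar{\mathbf{B}}^{-1}$, so one must be careful that the sigma-algebra used to factor the expectations is rich enough to make the weights measurable yet leaves $\bar{Z}_i,\bar{Z}_j$ conditionally independent with the required conditional mean zero. Once that is pinned down, the remainder is a direct application of Lemma \ref{lem:w-bounds} and Markov.
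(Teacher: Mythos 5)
Your proof is correct and follows essentially the same route as the paper's: expand the second moment, kill the off-diagonal terms via the conditional mean-zero hypothesis on $\hat{\tilde{g}}$, bound the diagonal terms with Lemma \ref{lem:w-bounds}.\ref{enu:w|bar}, and finish with Markov's inequality. The conditioning bookkeeping you flagged as a potential obstacle is indeed the subtlety, and you resolve it the same way the paper does (condition jointly on $\bar{\mathbf{C}}$, $\ddot{C}$, and $\hat{\tilde{g}}$, then use conditional independence of $\bar{A}_i,\bar{Y}_i$ from the other covariates given $\bar{C}_i$).
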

\begin{proof}
We will bound the second moment of $\bar{\ddot{1}}\bar{\ddot{\mathbb{P}}}_{n}\left(\hat{\tilde{g}}(Z)\right)$
and apply Markov's Inequality. To start,
\begin{align}
\mathbb{E}\left[\bar{\ddot{1}}\bar{\ddot{\mathbb{P}}}_{n}\left(\hat{\tilde{g}}(Z)\right)^{2}\right] & =\mathbb{E}\left[\frac{\bar{\ddot{1}}}{n^{2}}\left\{ \sum_{i=1}^{n}\bar{\ddot{w}}(\bar{C}_{i})\left(\hat{\tilde{g}}(\bar{Z}_{i})\right)\right\} ^{2}\right]\nonumber \\
 & \leq\frac{1}{n^{2}}\mathbb{E}\left[\bar{\ddot{1}}\sum_{i=1}^{n}\bar{\ddot{w}}(\bar{C}_{i})^{2}\hat{\tilde{g}}(\bar{Z}_{i})^{2}\right]+\frac{1}{n^{2}}\sum_{i\neq j}\mathbb{E}\left[\bar{\ddot{1}}\bar{\ddot{w}}(\bar{C}_{i})\bar{\ddot{w}}(\bar{C}_{j})\hat{\tilde{g}}(\bar{Z}_{i})\hat{\tilde{g}}(\bar{Z}_{j})\right]\label{eq:g-2parts}
\end{align}

For the off-diagonal terms, iterating expectations over $\bar{\mathbf{C}}$
and $\hat{\tilde{g}}$ gives
\begin{align*}
\mathbb{E}\left[\bar{\ddot{1}}\bar{\ddot{w}}(\bar{C}_{i})\bar{\ddot{w}}(\bar{C}_{j})\hat{\tilde{g}}(\bar{Z}_{i})\hat{\tilde{g}}(\bar{Z}_{j})\right] & =\mathbb{E}\left[\mathbb{E}\left\{ \bar{\ddot{1}}\bar{\ddot{w}}(\bar{C}_{i})\bar{\ddot{w}}(\bar{C}_{j})|\bar{\mathbf{C}}\right\} \mathbb{E}\left\{ \hat{\tilde{g}}(\bar{Z}_{i})|\bar{C}_{i},\hat{\tilde{g}}\right\} \mathbb{E}\left\{ \hat{\tilde{g}}(\bar{Z}_{j})|\bar{C}_{j},\hat{\tilde{g}}\right\} \right]\\
 & =0.
\end{align*}

Plugging this into Eq (\ref{eq:g-2parts}) gives

\begin{align*}
\mathbb{E}\left[\bar{\ddot{1}}\bar{\ddot{\mathbb{P}}}_{n}\left(\hat{\tilde{g}}(Z)\right)^{2}\right] & =\frac{1}{n}\mathbb{E}\left[\bar{\ddot{1}}\bar{\ddot{w}}(\bar{C})^{2}\hat{\tilde{g}}(\bar{Z})^{2}\right]+0\\
 & =\frac{1}{n}\mathbb{E}\left[\mathbb{E}\left(\bar{\ddot{1}}\bar{\ddot{w}}(\bar{C})^{2}|\bar{C}\right)\mathbb{E}\left(\hat{\tilde{g}}(\bar{Z})^{2}|\bar{C}\right)\right]\\
 & \lesssim\frac{r_{n}}{n}\mathbb{E}\left[\mathbb{E}\left(\hat{\tilde{g}}(\bar{Z})^{2}|\bar{C}\right)\right] &  & \text{Lemma \ref{lem:w-bounds}.\ref{enu:w|bar}}\\
 & =\frac{r_{n}}{n}\mathbb{E}\left(\hat{\tilde{g}}(\bar{Z})^{2}\right).
\end{align*}
Markov's Inequality then shows the result.
\end{proof}
Next, we show a Lemma for the \emph{weighted} moments of $p$.
\begin{lem}
\label{lem:(Weighted-moments-p}(Weighted moments of $p$) Under the
assumptions of Lemma \ref{lem:w-bounds},
\begin{enumerate}
\item $\lambda_{\max}\left\{ \mathbb{E}\left(\bar{\ddot{1}}\bar{\ddot{\Sigma}}\right)\right\} \lesssim1$
,\label{enu:-Esig} and
\item $\lambda_{\max}\left\{ \mathbb{E}\left(\bar{\ddot{1}}\bar{\ddot{\Sigma}}^{2}\right)\right\} \lesssim r_{n}$.
\label{enu:Sig2_pr}
\end{enumerate}
\end{lem}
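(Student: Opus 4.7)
Both parts reduce $\lambda_{\max}$ of the relevant symmetric PSD matrix (PSD because $\bar{\ddot{1}}\bar{\ddot{\Sigma}}$ and $\bar{\ddot{\Sigma}}^2$ are both PSD) to bounding the quadratic form $\sup_{\|u\|_2 = 1} u^\top \mathbb{E}[\bar{\ddot{1}}\bar{\ddot{\Sigma}}^j] u$. The strategy is then to repeatedly apply Lemma \ref{lem:w-bounds} together with Assumption \ref{assu:(Approximately-uniform-),} to absorb the random weights $\bar{\ddot{w}}(\bar{C}_i)$ and the neighborhood-overlap indicators.

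For Point \ref{enu:-Esig}, my plan is to write
\[
u^\top \mathbb{E}\left[\bar{\ddot{1}}\bar{\ddot{\Sigma}}\right] u = \frac{1}{n}\sum_{i=1}^{n} \mathbb{E}\left[\bar{\ddot{1}}\,\bar{\ddot{w}}(\bar{C}_i)(u^\top \bar{p}_i)^2\right],
\]
bound $\bar{\ddot{1}}|\bar{\ddot{w}}(\bar{C}_i)| \lesssim r_n K_b(\ddot{C},\bar{C}_i)$ via Lemma \ref{lem:w-bounds}.\ref{enu:max-val-w}, condition on $(\bar{A}_i,\bar{X}_i)$, and invoke $\Pr(K_b(\ddot{C},\bar{C}_i)=1\mid \bar{C}_i) \lesssim r_n^{-1}$ from Assumption \ref{assu:(Approximately-uniform-),}. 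What remains is $\mathbb{E}[(u^\top \bar{p})^2] = u^\top I u = 1$ by Assumption \ref{assu:standardization}.

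For Point \ref{enu:Sig2_pr}, I would expand
\[
u^\top \bar{\ddot{\Sigma}}^2 u = \frac{1}{n^2}\sum_{i,j}\bar{\ddot{w}}(\bar{C}_i)\bar{\ddot{w}}(\bar{C}_j)(u^\top \bar{p}_i)(u^\top \bar{p}_j)(\bar{p}_i^\top \bar{p}_j)
\]
and split the sum into diagonal ($i=j$) and off-diagonal ($i\neq j$) pieces. The diagonal is handled by combining $\|\bar{p}\|_2^2 \lesssim k_n$ (Assumption \ref{assu:standardization}) with $\mathbb{E}[\bar{\ddot{1}}\,\bar{\ddot{w}}(\bar{C})^2\mid \bar{C}] \lesssim r_n$ (Lemma \ref{lem:w-bounds}.\ref{enu:w|bar}), producing $\lesssim k_n r_n/n$, which is $\lesssim r_n$ by Assumption \ref{assu:(Matrix-Bernstein-Inequality)}. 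For the off-diagonal piece, the local structure of the spline $q$ is essential: $q(\bar{X}_i)^\top q(\bar{X}_j)$ vanishes whenever $\bar{X}_i,\bar{X}_j$ lie in different $q$-neighborhoods, so $|\bar{p}_i^\top \bar{p}_j| \lesssim k_n K_q(\bar{X}_i,\bar{X}_j)$. I would apply this bound, then Cauchy--Schwartz on $|\bar{\ddot{w}}(\bar{C}_i)\bar{\ddot{w}}(\bar{C}_j)||u^\top \bar{p}_i||u^\top \bar{p}_j|K_q^{1/2}K_q^{1/2}$, use the iid symmetry of $(\bar{Z}_i,\bar{Z}_j)$ to collapse $\sqrt{\mathbb{E}[\cdot_i]\mathbb{E}[\cdot_j]}$ to a single expectation, and finally invoke Lemma \ref{lem:w-bounds}.\ref{enu:w|bar} for the squared weight together with $\Pr(K_q(\bar{X}_1,\bar{X}_2)=1\mid \bar{X}_1) \lesssim k_n^{-1}$ (Assumption \ref{assu:(Approximately-uniform-),}). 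This yields a per-pair bound of order $k_n \cdot r_n \cdot k_n^{-1} = r_n$; averaging over the $n(n-1)$ off-diagonal pairs preserves this rate.

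The main obstacle is that the naive bound $|\bar{p}_i^\top \bar{p}_j| \leq \|\bar{p}_i\|_2\|\bar{p}_j\|_2 \lesssim k_n$ would leave an uncancelled factor of $k_n$ in the off-diagonal sum, potentially much larger than $r_n$. Exploiting the local support of the spline basis $q$, encoded in the indicator $K_q$, is what supplies the compensating $k_n^{-1}$ via Assumption \ref{assu:(Approximately-uniform-),} and produces the desired $r_n$ bound.
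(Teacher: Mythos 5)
Your Point \ref{enu:-Esig} follows the paper's argument essentially verbatim: replace $\bar{\ddot{w}}(\bar{C}_i)$ with $|\bar{\ddot{w}}(\bar{C}_i)|$, use Lemma \ref{lem:w-bounds}.\ref{enu:max-val-w} together with Assumption \ref{assu:(Approximately-uniform-),} (equivalently, invoke Lemma \ref{lem:w-bounds}.\ref{enu:w|bar} with $m=1$ directly), and finish with $\mathbb{E}(pp^\top)=I$.

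Your Point \ref{enu:Sig2_pr} is correct but takes a genuinely different route. You expand $u^\top\bar{\ddot{\Sigma}}^2u$ term by term, split into diagonal and off-diagonal pieces, and rely on the local support of the spline basis $q$: the bound $|\bar{p}_i^\top\bar{p}_j|\lesssim k_n K_q(\bar{X}_i,\bar{X}_j)$ plus the $X$-side density condition in Assumption \ref{assu:(Approximately-uniform-),} supply the compensating $k_n^{-1}$ that cancels the $\|q\|^2\lesssim k_n$ factor, with Cauchy--Schwartz and the exchangeability of $(\bar{Z}_i,\bar{Z}_j)$ collapsing the off-diagonal second moment to a single pair. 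The paper instead argues at the matrix level: the pointwise bound $\bar{\ddot{1}}|\bar{\ddot{w}}_i|\leq \kappa r_n$ yields that $\bar{\ddot{1}}\left(\kappa r_n\bar{\Sigma}-\bar{\ddot{\Sigma}}\right)$ is p.s.d., so writing $\bar{\ddot{\Sigma}}^2=\bar{\ddot{\Sigma}}^{1/2}\bar{\ddot{\Sigma}}\bar{\ddot{\Sigma}}^{1/2}$ (valid on $\{\bar{\ddot{1}}=1\}$) and inserting this ordering gives $\bar{\ddot{1}}\bar{\ddot{\Sigma}}^2\lesssim_{\text{p.s.d.}}r_n\bar{\ddot{1}}\bar{\ddot{\Sigma}}^{1/2}\bar{\Sigma}\bar{\ddot{\Sigma}}^{1/2}\lesssim_{\text{p.s.d.}}r_n\bar{\ddot{1}}\bar{\ddot{\Sigma}}$, using $\lambda_{\max}(\bar{\Sigma})\leq 3/2$ from the definition of $\bar{\ddot{1}}$; Point \ref{enu:-Esig} then closes the argument. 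The paper's route is shorter, requires no diagonal/off-diagonal bookkeeping, and does not invoke the locality of $q$ or the $X$-part of Assumption \ref{assu:(Approximately-uniform-),} beyond what is already baked into Point \ref{enu:-Esig}. Your route is more elementary and makes the role of the spline's local support explicit, at the cost of an extra use of Assumption \ref{assu:(Approximately-uniform-),} on $X$ and of Assumption \ref{assu:(Matrix-Bernstein-Inequality)} (to absorb the diagonal's $k_n r_n/n$ into $r_n$) that the paper's p.s.d.\ argument does not need. Both arguments are valid.
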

\begin{proof}
For Point \ref{enu:-Esig},
\begin{align*}
\lambda_{\text{max}}\text{\ensuremath{\left\{ \mathbb{E}\left(\bar{\ddot{1}}\ddot{\Sigma}\right)\right\} } } & =\lambda_{\text{max}}\left\{ \mathbb{E}\left(\frac{\bar{\ddot{1}}}{n}\sum_{i=1}^{n}\bar{p}_{i}\bar{\ddot{w}}(\bar{C}_{i})\bar{p}_{i}^{\top}\right)\right\} \\
 & \leq\lambda_{\text{max}}\left\{ \mathbb{E}\left(\frac{1}{n}\sum_{i=1}^{n}\bar{p}_{i}\mathbb{E}\left(\bar{\ddot{1}}|\bar{\ddot{w}}(\bar{C}_{i})|\;|\;\bar{\mathbf{J}},\bar{\mathbf{X}}\right)\bar{p}_{i}^{\top}\right)\right\} \\
 & \lesssim\lambda_{\text{max}}\left\{ \mathbb{E}\left(\frac{1}{n}\sum_{i=1}^{n}\bar{p}_{i}\bar{p}_{i}^{\top}\right)\right\}  &  & \text{Lemma \ref{lem:w-bounds}.\ref{enu:w|bar}}\\
 & =1.
\end{align*}

For Point \ref{enu:Sig2_pr}, let $\kappa$ be a constant such that,
from Lemma \ref{lem:w-bounds}.\ref{enu:max-val-w}, $\bar{\ddot{1}}|\bar{\ddot{w}}_{i}|\leq\bar{\ddot{1}}\kappa r_{n}$.
First we will show that $\bar{\ddot{1}}\left(\bar{\Sigma}r_{n}\kappa-\bar{\ddot{\Sigma}}\right)$
is p.s.d. For any vector $v$, we have
\begin{align}
\bar{\ddot{1}}v^{\top}\left(\kappa r_{n}\bar{\Sigma}-\bar{\ddot{\Sigma}}\right)v & =v^{\top}\left(\frac{\bar{\ddot{1}}}{n}\sum_{i=1}^{n}\bar{p}_{i}\left(\kappa r_{n}-\bar{\ddot{w}}_{i}\right)\bar{p}_{i}^{\top}\right)v\nonumber \\
 & \geq\frac{\bar{\ddot{1}}}{n}\sum_{i=1}^{n}\left(\bar{p}_{i}^{\top}v\right)^{2}\left(\kappa r_{n}-|\bar{\ddot{w}}_{i}|\right)\nonumber \\
 & \geq0.\label{eq:S-Sig-psd}
\end{align}

Next, let $\bar{\ddot{\Sigma}}^{1/2}$ be the square root of $\bar{\ddot{\Sigma}}$
if $\bar{\ddot{1}}=1$ and the identity matrix otherwise. For any
unit-norm vector $v$, we have 

\begin{align*}
v^{\top}\mathbb{E}\left(\bar{\ddot{1}}\bar{\ddot{\Sigma}}^{2}\right)v & \leq\kappa r_{n}v^{\top}\mathbb{E}\left(\bar{\ddot{1}}\bar{\ddot{\Sigma}}^{1/2}\bar{\Sigma}\bar{\ddot{\Sigma}}^{1/2}\right)v &  & \text{from Eq \eqref{eq:S-Sig-psd}}\\
 & \leq\kappa r_{n}v^{\top}\mathbb{E}\left[\bar{\ddot{1}}\bar{\ddot{\Sigma}}\lambda_{\max}\left(\bar{\Sigma}\right)\right]v\\
 & \lesssim r_{n}v^{\top}\mathbb{E}\left[\bar{\ddot{1}}\bar{\ddot{\Sigma}}\right]v &  & \text{Def of }\bar{\ddot{1}}\\
 & \lesssim r_{n} &  & \text{from Point \ref{enu:-Esig}}.
\end{align*}
\end{proof}
A commonly used corollary of Markov's Inequality is that $V\lesssim_{\mathbb{P}}\mathbb{E}(V^{2})^{1/2}$
for any random variable $V$. As we show in the next lemma, Markov's
Inequality can also be used in combination with \emph{conditional
}expectations.
\begin{lem}
\label{lem:conditional-markovs}Let $A_{n}$ be a scalar random variable;
let $B_{n}$ be a random variable, vector or matrix; and let $c_{n}$
be a sequence of constants. If there exists a function $f(B_{n})\geq0$
such that 
\[
\mathbb{E}\left(A_{n}^{2}|B_{n}\right)\lesssim f(B_{n})\lesssim_{\mathbb{P}}c_{n},
\]
 then $A_{n}^{2}\lesssim_{\mathbb{P}}c_{n}.$ 
\end{lem}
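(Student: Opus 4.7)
The plan is to show that for every $\epsilon > 0$ there exists a constant $M$ (independent of $n$) such that $\text{Pr}(A_n^2 > M c_n) < \epsilon$ for all sufficiently large $n$; this is precisely the definition of $A_n^2 \lesssim_{\mathbb{P}} c_n$. The natural approach is a two-step ``conditional Markov'' argument: first use the conditional second-moment bound $\mathbb{E}(A_n^2 \mid B_n) \lesssim f(B_n)$ to control $\text{Pr}(A_n^2 > M c_n \mid B_n)$, then handle the remaining randomness in $B_n$ using the in-probability bound $f(B_n) \lesssim_\mathbb{P} c_n$.

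First, I would let $c_1$ be a constant such that $\mathbb{E}(A_n^2 \mid B_n) \leq c_1 f(B_n)$, which exists by the definition of $\lesssim$. Applying Markov's Inequality conditionally on $B_n$ yields
\[
\text{Pr}(A_n^2 > M c_n \mid B_n) \;\leq\; \min\!\left(1,\; \frac{c_1 f(B_n)}{M c_n}\right)
\]
for any $M > 0$. Taking unconditional expectations and splitting the right-hand side according to whether $f(B_n)$ exceeds a threshold $M_1 c_n$, I obtain
\[
\text{Pr}(A_n^2 > M c_n) \;\leq\; \text{Pr}(f(B_n) > M_1 c_n) + \frac{c_1 M_1}{M}.
\]
Since $f(B_n) \lesssim_{\mathbb{P}} c_n$ by hypothesis, for any $\epsilon > 0$ I can pick $M_1$ large enough that $\limsup_n \text{Pr}(f(B_n) > M_1 c_n) < \epsilon/2$. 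Then choosing $M \geq 2 c_1 M_1 / \epsilon$ makes the second term at most $\epsilon/2$, so $\text{Pr}(A_n^2 > M c_n) < \epsilon$ for all large $n$, which is exactly $A_n^2 \lesssim_{\mathbb{P}} c_n$.

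I do not expect any substantive obstacle here: this is a standard device that converts a conditional second-moment bound into an unconditional tail bound. The only bookkeeping concerns propagating the hidden constant from $\mathbb{E}(A_n^2 \mid B_n) \lesssim f(B_n)$ and from the outer $\lesssim_\mathbb{P}$ through to the final choice of $M$, which amounts to an overall rescaling and does not affect the argument.
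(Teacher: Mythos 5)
Your argument is correct and follows essentially the same two-step device as the paper's own proof: conditional Markov's inequality applied to $\mathbb{E}(A_n^2 \mid B_n) \lesssim f(B_n)$, followed by a case split on whether $f(B_n)$ exceeds a multiple of $c_n$, with the in-probability hypothesis controlling the bad event. The bookkeeping with the constants $c_1$, $M_1$, $M$ matches the paper's $m_1$, $m_2$ choice, so there is nothing substantive to add.
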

See also Lemma 2 of \citet{Kennedy2022-da} and Lemma 6.1 of \citet{Chernozhukov2018-va}
for analogous results.
\begin{proof}
We know there is a constant $m_{1}$ such that $\mathbb{E}\left(A_{n}^{2}|B_{n}\right)\leq m_{1}f(B_{n})$.
Thus, Markov's inequality implies that, for any $\epsilon$,
\[
\text{Pr}\left(A_{n}^{2}>\frac{2m_{1}f(B_{n})}{\epsilon}|B_{n}\right)\leq\frac{\epsilon\mathbb{E}\left(A_{n}^{2}|B_{n}\right)}{2m_{1}f(B_{n})}\leq\epsilon/2.
\]
Since $f(B_{n})\lesssim_{\mathbb{P}}c_{n}$, we also know that there
exists $m_{2}$ and $n'$ such that

\[
\text{Pr}\left(f(B_{n})>m_{2}c_{n}\right)\leq\epsilon/2.
\]
for all $n\geq n'$. Thus, for all $n\geq n'$
\begin{align*}
\text{Pr}\left(A_{n}^{2}>\frac{2m_{1}m_{2}c_{n}}{\epsilon}\right) & =\text{Pr}\left(A_{n}^{2}>\frac{2m_{1}m_{2}c_{n}}{\epsilon}\hspace{1em}\&\hspace{1em}f(B_{n})\leq m_{2}c_{n}\right)\\
 & \hspace{1em}+\text{Pr}\left(A_{n}^{2}>\frac{2m_{1}m_{2}c_{n}}{\epsilon}\hspace{1em}\&\hspace{1em}f(B_{n})>m_{2}c_{n}\right)\\
 & \leq\text{Pr}\left(A_{n}^{2}>\frac{2m_{1}f(B_{n})}{\epsilon}\right)\\
 & \hspace{1em}+\epsilon/2\\
 & =\mathbb{E}\left[\text{Pr}\left(A_{n}^{2}>\frac{2m_{1}f(B_{n})}{\epsilon}|B_{n}\right)\right]+\epsilon/2\\
 & \leq\epsilon.
\end{align*}
\end{proof}
Finally, we include a result relating sums of squares to squares of
sums.
\begin{lem}
(Squares of sums) \label{lem:sq-of-sum} For any finite sequence of
vectors $a_{1},\dots a_{k}$, the matrix $k\sum_{i=1}^{k}a_{i}a_{i}^{\top}-\left(\sum_{i=1}^{k}a_{i}\right)\left(\sum_{j=1}^{k}a_{j}\right)^{\top}$is
positive semidefinite. In particular, when the vectors are of dimension
1, we have $k\sum_{i=1}^{k}a_{i}^{2}-\left(\sum_{i=1}^{k}a_{i}\right)^{2}\geq0$.
\end{lem}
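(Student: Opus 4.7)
The plan is to reduce the matrix statement to its scalar version by quadratic-form testing, and then prove the scalar version by a one-line identity.

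First, I would fix an arbitrary vector $v$ of the appropriate dimension and consider
\[
v^{\top}\!\left[k\sum_{i=1}^{k}a_{i}a_{i}^{\top}-\left(\sum_{i=1}^{k}a_{i}\right)\!\left(\sum_{j=1}^{k}a_{j}\right)^{\!\top}\right]\!v = k\sum_{i=1}^{k}b_{i}^{2}-\left(\sum_{i=1}^{k}b_{i}\right)^{2},
\]
where $b_{i}:=v^{\top}a_{i}$. Establishing positive semidefiniteness of the matrix reduces to showing that the right-hand side is nonnegative for every scalar sequence $b_{1},\dots,b_{k}$, which is exactly the scalar special case claimed in the lemma. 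So the matrix statement and the scalar statement are equivalent, and it suffices to prove the latter.

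For the scalar inequality, the cleanest route is the symmetric expansion
\[
\sum_{1\leq i<j\leq k}(b_{i}-b_{j})^{2}=\tfrac{1}{2}\sum_{i,j=1}^{k}(b_{i}-b_{j})^{2}=k\sum_{i=1}^{k}b_{i}^{2}-\left(\sum_{i=1}^{k}b_{i}\right)^{\!2},
\]
which is manifestly nonnegative since it is a sum of squares. (Equivalently, one could invoke Cauchy--Schwarz on the sequences $(b_{1},\dots,b_{k})$ and $(1,\dots,1)$.) Combining this with the reduction above gives the desired positive semidefiniteness.

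There is no real obstacle here; the only thing to be careful about is that the scalar step is presented as an algebraic identity rather than an inequality, so that the proof works verbatim for both the vector and scalar claims simultaneously. The lemma is invoked in Lemma \ref{lem:(Delta-values)}.\ref{enu:DDtilde} in the form $\lambda_{\max}\{\mathbb{E}[(U+V)(U+V)^{\top}]\}\lesssim\lambda_{\max}\{\mathbb{E}[UU^{\top}]\}+\lambda_{\max}\{\mathbb{E}[VV^{\top}]\}$, which follows from the $k=2$ case applied to the sequence $(U,V)$, so the stated form of the lemma is exactly what is needed.
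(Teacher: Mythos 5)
Your proof is correct and rests on the same key identity as the paper's: the symmetric expansion $\frac{1}{2}\sum_{i,j}(a_{i}-a_{j})(a_{i}-a_{j})^{\top}=k\sum_{i}a_{i}a_{i}^{\top}-\bigl(\sum_{i}a_{i}\bigr)\bigl(\sum_{j}a_{j}\bigr)^{\top}$, which the paper applies directly at the matrix level while you first reduce to scalars by testing against an arbitrary $v$. The extra reduction step is harmless but unnecessary; the matrix identity already exhibits the difference as a sum of manifestly p.s.d.\ rank-one terms.
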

\begin{proof}
We have
\begin{align}
k\sum_{i=1}^{k}a_{i}a_{i}^{\top}-\left(\sum_{i=1}^{k}a_{i}\right)\left(\sum_{j=1}^{k}a_{j}\right)^{\top} & =\sum_{i=1}^{k}\sum_{j=1}^{k}a_{i}a_{i}^{\top}-\sum_{i=1}^{k}\sum_{j=1}^{k}a_{i}a_{j}^{\top}\nonumber \\
 & =\frac{1}{2}\sum_{i=1}^{k}\sum_{j=1}^{k}\left(a_{i}a_{i}^{\top}-2a_{i}a_{j}^{\top}+a_{j}a_{j}^{\top}\right)\nonumber \\
 & =\frac{1}{2}\sum_{i=1}^{k}\sum_{j=1}^{k}(a_{i}-a_{j})(a_{i}-a_{j})^{\top},\label{eq:psd_aa}
\end{align}
Since each summand $(a_{i}-a_{j})(a_{i}-a_{j})^{\top}$ is p.s.d.,
Line (\ref{eq:psd_aa}) is p.s.d. as well.
\end{proof}

\subsection{\label{subsec:Proof-apply-linear-ie-common-terms}Reducing Lines
(\ref{eq:apply-linear}), \& (\ref{eq:apply-IE})}
\begin{proof}
Lines (\ref{eq:apply-linear}) \& (\ref{eq:apply-IE}) can be written
as
\begin{align}
 & \bar{\ddot{1}}\left[\bar{\ddot{\mathbb{P}}}_{n}\left\{ \hat{g}_{1}(Z)+\tilde{g}_{2}(Z)\right\} \right],\label{eq:Pgs}
\end{align}
where 
\begin{align*}
\hat{g}_{1}(Z) & :=\hat{1}\left\{ m(Z,\hat{\gamma})-m(Z,\gamma_{0})-J\alpha_{0}(\hat{\gamma}-\gamma_{0})\right\}  &  & \text{(Line \eqref{eq:apply-linear}),\ensuremath{} and}\\
\tilde{g}_{2}(Z) & :=\tilde{1}\left\{ \left(\tilde{\alpha}-\alpha_{0}\right)J(Y-\gamma_{0})\right\}  &  & \text{(Line \eqref{eq:apply-IE})}.
\end{align*}

We will bound these terms using Lemma \ref{lem:PnW-convergence}.
To apply Lemma \ref{lem:PnW-convergence}, we first note that $\mathbb{E}\left(\hat{g}_{1}(\bar{Z}_{i})|\bar{C}_{i},\hat{g}_{1}\right)=0$
from the linearity property (Eq (\ref{eq:linear-form})). Similarly,
$\mathbb{E}\left(\tilde{g}_{2}(\bar{Z}_{i})|\bar{C}_{i},\tilde{g}_{2}\right)=0$
from iterating expectations over $\bar{X}_{i}$ and $\bar{J}_{i}$. 

Next, we bound the second moment of $\hat{g}_{1}$ and $\tilde{g}_{2}$.
For $\hat{g}_{1}(\bar{Z})$,
\begin{align*}
 & \mathbb{E}\left[\hat{g}_{1}(\bar{Z})^{2}\right]\\
 & \hspace{1em}=\mathbb{E}\left[\hat{1}\left\{ m(\bar{Z},\hat{\gamma})-m(\bar{Z},\gamma_{0})-\bar{J}\alpha_{0}(\hat{\gamma}-\gamma_{0})\right\} ^{2}\right]\\
 & \hspace{1em}\lesssim\mathbb{E}\left\{ \hat{1}\mathbb{E}\left[\left\{ m(\bar{Z},\hat{\gamma})-m(\bar{Z},\gamma_{0})\right\} ^{2}|\hat{\mathbf{Z}}\right]\right\} +\mathbb{E}\left[\hat{1}\alpha_{0}^{2}\bar{J}\left(\hat{\gamma}-\gamma_{0}\right)^{2}\right] &  & \text{Lemma \ref{lem:sq-of-sum}}\\
 & \hspace{1em}\lesssim\sum_{i=1}^{T}\mathbb{E}\mathbb{E}\left[\hat{1}\left(\hat{\gamma}-\gamma_{0}\right)^{2}\;|\;\hat{\mathbf{Z}}\right] &  & \text{Assms \ref{assu:Regularity} \& \ref{assu:m-convergence}}\\
 & \hspace{1em}\leq\sum_{i=1}^{T}\mathbb{E}\left[\hat{1}\left(\hat{\gamma}-\gamma^{\star}\right)^{2}\right]+\mathbb{E}\left[\hat{1}\left(\gamma^{\star}-\gamma_{0}\right)^{2}\right]\\
 & \hspace{1em}\lesssim k_{n}/n+k_{n}^{-2s_{\gamma}/d_{X}} &  & \text{Lemma \ref{lem:(Delta-values)}.\ref{enu:gam-hat-star} \& Eq \eqref{eq:nuisance-error-sup}.}
\end{align*}
Following similar steps for $\tilde{g}_{2}(\bar{Z})$, 
\begin{align*}
\mathbb{E}\left[\tilde{g}_{2}\left(\bar{Z}\right){}^{2}\right] & =\mathbb{E}\left(\tilde{1}\left(\tilde{\alpha}-\alpha_{0}\right)^{2}\bar{J}(\bar{Y}-\gamma_{0})^{2}\right)\\
 & \lesssim\mathbb{E}\left(\tilde{1}\left(\tilde{\alpha}-\alpha_{0}\right)^{2}\bar{J}\right) &  & \text{Assm \ref{assu:Regularity}}\\
 & \leq\mathbb{E}\left(\tilde{1}\left(\tilde{\alpha}-\alpha^{\star}\right)^{2}\right)+\mathbb{E}\left(\left(\alpha^{\star}-\alpha_{0}\right)^{2}\right)\\
 & \lesssim k_{n}/n+k_{n}^{-2s_{\alpha}/d_{X}} &  & \text{Lemma \ref{lem:(Delta-values)}.\ref{enu:alpha-tilde-star} and Eq \eqref{eq:nuisance-error-sup}}.
\end{align*}

Lemma \ref{lem:PnW-convergence} now tells us that 
\begin{align}
\bar{\ddot{\mathbb{P}}}_{n}\left\{ \hat{g}_{1}(Z)+\tilde{g}_{2}(Z)\right\}  & \lesssim_{\mathbb{P}}\frac{1}{\sqrt{n/r_{n}}}\left(\sqrt{\frac{k_{n}}{n}}+k_{n}^{-s_{\gamma}/d_{X}}+k_{n}^{-s_{\alpha}/d_{X}}\right).\label{eq:pgs2}
\end{align}
\end{proof}

\subsection{\label{subsec:Reducing-Line_a_star}Reducing Line (\ref{eq:a_star_0-g_hat_star})}
\begin{proof}
To study Line (\ref{eq:a_star_0-g_hat_star}), we first introduce
notation for the bias of $\hat{\gamma}$. Let $\hat{w}_{\gamma}(x,x'):=q(x)^{\top}\hat{\Sigma}^{-1}q(x')$
be OLS weights for $\hat{\gamma}$ as a function of $\hat{\mathbf{Z}}$,
so that 
\[
\hat{\gamma}(x)=q(x)\hat{\Sigma}^{-1}\frac{1}{n}\sum_{i=1}^{n}q(\hat{X}_{i})\hat{Y}_{i}=\frac{1}{n}\sum_{i=1}^{n}\hat{w}_{\gamma}(x,\hat{X}_{i})\hat{Y}_{i}.
\]
Following the same steps as in Lemma \ref{lem:w-bounds}, we can see
that 
\begin{equation}
\hat{1}|\hat{w}_{\gamma}(x,x')|\lesssim k_{n}K_{q}(x,x')\text{\ensuremath{\hspace{1em}\text{and}\hspace{1em}}\ensuremath{\mathbb{E}\left\{  \hat{1}\hat{w}_{\gamma}(x,X)^{2}\right\} } \ensuremath{\lesssim k_{n}}.}\label{eq:w-gam-results}
\end{equation}
It follows that
\begin{align}
\mathbb{E}\left[\hat{1}Var\left\{ \hat{\gamma}\left(x\right)|\hat{\mathbf{X}},\hat{\mathbf{J}}\right\} \right] & =\frac{1}{n}\sum_{i=1}^{n}\mathbb{E}\left[\hat{1}w_{\gamma}(x,\hat{X}_{i})^{2}Var\left\{ \hat{Y}_{i}|\hat{X}_{i},\hat{J}_{i}\right\} \right]\lesssim\frac{k_{n}}{n}.\label{eq:var-gam-given-x}
\end{align}

For any fixed $x$, let $\gamma^{\dagger}(x;\hat{\mathbf{X}},\hat{\mathbf{J}}):=\mathbb{E}(\hat{\gamma}(x)|\hat{\mathbf{X}},\hat{\mathbf{J}})$.
Since $\hat{\gamma}$ is estimated with a spline regression with a
known outcome $Y$, its bias ($\gamma^{\dagger}(x)-\gamma_{0}(x)$)
has the same structure as the that of the ``oracle'' spline estimator
in Theorem \ref{thm:(Oracle-Error)-Under}. Thus, following the same
steps as in Eq (\ref{eq:spline-bias}), below, we can see that
\begin{align*}
\hat{1}|\gamma^{\dagger}(x,\hat{\mathbf{X}},\hat{\mathbf{J}})-\gamma_{0}(x)| & \lesssim\hat{1}k_{n}^{-s_{\gamma}/d_{C}}\frac{1}{n}\sum_{i=1}^{n}|\hat{w}_{\gamma}(x,\hat{X}_{i})|\\
 & \lesssim k_{n}^{-s_{\gamma}/d_{C}}\frac{k_{n}}{n}\sum_{i=1}^{n}K_{q}(x,\hat{X}_{i}) &  & \text{from Eq \eqref{eq:w-gam-results}}.
\end{align*}
From the triangle inequality and Eq (\ref{eq:nuisance-error-sup}),
we can then see that 
\[
\hat{1}\left|\gamma^{\dagger}(x)-\gamma^{\star}(x)\right|\leq\hat{1}\left|\gamma^{\dagger}(x)+\gamma_{0}(x)\right|+\left|\gamma_{0}(x)-\gamma^{\star}(x)\right|\lesssim k_{n}^{-s_{\gamma}/d_{X}}\left(1+\frac{k_{n}}{n}\sum_{i=1}^{n}K_{q}(x,\hat{X}_{i})\right).
\]

We are now ready to return to the task of bounding Line (\ref{eq:a_star_0-g_hat_star}).
Adding and subtracting $\gamma^{\dagger}$ produces
\begin{equation}
\hat{1}\bar{\ddot{1}}\bar{\ddot{\mathbb{P}}}_{n}\left\{ (\alpha^{\star}-\alpha_{0})J(\hat{\gamma}-\gamma^{\star})\right\} =\hat{1}\bar{\ddot{1}}\bar{\ddot{\mathbb{P}}}_{n}\left\{ (\alpha^{\star}-\alpha_{0})J(\hat{\gamma}-\gamma^{\dagger})\right\} +\hat{1}\bar{\ddot{1}}\bar{\ddot{\mathbb{P}}}_{n}\left\{ (\alpha^{\star}-\alpha_{0})J(\gamma^{\dagger}-\gamma^{\star})\right\} .\label{eq:2-gam-hats}
\end{equation}

For the second term in Eq (\ref{eq:2-gam-hats}), we have
\begin{align}
 & \hat{1}\bar{\ddot{1}}\bar{\ddot{\mathbb{P}}}_{n}\left((\alpha^{\star}-\alpha_{0})J(\gamma^{\dagger}-\gamma^{\star})\right)\nonumber \\
 & \hspace{1em}\leq\frac{1}{n}\sum_{i=1}^{n}\bar{\ddot{1}}|\bar{\ddot{w}}(\bar{C}_{i})|\times|\alpha^{\star}-\alpha_{0}|\times\hat{1}|\gamma^{\dagger}-\gamma^{\star}|\nonumber \\
 & \hspace{1em}\lesssim k_{n}^{-(s_{\alpha}+s_{\gamma})/d_{X}}\frac{1}{n}\sum_{i=1}^{n}\bar{\ddot{1}}|\bar{\ddot{w}}(\bar{C}_{i})|\times\left\{ 1+\frac{k_{n}}{n}\sum_{i'=1}^{n}K_{q}(\bar{X}_{i},\hat{X}_{i'})\right\}  &  & \text{from Eq \eqref{eq:nuisance-error-sup}}\nonumber \\
 & \hspace{1em}\lesssim_{\mathbb{P}}k_{n}^{-(s_{\alpha}+s_{\gamma})/d_{X}}\frac{1}{n}\sum_{i=1}^{n}\mathbb{E}\left[\bar{\ddot{1}}|\bar{\ddot{w}}(\bar{C}_{i})|\times\left\{ 1+\frac{k_{n}}{n}\sum_{i'=1}^{n}\mathbb{E}\left\{ K_{q}(\bar{X}_{i},\hat{X}_{i'})|\bar{X}_{i}\right\} \right\} \right] &  & \text{Markov's Inequality}\nonumber \\
 & \hspace{1em}\lesssim_{\mathbb{P}}k_{n}^{-(s_{\alpha}+s_{\gamma})/d_{X}}\frac{1}{n}\sum_{i=1}^{n}\mathbb{E}\left[\bar{\ddot{1}}|\bar{\ddot{w}}(\bar{C}_{i})|\right] &  & \text{Assm \ref{assu:(Approximately-uniform-),}}\nonumber \\
 & \hspace{1em}\lesssim k_{n}^{-(s_{\alpha}+s_{\gamma})/d_{X}} &  & \text{Lemma \ref{lem:w-bounds}.\ref{enu:w|ddot}.}\label{eq:product-of-biases}
\end{align}
Conditional on $\bar{\mathbf{Z}},\hat{\mathbf{X}}$, $\hat{\mathbf{J}}$
and $\ddot{C}$, the first term in Eq (\ref{eq:2-gam-hats}) has mean
zero and variance\textcolor{red}{}
\begin{align}
 & Var\left[\frac{\hat{1}\bar{\ddot{1}}}{n}\sum_{i=1}^{n}\bar{\ddot{w}}\left(\bar{C}_{i}\right)\left\{ \alpha^{\star}\left(\bar{X}_{i}\right)-\alpha_{0}\left(\bar{X}_{i}\right)\right\} \bar{J}_{i}\left\{ \hat{\gamma}(\bar{X}_{i})-\gamma^{\dagger}(\bar{X}_{i})\right\} |\bar{\mathbf{Z}},\hat{\mathbf{X}},\hat{\mathbf{J}},\ddot{C}\right]\nonumber \\
 & =\frac{1}{n^{2}}\sum_{i=1}^{n}\sum_{i'=1}^{n}\left[\bar{\ddot{1}}\bar{\ddot{w}}\left(\bar{C}_{i}\right)\bar{\ddot{w}}\left(\bar{C}_{i'}\right)\bar{J}_{i}\bar{J}_{i'}\right.\nonumber \\
 & \hspace{1em}\hspace{1em}\hspace{1em}\;\hspace{1em}\times\left\{ \alpha^{\star}\left(\bar{X}_{i}\right)-\alpha_{0}\left(\bar{X}_{i}\right)\right\} \left\{ \alpha^{\star}\left(\bar{X}_{i'}\right)-\alpha_{0}\left(\bar{X}_{i'}\right)\right\} \nonumber \\
 & \hspace{1em}\hspace{1em}\hspace{1em}\;\hspace{1em}\left.\times\hat{1}Cov\left\{ \hat{\gamma}\left(\bar{X}_{i}\right),\hat{\gamma}\left(\bar{X}_{i'}\right)|\bar{\mathbf{Z}},\hat{\mathbf{X}},\hat{\mathbf{J}},\ddot{C}\right\} \right].\nonumber \\
 & \lesssim\frac{k^{-2s_{\alpha}/d_{X}}}{n^{2}}\sum_{i=1}^{n}\sum_{i'=1}^{n}\left|\bar{\ddot{1}}\bar{\ddot{w}}\left(\bar{C}_{i}\right)\bar{\ddot{w}}\left(\bar{C}_{i'}\right)\right|\times\left|\hat{1}Cov\left\{ \hat{\gamma}\left(\bar{X}_{i}\right),\hat{\gamma}\left(\bar{X}_{i'}\right)|\bar{\mathbf{Z}},\hat{\mathbf{X}},\hat{\mathbf{J}}\right\} \right|.\label{eq:summand_ij-cov}
\end{align}
Above, we applied the fact that $Var\left\{ \gamma^{\dagger}\left(\bar{X}_{j};\hat{\mathbf{X}}\right)|\bar{\mathbf{Z}},\hat{\mathbf{X}},\hat{\mathbf{J}},\ddot{C}\right\} =0$.
For the off-diagonal summation terms in (\ref{eq:summand_ij-cov})
where $i\neq i'$ we have
\begin{align}
 & \left|\hat{1}Cov\left\{ \hat{\gamma}\left(\bar{X}_{i}\right),\hat{\gamma}\left(\bar{X}_{i'}\right)|\bar{\mathbf{X}},\hat{\mathbf{X}},\hat{\mathbf{J}}\right\} \right|\nonumber \\
 & =\left|\hat{1}Cov\left\{ \hat{\gamma}\left(\bar{X}_{i}\right),\hat{\gamma}\left(\bar{X}_{i'}\right)|\bar{\mathbf{X}},\hat{\mathbf{X}},\hat{\mathbf{J}}\right\} K_{q}(\bar{X}_{i},\bar{X}_{i'})\right|\label{eq:diff-train}\\
 & \leq\hat{1}Var\left\{ \hat{\gamma}\left(\bar{X}_{i}\right)|\bar{\mathbf{X}},\hat{\mathbf{X}},\hat{\mathbf{J}}\right\} ^{1/2}Var\left\{ \hat{\gamma}\left(\bar{X}_{i}\right)|\bar{\mathbf{X}},\hat{\mathbf{X}},\hat{\mathbf{J}}\right\} ^{1/2}K_{q}(\bar{X}_{i},\bar{X}_{i'}), &  & \text{Cauchy Schwartz}\nonumber \\
 & \lesssim\frac{k_{n}}{n}\mathbb{E}\left[K_{q}(\bar{X}_{i},\bar{X}_{i'})\right] &  & \text{from Eq \eqref{eq:var-gam-given-x}}\nonumber \\
 & \lesssim1/n &  & \text{Assumption \eqref{assu:(Approximately-uniform-),}.}\nonumber 
\end{align}
Above, the equality in Line (\ref{eq:diff-train}) comes from the
fact that the predictions for points in different neighborhoods use
different training data subsets, and are therefore uncorrelated (as
noted by \citealp{Kennedy2022-da}). Thus, the off-diagonal summation
terms in Eq (\ref{eq:summand_ij-cov}) are collectively 
\begin{align*}
 & \lesssim_{\mathbb{P}}\frac{k^{-2s_{\alpha}/d_{X}}}{n^{2}}\sum_{i=1}^{n}\mathbb{E}\left[\left|\bar{\ddot{1}}\bar{\ddot{w}}\left(\bar{C}_{i}\right)\bar{\ddot{w}}\left(\bar{C}_{i'}\right)\right|\times\left|\hat{1}Cov\left\{ \hat{\gamma}\left(\bar{X}_{i}\right),\hat{\gamma}\left(\bar{X}_{i'}\right)|\bar{\mathbf{X}},\hat{\mathbf{X}}\right\} \right|\right]\\
 & \lesssim\frac{k^{-2s_{\alpha}/d_{X}}}{n^{2}/r_{n}} &  & \text{Lemma \ref{lem:w-bounds}.\ref{enu:w-ij|bar}}\\
 & <\frac{k^{-2s_{\alpha}/d_{X}}}{n/r_{n}}.
\end{align*}
Similarly, the diagonal summation terms in Eq (\ref{eq:summand_ij-cov})
are 
\begin{align*}
 & \frac{k^{-2s_{\alpha}/d_{X}}}{n^{2}}\sum_{i=1}^{n}\bar{\ddot{1}}\bar{\ddot{w}}\left(\bar{C}_{i}\right)^{2}\hat{1}Var\left\{ \hat{\gamma}\left(\bar{X}_{i}\right)|\bar{\mathbf{X}},\hat{\mathbf{X}}\right\} \\
 & \lesssim_{\mathbb{P}}\frac{1}{n}k^{-2s_{\alpha}/d_{X}}\mathbb{E}\left[\mathbb{E}\left[\bar{\ddot{1}}\bar{\ddot{w}}\left(\bar{C}_{i}\right)^{2}|\bar{\mathbf{X}},\hat{\mathbf{X}}\right]\;\;\hat{1}Var\left\{ \hat{\gamma}\left(\bar{X}_{i}\right)|\bar{\mathbf{X}},\hat{\mathbf{X}}\right\} \right]\\
 & \lesssim\frac{1}{n}k^{-2s_{\alpha}/d_{X}}r_{n}\mathbb{E}\left[\hat{1}Var\left\{ \hat{\gamma}\left(\bar{X}_{i}\right)|\bar{\mathbf{X}},\hat{\mathbf{X}}\right\} \right]\\
 & \lesssim\frac{k^{-2s_{\alpha}/d_{X}}}{n^{2}/r_{n}}\\
 & <\frac{k^{-2s_{\alpha}/d_{X}}}{n/r_{n}}.
\end{align*}
Thus, from Markov's Inequality (Lemma \ref{lem:conditional-markovs}),
\[
\hat{1}\bar{\ddot{1}}\bar{\ddot{\mathbb{P}}}_{n}\left\{ (\alpha^{\star}-\alpha_{0})J(\hat{\gamma}-\gamma^{\dagger})\right\} \lesssim_{\mathbb{P}}\frac{k^{-s_{\alpha}/d_{X}}}{\sqrt{n/r_{n}}}.
\]
This, combined with Eqs (\ref{eq:2-gam-hats}) \& (\ref{eq:product-of-biases}),
implies
\[
\hat{1}\bar{\ddot{1}}\bar{\ddot{\mathbb{P}}}_{n}\left\{ (\alpha^{\star}-\alpha_{0})J(\hat{\gamma}-\gamma^{\star})\right\} \lesssim_{\mathbb{P}}k_{n}^{-(s_{\alpha}+s_{\gamma})/d_{X}}+\frac{k^{-s_{\alpha}/d_{X}}}{\sqrt{n/r_{n}}}.
\]
\end{proof}

\subsection{\label{subsec:Reducing-Line_g_star}Reducing Line (\ref{eq:a_tilde_star-g_star_0})}

In the special case of the conditional covariance, Line (\ref{eq:a_tilde_star-g_star_0})
can be studied in the same way as Line (\ref{eq:a_star_0-g_hat_star}),
since $\tilde{\alpha}$ is a spline regression with a known outcome
$A$. Following the same steps as in Section \ref{subsec:Reducing-Line_a_star},
we can obtain 
\[
\bar{\ddot{\mathbb{P}}}_{n}\left\{ (\tilde{\alpha}-\alpha^{\star})J(\gamma^{\star}-\gamma_{0})\right\} \lesssim_{\mathbb{P}}k^{-(s_{\gamma}+s_{\alpha})/d_{X}}+\frac{k_{n}^{-s_{\gamma}/d_{X}}}{\sqrt{n/r_{n}}}.
\]

In the more general case of estimating any linear function (Assumption
\ref{assu:lin-func}) where $\tilde{\alpha}$ is defined using a moment
condition, studying $\tilde{\alpha}-\alpha^{\star}$ is more challenging
than studying $\hat{\gamma}-\gamma^{\star}$. To proceed, we first
introduce a hypothetical comparator $\tilde{\alpha}_{\text{oracle}}(x):=q(x)^{\top}\tilde{\Sigma}^{-1}\tilde{\mathbb{P}}_{n}\left(p\alpha_{0}\right)$,
which is the version of $\tilde{\alpha}(x)$ that would have occurred
if we \emph{could} fit a traditional spline regression against $\alpha_{0}$.
By adding and subtracting relevant terms, we can re-express $\tilde{\alpha}-\alpha^{\star}$
in terms of $\tilde{\alpha}_{\text{oracle}}$. We have
\begin{align}
\tilde{\alpha}(\bar{X}_{i})-\alpha^{\star}(\bar{X}_{i}) & =q(\bar{X}_{i})^{\top}\tilde{\Sigma}^{-1}\tilde{\mathbb{P}}_{n}\left(v_{q}\right)-\alpha^{\star}(\bar{X}_{i}) &  & \text{by def. of \ensuremath{\tilde{\alpha}}}\nonumber \\
 & =q(\bar{X}_{i})^{\top}\tilde{\Sigma}^{-1}\tilde{\mathbb{P}}_{n}\left(v_{q}-p\alpha_{0}\right)\nonumber \\
 & \hspace{1em}+q(\bar{X}_{i})^{\top}\tilde{\Sigma}^{-1}\tilde{\mathbb{P}}_{n}\left(p\alpha_{0}\right)-\alpha^{\star}(\bar{X}_{i}).\label{eq:alpha-break1}
\end{align}
To abbreviate Eq (\ref{eq:alpha-break1}), let $\tilde{\phi}:=\tilde{\Sigma}^{-1}\tilde{\mathbb{P}}_{n}\left(v_{q}-p\alpha_{0}\right)$,
so that 

\begin{equation}
\tilde{\alpha}(\bar{X}_{i})-\alpha^{\star}(\bar{X}_{i})=q(\bar{X}_{i})^{\top}\tilde{\phi}+\left\{ \tilde{\alpha}_{\text{oracle}}(\bar{X}_{i})-\alpha^{\star}(\bar{X}_{i})\right\} .\label{eq:a-tilde-star-breakdown}
\end{equation}

From here 
\begin{align}
\tilde{1}\bar{\ddot{1}}\bar{\ddot{\mathbb{P}}}_{n}\left\{ (\tilde{\alpha}-\alpha^{\star})J(\gamma^{\star}-\gamma_{0})\right\}  & =\bar{\ddot{1}}\bar{\ddot{\mathbb{P}}}_{n}\left\{ \tilde{1}(\gamma^{\star}-\gamma_{0})p^{\top}\tilde{\phi}\right\} \label{eq:psi1-term}\\
 & \hspace{1em}+\bar{\ddot{1}}\bar{\ddot{\mathbb{P}}}_{n}\left[\tilde{1}(\gamma^{\star}-\gamma_{0})J\left\{ \tilde{\alpha}_{\text{oracle}}(\bar{X}_{i})-\alpha^{\star}(\bar{X}_{i})\right\} \right]\label{eq:same-as-gam}
\end{align}

Line (\ref{eq:same-as-gam}) is $\lesssim_{\mathbb{P}}k_{n}^{-(s_{\alpha}+s_{\gamma})/d_{X}}+k_{n}^{-s_{\gamma}/d_{X}}/\sqrt{n/r_{n}}$
for the same reasons as in Section (\ref{subsec:Reducing-Line_a_star}).
For Line \ref{eq:psi1-term}, 
\begin{align*}
\mathbb{E}\left[\tilde{1}\left\{ (\gamma^{\star}-\gamma_{0})p^{\top}\tilde{\phi}\right\} ^{2}|\tilde{\mathbf{Z}}\right] & =\tilde{1}\tilde{\phi}^{\top}\mathbb{E}\left\{ p(\gamma^{\star}-\gamma_{0})^{2}p^{\top}\right\} \tilde{\phi}\\
 & \lesssim\tilde{1}k^{-2s_{\gamma}/d_{X}}\tilde{\phi}^{\top}\mathbb{E}\left\{ pp^{\top}\right\} \tilde{\phi} &  & \text{from Eq \eqref{eq:nuisance-error-sup}}\\
 & \lesssim\tilde{1}k^{-2s_{\gamma}/d_{X}}\tilde{\phi}^{\top}\tilde{\phi}\\
 & \lesssim k^{-2s_{\gamma}/d_{X}}\|\tilde{\mathbb{P}}_{n}\left(v_{q}-p\alpha_{0}\right)\|^{2}\\
 & \lesssim k^{1-2s_{\gamma}/d_{X}}/n.
\end{align*}
Above, the last line comes from the fact that $\tilde{\mathbb{P}}_{n}\left(v_{q}-p\alpha_{0}\right)$
is an $k_{n}$-length vector of sample averages, each with mean zero.

Combining results gives

\[
\bar{\ddot{\mathbb{P}}}_{n}\left\{ (\tilde{\alpha}-\alpha^{\star})J(\gamma^{\star}-\gamma_{0})\right\} \lesssim_{\mathbb{P}}k^{-(s_{\gamma}+s_{\alpha})/d_{X}}+\frac{k_{n}^{-s_{\gamma}/d_{X}}}{\sqrt{n/r_{n}}}+\frac{k^{1/2-s_{\gamma}/d_{X}}}{\sqrt{n}}.
\]

\subsection{\label{subsec:Reducing-2stars}Reducing Line \ref{eq:2stars}}
\begin{proof}
Since $(\gamma^{\star}-\gamma_{0})\lesssim k_{n}^{-s_{\gamma}/d_{X}}$
from Eq \ref{eq:nuisance-error-sup}, we see that 
\[
\bar{\ddot{\mathbb{P}}}_{n}\left((\alpha^{\star}-\alpha_{0})J(\gamma^{\star}-\gamma_{0})\right)\lesssim_{\mathbb{P}}k_{n}^{-2(s_{\gamma}+s_{\gamma})/d_{X}}
\]
by following the same steps as in Eq (\ref{eq:product-of-biases}),
with $(\gamma^{\star}-\gamma_{0})$ replacing $(\gamma^{\dagger}-\gamma^{\star})$
throughout, and omitting the $\frac{k_{n}}{n}\sum_{i'=1}^{n}K_{q}(\bar{X}_{i},\hat{X}_{i'})$
term.

\textcolor{purple}{}
\end{proof}

\subsection{\label{subsec:Proof-of-last-2way-term}Reducing Line \ref{eq:diff-term}
for 2-way estimators}
\begin{proof}
Our remaining task is to bound $\hat{1}\tilde{1}\bar{\ddot{1}}\bar{\ddot{\mathbb{P}}}_{n}\left((\tilde{\alpha}-\alpha^{\star})J(\hat{\gamma}-\gamma^{\star})\right)$.
We have
\begin{align}
\bar{\ddot{\mathbb{P}}}_{n}\left((\tilde{\alpha}-\alpha^{\star})J(\hat{\gamma}-\gamma^{\star})\right) & =\frac{1}{n}\sum_{i=1}^{n}\left(\text{\ensuremath{\tilde{\delta}_{\alpha}}}-\text{\ensuremath{\delta_{\alpha}^{\star}}}\right)^{\top}\bar{p}_{i}\bar{\ddot{w}}(\bar{C}_{i})\bar{p}_{i}^{\top}\left(\text{\ensuremath{\hat{\delta}_{\gamma}}}-\ensuremath{\delta_{\gamma}^{\star}}\right)\nonumber \\
 & =\left(\text{\ensuremath{\tilde{\delta}_{\alpha}}}-\text{\ensuremath{\delta_{\alpha}^{\star}}}\right)^{\top}\bar{\ddot{\Sigma}}\left(\text{\ensuremath{\hat{\delta}_{\gamma}}}-\ensuremath{\delta_{\gamma}^{\star}}\right)\nonumber \\
 & =\left(\tilde{\Delta}_{\alpha}+\tilde{\Delta}_{\alpha}^{\star}\right)^{\top}\bar{\ddot{\Sigma}}\left(\hat{\Delta}_{\gamma}+\hat{\Delta}_{\gamma}^{\star}\right) &  & \text{Lines \eqref{eq:Delta-sum-gam} \& \eqref{eq:Delta-sum-alpha}}\nonumber \\
 & =\tilde{\Delta}_{\alpha}^{\top}\bar{\ddot{\Sigma}}\hat{\Delta}_{\gamma}+\tilde{\Delta}_{\alpha}^{\top}\bar{\ddot{\Sigma}}\hat{\Delta}_{\gamma}^{\star}+\tilde{\Delta}_{\alpha}^{\star^{\top}}\bar{\ddot{\Sigma}}\hat{\Delta}_{\gamma}+\tilde{\Delta}_{\alpha}^{\star^{\top}}\bar{\ddot{\Sigma}}\hat{\Delta}_{\gamma}^{\star}.\label{eq:DDterms}
\end{align}
Let $\bar{\ddot{\Sigma}}^{1/2}$ be the square root of $\bar{\ddot{\Sigma}}$
if $\bar{\ddot{1}}=1$, and the identity matrix otherwise. For any
two vectors $V,U\perp\bar{\ddot{\Sigma}}$, 
\begin{align}
\bar{\ddot{1}}|V^{\top}\bar{\ddot{\Sigma}}U| & =\bar{\ddot{1}}|V^{\top}\bar{\ddot{\Sigma}}^{1/2}\bar{\ddot{\Sigma}}^{1/2}U|\leq\bar{\ddot{1}}\left|\left|\bar{\ddot{\Sigma}}^{1/2}V\right|\right|_{2}\left|\left|\bar{\ddot{\Sigma}}^{1/2}U\right|\right|_{2}\label{eq:vsu1}
\end{align}
where Lemma \ref{lem:(Weighted-moments-p} implies that
\begin{equation}
\bar{\ddot{1}}\left|\left|\bar{\ddot{\Sigma}}^{1/2}V\right|\right|_{2}^{2}\lesssim_{\mathbb{P}}\mathbb{E}\left[\bar{\ddot{1}}\left|\left|\bar{\ddot{\Sigma}}^{1/2}V\right|\right|_{2}^{2}\right]=\mathbb{E}\left[V^{\top}\mathbb{E}\left[\bar{\ddot{1}}\bar{\ddot{\Sigma}}\right]V\right]\leq\mathbb{E}\left[||V||_{2}^{2}\right]\lambda_{\max}\left\{ \mathbb{E}\left(\bar{\ddot{1}}\bar{\ddot{\Sigma}}\right)\right\} \lesssim\mathbb{E}\left[||V||_{2}^{2}\right].\label{eq:vsu2}
\end{equation}
Eqs (\ref{eq:vsu1}), (\ref{eq:vsu2}), \& Lemma \ref{lem:(Delta-values)}
then show that

\begin{align}
 & \bar{\ddot{1}}\hat{1}\tilde{1}\left(\tilde{\Delta}_{\alpha}^{\top}\bar{\ddot{\Sigma}}\hat{\Delta}_{\gamma}+\tilde{\Delta}_{\alpha}^{\top}\bar{\ddot{\Sigma}}\hat{\Delta}_{\gamma}^{\star}+\tilde{\Delta}_{\alpha}^{\star^{\top}}\bar{\ddot{\Sigma}}\hat{\Delta}_{\gamma}+\tilde{\Delta}_{\alpha}^{\star^{\top}}\bar{\ddot{\Sigma}}\hat{\Delta}_{\gamma}^{\star}\right)\nonumber \\
 & \hspace{1em}\lesssim_{\mathbb{P}}\sqrt{\mathbb{E}\left(||\tilde{1}\tilde{\Delta}_{\alpha}||_{2}^{2}\right)\mathbb{E}\left(||\hat{1}\hat{\Delta}_{\gamma}||_{2}^{2}\right)}+\sqrt{\mathbb{E}\left(||\tilde{1}\tilde{\Delta}_{\alpha}||_{2}^{2}\right)\mathbb{E}\left(||\hat{1}\hat{\Delta}_{\gamma}^{\star}||_{2}^{2}\right)}\nonumber \\
 & \hspace{1em}\hspace{1em}+\sqrt{\mathbb{E}\left(||\tilde{1}\tilde{\Delta}_{\alpha}^{\star}||_{2}^{2}\right)\mathbb{E}\left(||\hat{1}\hat{\Delta}_{\gamma}||_{2}^{2}\right)}+\sqrt{\mathbb{E}\left(||\tilde{1}\tilde{\Delta}_{\alpha}^{\star}||_{2}^{2}\right)\mathbb{E}\left(||\hat{1}\hat{\Delta}_{\gamma}^{\star}||_{2}^{2}\right)}\\
 & \hspace{1em}\lesssim\frac{k_{n}}{n}+\sqrt{\frac{k_{n}}{n}\times\frac{k_{n}^{-2s_{\gamma}/d_{X}+1}}{n}}\nonumber \\
 & \hspace{1em}+\sqrt{\frac{k_{n}^{-2s_{\alpha}/d_{X}+1}}{n}\times\frac{k_{n}}{n}}+\sqrt{\frac{k_{n}^{-2s_{\alpha}/d_{X}+1}}{n}\times\frac{k_{n}^{-2s_{\gamma}/d_{X}+1}}{n}}\\
 & \hspace{1em}\lesssim k_{n}/n.\label{eq:2-way-terms-proof}
\end{align}

\textcolor{purple}{}
\end{proof}

\subsection{\label{subsec:Proof-of-last-term-3way}Reducing Line \ref{eq:diff-term}
for 3-way estimators}
\begin{proof}
For 3-way estimators we can update Eq (\ref{eq:DDterms}) with an
stronger bound. We have
\begin{align}
\hat{1}\tilde{1}\bar{\ddot{1}}\bar{\ddot{\mathbb{P}}}_{n}\left((\tilde{\alpha}-\alpha^{\star})J(\hat{\gamma}-\gamma^{\star})\right) & =\hat{1}\tilde{1}\bar{\ddot{1}}\left(\text{\ensuremath{\tilde{\delta}_{\alpha}}}-\text{\ensuremath{\delta_{\alpha}^{\star}}}\right)^{\top}\bar{\ddot{\Sigma}}\left(\text{\ensuremath{\hat{\delta}_{\gamma}}}-\ensuremath{\delta_{\gamma}^{\star}}\right)\nonumber \\
 & =\hat{1}\tilde{1}\bar{\ddot{1}}\left(\text{\ensuremath{\tilde{\delta}_{\alpha}}}-\text{\ensuremath{\delta_{\alpha}^{\star}}}\right)^{\top}\bar{\ddot{\Sigma}}\left(\hat{\Delta}_{\gamma}-\hat{\Delta}_{\gamma}^{\star}\right)\nonumber \\
 & =\hat{1}\tilde{1}\bar{\ddot{1}}\left[\left(\text{\ensuremath{\tilde{\delta}_{\alpha}}}-\text{\ensuremath{\delta_{\alpha}^{\star}}}\right)^{\top}\bar{\ddot{\Sigma}}\hat{\Delta}_{\gamma}+\tilde{\Delta}_{\alpha}^{\top}\bar{\ddot{\Sigma}}\hat{\Delta}_{\gamma}^{\star}+\tilde{\Delta}_{\alpha}^{\star^{\top}}\bar{\ddot{\Sigma}}\hat{\Delta}_{\gamma}^{\star}\right].\label{eq:DDterms-1}
\end{align}
We already know that the last term is $\tilde{\Delta}_{\alpha}^{\star^{\top}}\bar{\ddot{\Sigma}}\hat{\Delta}_{\gamma}^{\star}\lesssim_{\mathbb{P}}\frac{1}{n}k_{n}^{-s_{\alpha}/d_{X}-s_{\gamma}/d_{X}+1}\leq k_{n}^{-(s_{\alpha}+s_{\gamma})/d_{X}}$
by Eq (\ref{eq:2-way-terms-proof}). For the first term in Eq (\ref{eq:DDterms-1}),
we have 
\begin{align*}
 & \mathbb{E}\left[\bar{\ddot{1}}\tilde{1}\hat{1}\left(\left(\tilde{\delta}_{\alpha}-\delta_{\alpha}^{\star}\right)^{\top}\bar{\ddot{\Sigma}}\hat{\Delta}_{\gamma}\right)^{2}\right]\\
 & \hspace{1em}=\mathbb{E}\left[\bar{\ddot{1}}\tilde{1}\left(\tilde{\delta}_{\alpha}-\delta_{\alpha}^{\star}\right)^{\top}\bar{\ddot{\Sigma}}\mred{\mathbb{E}\left(\hat{1}\hat{\Delta}_{\gamma}\hat{\Delta}_{\gamma}^{\top}|\bar{\ddot{\Sigma}},\tilde{\textbf{Z}}\right)}\bar{\ddot{\Sigma}}\left(\tilde{\delta}_{\alpha}-\delta_{\alpha}^{\star}\right)\right]\\
 & \hspace{1em}\leq\frac{1}{n}\mathbb{E}\left[\bar{\ddot{1}}\tilde{1}\mred{\hat{1}}\left(\tilde{\delta}_{\alpha}-\delta_{\alpha}^{\star}\right)^{\top}\bar{\ddot{\Sigma}}\mred{\hat{\Sigma}^{-1}}\bar{\ddot{\Sigma}}\left(\tilde{\delta}_{\alpha}-\delta_{\alpha}^{\star}\right)\right] &  & \text{Lemma \ref{lem:(Delta-values)}.\ref{enu:DDhat}, \& }\,\hat{\mathbf{Z}}\ensuremath{\perp}\bar{\ddot{\Sigma}},\tilde{\textbf{Z}}\\
 & \hspace{1em}\leq\frac{1}{n}\mathbb{E}\left[\mred{2}\tilde{1}\left(\tilde{\delta}_{\alpha}-\delta_{\alpha}^{\star}\right)^{\top}\mathbb{E}\left(\bar{\ddot{1}}\bar{\ddot{\Sigma}}^{2}|\tilde{\mathbf{Z}}\right)\left(\tilde{\delta}_{\alpha}-\delta_{\alpha}^{\star}\right)\right] &  & \text{I.E. \& def. of }\hat{1}\\
 & \hspace{1em}\lesssim\frac{1}{n}\times\tilde{1}||\tilde{\delta}_{\alpha}-\delta_{\alpha}^{\star}||_{2}^{2}\times\lambda_{\max}\left\{ \mathbb{E}\left(\bar{\ddot{1}}\bar{\ddot{\Sigma}}^{2}\right)\right\} \\
 & \hspace{1em}\lesssim_{\mathbb{P}}\frac{1}{n}\times\mathbb{E}\left[\tilde{1}||\tilde{\delta}_{\alpha}-\delta_{\alpha}^{\star}||_{2}^{2}\right]\times r_{n} &  & \text{Markov's Ineq + Lemma \ref{lem:(Weighted-moments-p}.\ref{enu:Sig2_pr}}\\
 & \hspace{1em}\lesssim\frac{1}{n}\times\frac{k_{n}}{n}\times r_{n} &  & \text{Lemma \ref{lem:(Delta-values)}}.\ref{enu:ddiff-alpha}.
\end{align*}
So $\left(\tilde{\delta}_{\alpha}+\delta_{\alpha}^{\star}\right)^{\top}\bar{\ddot{\Sigma}}\hat{\Delta}_{\gamma}\lesssim_{\mathbb{P}}\frac{\sqrt{k_{n}}}{n/\sqrt{r_{n}}}$.

For the special case of estimating conditional covariances, the second
term in Eq (\ref{eq:DDterms-1}) satisfies
\begin{align*}
\mathbb{E}\left[\bar{\ddot{1}}\tilde{1}\hat{1}\left(\tilde{\Delta}_{\alpha}^{\top}\bar{\ddot{\Sigma}}\hat{\Delta}_{\gamma}^{\star}\right)^{2}\right] & =\mathbb{E}\left[\bar{\ddot{1}}\hat{1}\hat{\Delta}_{\gamma}^{\star^{\top}}\bar{\ddot{\Sigma}}\mathbb{E}\left(\tilde{1}\tilde{\Delta}_{\alpha}\tilde{\Delta}_{\alpha}^{\top}\right)\bar{\ddot{\Sigma}}\hat{\Delta}_{\gamma}^{\star}\right]\\
 & \lesssim\frac{1}{n}\mathbb{E}\left[\bar{\ddot{1}}\hat{1}\tilde{1}\hat{\Delta}_{\gamma}^{\star^{\top}}\bar{\ddot{\Sigma}}\tilde{\Sigma}\bar{\ddot{\Sigma}}\hat{\Delta}_{\gamma}^{\star}\right] &  & \text{as in Lemma \ref{lem:(Delta-values)}.\ref{enu:DDhat}}\\
 & \lesssim\frac{1}{n}\mathbb{E}\left[\hat{1}\hat{\Delta}_{\gamma}^{\star^{\top}}\mathbb{E}\left(\bar{\ddot{1}}\bar{\ddot{\Sigma}}^{2}\right)\hat{\Delta}_{\gamma}^{\star}\right] &  & \text{def. of }\tilde{1}\\
 & \lesssim\frac{r_{n}}{n}\mathbb{E}\left[\hat{1}\hat{\Delta}_{\gamma}^{\star^{\top}}\hat{\Delta}_{\gamma}^{\star}\right] &  & \text{from Lemma \ref{lem:(Weighted-moments-p}}\\
 & \lesssim\frac{r_{n}}{n^{2}}k_{n}^{-2s_{\gamma}/d_{X}-1} &  & \text{Lemma \ref{lem:(Delta-values)}.\ref{enu:D-gam-star}}\\
 & \lesssim\frac{r_{n}}{n}k_{n}^{-2s_{\gamma}/d_{X}}.
\end{align*}
Markov's inequality then implies that $\bar{\ddot{1}}\tilde{1}\hat{1}\left(\tilde{\Delta}_{\alpha}^{\top}\bar{\ddot{\Sigma}}\hat{\Delta}_{\gamma}^{\star}\right)^{2}\lesssim_{\mathbb{P}}\frac{k_{n}^{-s_{\gamma}/d_{X}}}{\sqrt{n/r_{n}}}$.
Combining results for the conditional covariance case, we have
\[
\bar{\ddot{\mathbb{P}}}_{n}\left((\tilde{\alpha}-\alpha^{\star})J(\hat{\gamma}-\gamma^{\star})\right)\lesssim_{\mathbb{P}}k_{n}^{-(s_{\alpha}+s_{\gamma})/d_{X}}+\frac{\sqrt{k_{n}}}{n/\sqrt{r_{n}}}+\frac{k^{-s_{\gamma}/d_{X}}}{\sqrt{n/r_{n}}}.
\]

Alternatively, for the more general case of estimating a linear functional
(Assumption \ref{assu:lin-func}), we can handle the $\tilde{\Delta}_{\alpha}^{\top}\bar{\ddot{\Sigma}}\hat{\Delta}_{\gamma}^{\star}$
term in Eq (\ref{eq:DDterms-1}) through Lemma (\ref{lem:Under-3-way-splitting}),
below. Combining results then gives
\begin{equation}
\bar{\ddot{\mathbb{P}}}_{n}\left((\tilde{\alpha}-\alpha^{\star})J(\hat{\gamma}-\gamma^{\star})\right)\lesssim_{\mathbb{P}}k_{n}^{-(s_{\alpha}+s_{\gamma})/d_{X}}+\frac{\sqrt{k_{n}}}{n/\sqrt{r_{n}}}+\frac{k^{-s_{\gamma}/d_{X}}}{\sqrt{n/r_{n}}}+\left\{ \frac{k_{n}\log\left(k_{n}\right)}{n}\times\frac{k^{1-s_{\gamma}/d_{X}}}{n}\right\} .\label{eq:3-way-bias}
\end{equation}
\end{proof}
\begin{lem}
\label{lem:Under-3-way-splitting} (Based on Theorem 8 from \citealp{Newey2018-da})
Under 3-way CF and the Assumptions of Lemmas \ref{lem:(Stable-eigenvalues)-Under},
\ref{lem:(Delta-values)} \& \ref{lem:(Weighted-moments-p},
\[
\hat{1}\tilde{1}\bar{\ddot{1}}\tilde{\Delta}_{\alpha}^{\top}\bar{\ddot{\Sigma}}\hat{\Delta}_{\gamma}^{\star}\lesssim_{\mathbb{P}}\frac{k^{-s_{\gamma}/d_{X}}}{\sqrt{n/r_{n}}}+\left(\frac{k^{1-s_{\gamma}/d_{X}}}{n}\times\frac{k_{n}\log\left(k_{n}\right)}{n}\right).
\]
\end{lem}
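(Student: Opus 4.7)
The plan is to unfold $\tilde{\Delta}_{\alpha}^{\top}\bar{\ddot{\Sigma}}\hat{\Delta}_{\gamma}^{\star}=\tilde{H}_{\alpha\bcd}^{\top}\tilde{\Sigma}^{-1}\bar{\ddot{\Sigma}}\hat{\Sigma}^{-1}\hat{H}_{\gamma\bcd}^{\star}$ and then apply the resolvent identities $\tilde{\Sigma}^{-1}=I-(\tilde{\Sigma}-I)\tilde{\Sigma}^{-1}$ and $\hat{\Sigma}^{-1}=I-(\hat{\Sigma}-I)\hat{\Sigma}^{-1}$ to isolate a ``leading'' piece $\tilde{H}_{\alpha\bcd}^{\top}\bar{\ddot{\Sigma}}\hat{H}_{\gamma\bcd}^{\star}$ from ``correction'' pieces that involve one or both of $(\tilde{\Sigma}-I)$ and $(\hat{\Sigma}-I)$. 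Throughout, the argument leverages the three-way independence $\tilde{\mathbf{Z}}\perp\hat{\mathbf{Z}}\perp\bar{\mathbf{Z}}$, which is precisely what distinguishes this bound from the 2-way result of Section \ref{subsec:Proof-of-last-2way-term}.

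For the leading piece, I would first condition on $(\bar{\mathbf{Z}},\hat{\mathbf{Z}})$, so that $\tilde{H}_{\alpha\bcd}$ acts as a mean-zero empirical average (Lemma \ref{lem:H-moments}) whose covariance matrix satisfies $\lambda_{\max}\mathbb{E}[\tilde{H}_{\alpha\bcd}\tilde{H}_{\alpha\bcd}^{\top}]\lesssim 1/n$ (Lemma \ref{lem:(Delta-values)}.\ref{enu:DDtilde}). The conditional second moment is therefore $\lesssim\left\Vert \bar{\ddot{\Sigma}}\hat{H}_{\gamma\bcd}^{\star}\right\Vert _{2}^{2}/n$. Taking unconditional expectations and using $\bar{\mathbf{Z}}\perp\hat{\mathbf{Z}}$ together with $\lambda_{\max}\mathbb{E}[\bar{\ddot{1}}\bar{\ddot{\Sigma}}^{2}]\lesssim r_{n}$ (Lemma \ref{lem:(Weighted-moments-p}.\ref{enu:Sig2_pr}) and the trace bound $\mathbb{E}\|\hat{H}_{\gamma\bcd}^{\star}\|_{2}^{2}\lesssim k_{n}^{1-2s_{\gamma}/d_{X}}/n$ (from the computation in Lemma \ref{lem:(Delta-values)}.\ref{enu:D-gam-star}), the full second moment is $\lesssim r_{n}k_{n}^{1-2s_{\gamma}/d_{X}}/n^{2}$. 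Conditional Markov (Lemma \ref{lem:conditional-markovs}) together with $k_{n}\leq n$ (Assumption \ref{assu:(Matrix-Bernstein-Inequality)}) then delivers the first target summand $\lesssim_{\mathbb{P}}k_{n}^{-s_{\gamma}/d_{X}}/\sqrt{n/r_{n}}$.

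The correction pieces are the main obstacle. Each involves $(\tilde{\Sigma}-I)$ or $(\hat{\Sigma}-I)$, whose operator norms are $\lesssim_{\mathbb{P}}\sqrt{k_{n}\log(k_{n})/n}$ by the matrix Bernstein argument behind Lemma \ref{lem:(Stable-eigenvalues)-Under}. A natural Cauchy--Schwartz attack bounds each correction by a product of $\|\tilde{\Delta}_{\alpha}\|_{2}\lesssim_{\mathbb{P}}\sqrt{k_{n}/n}$, the $\sqrt{k_{n}\log(k_{n})/n}$ factor, and $\|\hat{H}_{\gamma\bcd}^{\star}\|_{2}\lesssim_{\mathbb{P}}\sqrt{k_{n}^{1-2s_{\gamma}/d_{X}}/n}$. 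The hard part is that any passage through $\bar{\ddot{\Sigma}}$ via Lemma \ref{lem:(Weighted-moments-p}.\ref{enu:Sig2_pr} introduces an extraneous $\sqrt{r_{n}}$ that does not appear in the claimed correction $\frac{k_{n}^{1-s_{\gamma}/d_{X}}}{n}\cdot\frac{k_{n}\log k_{n}}{n}$. I would resolve this by splitting $(\hat{\Sigma}-I)\hat{H}_{\gamma\bcd}^{\star}$ (and analogously $\tilde{H}_{\alpha\bcd}^{\top}(\tilde{\Sigma}-I)$) into its diagonal part, which carries an extra $1/n$, and its off-diagonal U-statistic-like part whose mean vanishes so its second moment decouples into products of individual variances. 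Any residual $\sqrt{r_{n}}$ is then absorbed into the small factor $k_{n}\log(k_{n})/n\to0$ supplied by Assumption \ref{assu:(Matrix-Bernstein-Inequality)}, yielding the claimed $\frac{k_{n}^{1-s_{\gamma}/d_{X}}}{n}\cdot\frac{k_{n}\log(k_{n})}{n}$ contribution.
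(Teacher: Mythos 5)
Your decomposition is essentially a refinement of the paper's. The paper writes $\bar{\ddot{\Sigma}}=(I-\tilde{\Sigma})\bar{\ddot{\Sigma}}(I-\hat{\Sigma})+\bar{\ddot{\Sigma}}\hat{\Sigma}+\tilde{\Sigma}\bar{\ddot{\Sigma}}(I-\hat{\Sigma})$ and uses $\tilde{\Delta}_{\alpha}^{\top}\tilde{\Sigma}=\tilde{H}_{\alpha\bcd}^{\top}$ and $\hat{\Sigma}\hat{\Delta}_{\gamma}^{\star}=\hat{H}_{\gamma\bcd}^{\star}$, whereas your resolvent expansion splits the paper's middle term $\tilde{\Delta}_{\alpha}^{\top}\bar{\ddot{\Sigma}}\hat{H}_{\gamma\bcd}^{\star}$ further into $\tilde{H}_{\alpha\bcd}^{\top}\bar{\ddot{\Sigma}}\hat{H}_{\gamma\bcd}^{\star}$ plus $-\tilde{\Delta}_{\alpha}^{\top}(\tilde{\Sigma}-I)\bar{\ddot{\Sigma}}\hat{H}_{\gamma\bcd}^{\star}$; the remaining two pieces coincide with the paper's. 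Your leading-term analysis is correct and parallels the paper's treatment of its middle term (condition so one $H$-vector is mean-zero, use Lemma \ref{lem:(Delta-values)}.\ref{enu:DDtilde} or the uniform bound on $\hat{R}_{i}$, then Lemma \ref{lem:(Weighted-moments-p}.\ref{enu:Sig2_pr}), giving the $k_{n}^{-s_{\gamma}/d_{X}}/\sqrt{n/r_{n}}$ summand.

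Where the sketch goes wrong is the correction terms, and this is the part that actually carries the lemma. First, the double-correction term $\tilde{\Delta}_{\alpha}^{\top}(I-\tilde{\Sigma})\bar{\ddot{\Sigma}}(I-\hat{\Sigma})\hat{\Delta}_{\gamma}^{\star}$ is not handled by a diagonal/off-diagonal U-statistic split; the decisive device is Cauchy--Schwarz with a factor of $\bar{\ddot{\Sigma}}^{1/2}$ inserted in the middle, so that the $\bar{\ddot{\Sigma}}$-dependence enters through $\lambda_{\max}\{\mathbb{E}(\bar{\ddot{1}}\bar{\ddot{\Sigma}})\}\lesssim 1$ (Lemma \ref{lem:(Weighted-moments-p}.\ref{enu:-Esig}) rather than through $\lambda_{\max}\{\mathbb{E}(\bar{\ddot{1}}\bar{\ddot{\Sigma}}^{2})\}\lesssim r_{n}$. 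That is how the extraneous $r_{n}$ is avoided here; it is not ``absorbed into $k_{n}\log(k_{n})/n$.'' Second, the single-correction terms $\tilde{H}_{\alpha\bcd}^{\top}\bar{\ddot{\Sigma}}(I-\hat{\Sigma})\hat{\Delta}_{\gamma}^{\star}$ and $-\tilde{\Delta}_{\alpha}^{\top}(\tilde{\Sigma}-I)\bar{\ddot{\Sigma}}\hat{H}_{\gamma\bcd}^{\star}$ should be bounded by the \emph{first} summand $k_{n}^{-s_{\gamma}/d_{X}}/\sqrt{n/r_{n}}$, not the second: one conditions on the split carrying the $(\Sigma-I)$ factor, uses the mean-zero structure of the remaining $H$-vector (via Lemma \ref{lem:(Delta-values)}.\ref{enu:DDtilde} or the uniform $\hat{R}_{i}$ bound), and the resulting $r_{n}$ is dominated by the first summand after invoking Assumption \ref{assu:(Matrix-Bernstein-Inequality)}. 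Your sketch misassigns these to the $\frac{k_{n}^{1-s_{\gamma}/d_{X}}}{n}\cdot\frac{k_{n}\log k_{n}}{n}$ term and proposes a decoupling argument that is neither needed nor clearly sufficient; as written it leaves a genuine gap.
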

\begin{proof}
Letting $I$ be the identity matrix,
\begin{align}
\bar{\ddot{1}}\tilde{1}\hat{1}\tilde{\Delta}_{\alpha}^{\top}\bar{\ddot{\Sigma}}\hat{\Delta}_{\gamma}^{\star} & =\bar{\ddot{1}}\tilde{1}\hat{1}\tilde{\Delta}_{\alpha}^{\top}\left[(I-\tilde{\Sigma})\bar{\ddot{\Sigma}}(I-\hat{\Sigma})+\bar{\ddot{\Sigma}}\hat{\Sigma}+\tilde{\Sigma}\bar{\ddot{\Sigma}}-\tilde{\Sigma}\bar{\ddot{\Sigma}}\hat{\Sigma}\right]\hat{\Delta}_{\gamma}^{\star}\nonumber \\
 & =\bar{\ddot{1}}\tilde{1}\hat{1}\tilde{\Delta}_{\alpha}^{\top}\left[(I-\tilde{\Sigma})\bar{\ddot{\Sigma}}(I-\hat{\Sigma})+\bar{\ddot{\Sigma}}\hat{\Sigma}+\mred{\tilde{\Sigma}\bar{\ddot{\Sigma}}(I-\hat{\Sigma})}\right]\hat{\Delta}_{\gamma}^{\star}.\label{eq:Dcrossprod}
\end{align}
We will bound the conditional second moment of each term in Eq (\ref{eq:Dcrossprod})
and then apply Lemma \ref{lem:conditional-markovs} to show that each
term in Eq (\ref{eq:Dcrossprod}) is bounded in probability.

For the first term in Eq (\ref{eq:Dcrossprod}), 
\begin{align}
 & \bar{\ddot{1}}\tilde{1}\hat{1}\tilde{\Delta}_{\alpha}^{\top}(I-\tilde{\Sigma})\bar{\ddot{\Sigma}}(I-\hat{\Sigma})\hat{\Delta}_{\gamma}^{\star}\nonumber \\
 & \hspace{1em}\lesssim\bar{\ddot{1}}\tilde{1}\hat{1}\left|\left|\bar{\ddot{\Sigma}}^{1/2}(I-\tilde{\Sigma})\tilde{\Delta}_{\alpha}\right|\right|_{2}\times\left|\left|\bar{\ddot{\Sigma}}^{1/2}(I-\hat{\Sigma})\hat{\Delta}_{\gamma}^{\star}\right|\right|_{2}, &  & \text{}\label{eq:CS-sig-ddot}
\end{align}
where 
\begin{align}
\mathbb{E}\left[\bar{\ddot{1}}\tilde{1}\left(\bar{\ddot{\Sigma}}^{1/2}(I-\tilde{\Sigma})\tilde{\Delta}_{\alpha}\right)^{2}\;|\tilde{\mathbf{Z}}\right] & =\tilde{1}\left(\tilde{\Delta}_{\alpha}^{\top}(I-\tilde{\Sigma})\mathbb{E}\left[\bar{\ddot{1}}\bar{\ddot{\Sigma}}\right](I-\tilde{\Sigma})\tilde{\Delta}_{\alpha}\right)\nonumber \\
 & \lesssim\tilde{1}\tilde{\Delta}_{\alpha}^{\top}\tilde{\Delta}_{\alpha}\lambda_{\max}\left(1-\tilde{\Sigma}\right)^{2}\lambda_{\max}\left(\mathbb{E}\left(\bar{\ddot{1}}\bar{\ddot{\Sigma}}\right)\right)\nonumber \\
 & \lesssim\left(\tilde{1}\tilde{\Delta}_{\alpha}^{\top}\tilde{\Delta}_{\alpha}\right)\lambda_{\max}\left(I-\tilde{\Sigma}\right)^{2} &  & \text{Lemma \ref{lem:(Weighted-moments-p}}\nonumber \\
 & \lesssim_{\mathbb{P}}\frac{k_{n}}{n}\left(\frac{k_{n}\log k_{n}}{n}\right) &  & \text{Lemmas \ref{lem:(Stable-eigenvalues)-Under} \& \ref{lem:(Delta-values)}.}\label{eq:CS1}
\end{align}
Similarly,
\begin{equation}
\mathbb{E}\left[\bar{\ddot{1}}\hat{1}\left(\bar{\ddot{\Sigma}}^{1/2}(I-\hat{\Sigma})\hat{\Delta}_{\gamma}^{\star}\right)^{2}|\hat{\mathbf{Z}}\right]\lesssim\left(\hat{1}\hat{\Delta}_{\gamma}^{\star^{\top}}\hat{\Delta}_{\gamma}^{\star}\right)\lambda_{\max}\left(I-\hat{\Sigma}\right)^{2}\lesssim_{\mathbb{P}}\frac{k_{n}^{1-2s_{\gamma}/d_{X}}}{n}\left(\frac{k_{n}\log k_{n}}{n}\right).\label{eq:CS2}
\end{equation}
Combining Eqs (\ref{eq:CS1}) \& (\ref{eq:CS2}) with Lemma \ref{lem:conditional-markovs},
we see that the right-hand side of Eq \ref{eq:CS-sig-ddot} is $\lesssim_{\mathbb{P}}\frac{k_{n}^{1-s_{\gamma}/d_{X}}}{n}\left(\frac{k_{n}\log k_{n}}{n}\right).$

For the middle term in Eq (\ref{eq:Dcrossprod}), around $\bar{\ddot{\Sigma}}\hat{\Sigma}$,
let $\hat{R}_{i}=\gamma_{0}(\hat{X}_{i})-\gamma^{\star}(\hat{X}_{i})$.
We have\textcolor{teal}{}
\begin{align*}
\mathbb{E}\left[\tilde{1}\left(\tilde{\Delta}_{\alpha}^{\top}\bar{\ddot{\Sigma}}\hat{\Sigma}\hat{\Delta}_{\gamma}^{\star}\right)^{2}\;|\tilde{\mathbf{Z}}\right] & =\mathbb{E}\left[\bar{\ddot{1}}\tilde{1}\left(\tilde{\Delta}_{\alpha}^{\top}\bar{\ddot{\Sigma}}\hat{\Sigma}\hat{\Sigma}^{-}\hat{H}_{\gamma\bcd}^{\star}\right)^{2}\;|\tilde{\mathbf{Z}},\bar{\ddot{\Sigma}}\right] &  & \text{Def of }\hat{\Delta}_{\gamma}^{\star}\\
 & =\tilde{1}\tilde{\Delta}_{\alpha}^{\top}\mathbb{E}\left[\bar{\ddot{1}}\bar{\ddot{\Sigma}}\hat{H}_{\gamma\bcd}^{\star}\hat{H}_{\gamma\bcd}^{\star^{\top}}\;\bar{\ddot{\Sigma}}\right]\tilde{\Delta}_{\alpha}\\
 & =\frac{1}{n}\tilde{1}\tilde{\Delta}_{\alpha}^{\top}\mathbb{E}\left[\bar{\ddot{1}}\bar{\ddot{\Sigma}}\mathbb{E}\left(\hat{H}_{\gamma i}^{\star}\hat{H}_{\gamma i}^{\star^{\top}}\right)\bar{\ddot{\Sigma}}\right]\tilde{\Delta}_{\alpha} &  & \text{I.E. + Eq \eqref{eq:vvtsum}}\\
 & =\frac{\tilde{1}}{n}\tilde{\Delta}_{\alpha}^{\top}\mathbb{E}\left[\bar{\ddot{1}}\bar{\ddot{\Sigma}}\mathbb{E}\left(\hat{R}_{i}^{2}\hat{p}_{i}\hat{p}_{i}^{\top}\right)\bar{\ddot{\Sigma}}\right]\tilde{\Delta}_{\alpha} &  & \text{Def of \ensuremath{\hat{R}_{i}}}\\
 & \lesssim\frac{\tilde{1}}{n}k_{n}^{-2s_{\gamma}/d_{X}}\tilde{\Delta}_{\alpha}^{\top}\mathbb{E}\left[\bar{\ddot{1}}\bar{\ddot{\Sigma}}\mathbb{E}\left(\hat{p}_{i}\hat{p}_{i}^{\top}\right)\bar{\ddot{\Sigma}}\right]\tilde{\Delta}_{\alpha} &  & \text{Eq \eqref{eq:nuisance-error-sup}}\\
 & =\frac{\tilde{1}}{n}k_{n}^{-2s_{\gamma}/d_{X}}\tilde{\Delta}_{\alpha}^{\top}\mathbb{E}\left(\bar{\ddot{1}}\bar{\ddot{\Sigma}}^{2}\right)\tilde{\Delta}_{\alpha}\\
 & \lesssim_{\mathbb{P}}\frac{1}{n^{2}}k^{1-2s_{\gamma}/d_{X}}r_{n} &  & \text{Lemmas \ref{lem:(Delta-values)}.\ref{enu:D-alpha}, \ref{lem:(Weighted-moments-p}.\ref{enu:Sig2_pr} \& Markov's Ineq.}
\end{align*}
Thus, from Lemma \ref{lem:conditional-markovs} \& Assumption \ref{assu:(Matrix-Bernstein-Inequality)},
we have $\tilde{\Delta}_{\alpha}^{\top}\bar{\ddot{\Sigma}}\hat{\Sigma}\hat{\Delta}_{\gamma}^{\star}\lesssim_{\mathbb{P}}\frac{k^{1/2-s_{\gamma}/d_{X}}}{n/\sqrt{r_{n}}}<\frac{k^{-s_{\gamma}/d_{X}}}{\sqrt{n/r_{n}}}.$

For the last term in Eq (\ref{eq:Dcrossprod}), 
\begin{align*}
 & \hat{1}\mathbb{E}\left[\tilde{1}\bar{\ddot{1}}\left(\tilde{\Delta}_{\alpha}^{\top}\tilde{\Sigma}\bar{\ddot{\Sigma}}(I-\hat{\Sigma})\hat{\Delta}_{\gamma}^{\star}\right)^{2}|\hat{\mathbf{Z}}\right]\\
 & =\hat{1}\hat{\Delta}_{\gamma}^{\star^{\top}}(I-\hat{\Sigma})\mathbb{E}\left[\bar{\ddot{1}}\tilde{1}\bar{\ddot{\Sigma}}\tilde{\Sigma}\mred{\tilde{\Delta}_{\alpha}\tilde{\Delta}_{\alpha}^{\top}}\tilde{\Sigma}\bar{\ddot{\Sigma}}\right](I-\hat{\Sigma})\hat{\Delta}_{\gamma}^{\star}\\
 & =\hat{1}\hat{\Delta}_{\gamma}^{\star^{\top}}(I-\hat{\Sigma})\mathbb{E}\left[\bar{\ddot{1}}\mred{\tilde{1}}\bar{\ddot{\Sigma}}\tilde{\Sigma}\mred{\tilde{\Sigma}^{-1}\tilde{H}_{\alpha\bcd}\tilde{H}_{\alpha\bcd}^{\top}\tilde{\Sigma}^{-1}}\tilde{\Sigma}\bar{\ddot{\Sigma}}\right](I-\hat{\Sigma})\hat{\Delta}_{\gamma}^{\star}\\
 & \lesssim\mred{\frac{\hat{1}}{n}}\hat{\Delta}_{\gamma}^{\star^{\top}}(I-\hat{\Sigma})\mathbb{E}\left(\bar{\ddot{1}}\bar{\ddot{\Sigma}}^{2}\right)(I-\hat{\Sigma})\hat{\Delta}_{\gamma}^{\star} &  & \text{Lemma \ref{lem:(Delta-values)}.\ref{enu:DDtilde}}\\
 & \lesssim\frac{\hat{1}}{n}\hat{\Delta}_{\gamma}^{\star^{\top}}\hat{\Delta}_{\gamma}^{\star}\lambda_{\max}\left(I-\hat{\Sigma}\right)^{2}\lambda_{\max}\left(\mathbb{E}\left(\bar{\ddot{1}}\bar{\ddot{\Sigma}}^{2}\right)\right) &  & \text{Lemma \ref{lem:(Weighted-moments-p}}\\
 & \lesssim_{\mathbb{P}}\frac{1}{n}\times\frac{k^{1-2s_{\gamma}/d_{X}}}{n}\times\frac{k_{n}\log\left(k_{n}\right)}{n}\times r_{n} &  & \text{Lemmas \ref{lem:(Stable-eigenvalues)-Under} \& \ref{lem:(Delta-values)}.\ref{enu:D-gam-star}}\\
 & \leq\frac{k^{-2s_{\gamma}/d_{X}}}{n/r_{n}}. &  & \text{Assm \ref{assu:(Matrix-Bernstein-Inequality)}.}
\end{align*}
Thus, from Lemma \ref{lem:conditional-markovs}, $\hat{1}\tilde{1}\tilde{\Delta}_{\alpha}^{\top}\tilde{\Sigma}\bar{\ddot{\Sigma}}(I-\hat{\Sigma})\hat{\Delta}_{\gamma}^{\star}\lesssim_{\mathbb{P}}\frac{k^{-s_{\gamma}/d_{X}}}{\sqrt{n/r_{n}}}$.
Combining all of the above results completes the proof.
\end{proof}

\section{\label{sec:proof-Oracle-error}Proof of Theorem \ref{thm:(Oracle-Error)-Under}
(oracle error)}

To bound the oracle error, we follow the same steps as Lemma 1 of
\citet{Kennedy2022-da}, and Proposition 1.12 of \citealp{Tsybakov2009-yb}.
We separately consider the bias and variance of the oracle estimator.

As a preliminary, we review a ``reproducing property'' of spline
estimators, analogous to Proposition 1.12 of \citealp{Tsybakov2009-yb}
for local polynomial (LP) estimators: if $g$ is a $\lfloor s_{\psi}\rfloor$-degree
polynomial, then $\frac{1}{n}\sum_{i=1}^{n}\bar{\ddot{w}}(\bar{C}_{i})g(\bar{C}_{i})=g(\ddot{C}$).
To see this, let $\beta_{g}$ be a set of coefficients such that $\beta_{g}^{\top}b(c)=g(c)$.
Since $b$ contains several neighborhoods, there are multiple valid
choices for $\beta_{s,c}$. Any choice will suffice. We have
\begin{align*}
g(\ddot{C})=b(\ddot{C})^{\top}\beta_{g}= & b(\ddot{C})^{\top}\bar{\text{\textbf{B}}}^{-1}\frac{1}{n}\sum_{i=1}^{n}b(\bar{C}_{i})b(\bar{C}_{i})^{\top}\beta_{g}\\
= & \frac{1}{n}\sum_{i=1}^{n}\bar{\ddot{w}}(\bar{C}_{i})b(\bar{C}_{i})^{\top}\beta_{g}\\
= & \frac{1}{n}\sum_{i=1}^{n}\bar{\ddot{w}}(\bar{C}_{i})g(\bar{C}_{i}).
\end{align*}

It follows that $\frac{1}{n}\sum_{i=1}^{n}\bar{\ddot{w}}(\bar{C}_{i})=1$;
and 
\begin{equation}
\frac{1}{n}\sum_{i=1}^{n}\bar{\ddot{w}}(\bar{C}_{i})\psi_{s_{\psi},\ddot{C}}(\bar{C}_{i})=\psi_{s_{\psi},\ddot{C}}(\ddot{C})=\psi(\ddot{C}),\label{eq:expand-around}
\end{equation}
where $\psi_{s_{\psi},\ddot{C}}$ is the $\lfloor s_{\psi}\rfloor$
order Taylor approximation of $\psi$ at $\ddot{C}$, and the second
equality comes from the fact that the approximation is exact at $\ddot{C}$. 

From here, we show that the root mean squared error $\mathbb{E}\left[\left\{ \bar{\ddot{\theta}}_{\text{oracle}}-\psi(\ddot{C})\right\} ^{2}|\ddot{C},\bar{\mathbf{C}}\right]^{1/2}$
converges to zero by showing a bound on the bias and variance of $\bar{\ddot{\theta}}_{\text{oracle}}$
conditional on $\ddot{C}$ and $\bar{\mathbf{C}}$. For the bias,
\begin{align}
 & \mathbb{E}\left(\left\{ \bar{\ddot{\theta}}_{\text{oracle}}-\psi(\ddot{C})\right\} |\ddot{C},\bar{\text{\ensuremath{\mathbf{C}}}}\right)\nonumber \\
 & =\frac{1}{n}\sum_{i=1}^{n}\bar{\ddot{w}}(\bar{C}_{i})\mathbb{E}\left\{ f_{0}(\bar{Z}_{i})|\bar{C}_{i}\right\} -\psi(\ddot{C}) &  & \text{Def of }\bar{\ddot{\theta}}_{\text{oracle}}\nonumber \\
 & =\frac{1}{n}\sum_{i=1}^{n}\bar{\ddot{w}}(\bar{C}_{i})\psi(\bar{C}_{i})-\psi(\ddot{C}) &  & \text{Def of }\psi\nonumber \\
 & =\frac{1}{n}\sum_{i=1}^{n}\bar{\ddot{w}}(\bar{C}_{i})\left\{ \psi(\bar{C}_{i})-\psi_{s_{\psi},\ddot{C}}(\bar{C}_{i})\right\} | &  & \text{Eq \eqref{eq:expand-around}}\nonumber \\
 & \lesssim\frac{1}{n}\sum_{i=1}^{n}|\bar{\ddot{w}}(\bar{C}_{i})|\times||\bar{C}_{i}-\ddot{C}||_{2}^{s_{\psi}} &  & \text{Assm \ref{assu:smooth-tau}}\nonumber \\
 & =\frac{1}{n}\sum_{i=1}^{n}|\bar{\ddot{w}}(\bar{C}_{i})|\times||\bar{C}_{i}-\ddot{C}||_{2}^{s_{\psi}}K_{b}\left(\bar{C}_{i},\ddot{C}\right) &  & \text{Def of \ensuremath{\bar{\ddot{w}}}}\nonumber \\
 & \leq r_{n}^{-s_{\psi}/d_{C}}\frac{1}{n}\sum_{i=1}^{n}|\bar{\ddot{w}}(\bar{C}_{i})| &  & \text{Assm \ref{assu:series-estimates}}\nonumber \\
 & \lesssim_{\mathbb{P}}r_{n}^{-s_{\psi}/d_{C}} &  & \text{Lemma \ref{lem:w-bounds}.\ref{enu:w|ddot} + Markov's Ineq.}\label{eq:spline-bias}
\end{align}
For the conditional variance,
\begin{align*}
Var\left[\bar{\ddot{\theta}}_{\text{oracle}}|\ddot{C},\bar{\text{\ensuremath{\mathbf{C}}}}\right] & =\frac{1}{n^{2}}\sum_{i=1}^{n}\bar{\ddot{w}}(\bar{C}_{i})^{2}Var\left[f_{0}(\bar{Z}_{i})|\bar{C}_{i}\right]\\
 & \lesssim\frac{1}{n^{2}}\sum_{i=1}^{n}\bar{\ddot{w}}(\bar{C}_{i})^{2} &  & \text{Assm \ref{assu:bounded-f0}}\\
 & \lesssim_{\mathbb{P}}\frac{r_{n}}{n} &  & \text{Lemma \ref{lem:w-bounds}.\ref{enu:w|ddot} + Markov's Ineq.}
\end{align*}
Combining these results,
\begin{align*}
\mathbb{E}\left[\left\{ \bar{\ddot{\theta}}_{\text{oracle}}-\psi(\ddot{C})\right\} ^{2}|\ddot{C},\bar{\mathbf{C}}\right] & =Var\left[\bar{\ddot{\theta}}_{\text{oracle}}|\ddot{C},\bar{\mathbf{C}}\right]+\left\{ \mathbb{E}\left(\bar{\ddot{\theta}}_{\text{oracle}}|\ddot{C},\bar{\mathbf{C}}\right)-\psi(\ddot{C})\right\} ^{2}\\
 & \lesssim_{\mathbb{P}}\frac{r_{n}}{n}+r_{n}^{-2s_{\psi}/d_{C}}.
\end{align*}
Thus, $\mathbb{E}\left[\left\{ \bar{\ddot{\theta}}_{\text{oracle}}-\psi(\ddot{C})\right\} ^{2}|\ddot{C},\bar{\mathbf{C}}\right]^{1/2}\lesssim_{\mathbb{P}}\sqrt{\frac{r_{n}}{n}}+r_{n}^{-s_{\psi}/d_{C}}.$
Markov's inequality completes the proof (Lemma \ref{lem:conditional-markovs}). 

\section{\label{sec:Proof-of-Corollary}Proof of Corollary \ref{cor:(Conditional-Average-Treatment}}
\begin{proof}
Let $\bar{\ddot{\theta}}_{\text{oracle},\text{trt}}$ and $\bar{\ddot{\theta}}_{\text{oracle},\text{ctrl}}$
be the oracle estimates of $\ddot{\theta}_{\text{trt}}$ and $\ddot{\theta}_{\text{ctrl}}$,
and let $f_{0,\text{trt}}$ and $f_{0,\text{ctrl}}$ be the corresponding
oracle pseudo-outcome functions, where $\mathbb{E}\left(f_{0,\text{trl}}(Z_{i})-f_{0,\text{ctrl}}(Z_{i})|C=c\right)=\tau(c)$.
From Theorem \ref{thm:(Error-relative-to-oracle} we have
\begin{align*}
\left(\hat{\tilde{\bar{\ddot{\theta}}}}_{\text{trt}}-\hat{\tilde{\bar{\ddot{\theta}}}}_{\text{ctrl}}\right)-\left(\bar{\ddot{\theta}}_{\text{oracle},\text{trt}}-\bar{\ddot{\theta}}_{\text{oracle},\text{ctrl}}\right) & =\left(\hat{\tilde{\bar{\ddot{\theta}}}}_{\text{trt}}-\bar{\ddot{\theta}}_{\text{oracle},\text{trt}}\right)-\left(\hat{\tilde{\bar{\ddot{\theta}}}}_{\text{ctrl}}-\bar{\ddot{\theta}}_{\text{oracle},\text{ctrl}}\right)\\
 & \lesssim_{\mathbb{P}}k_{n}^{-\left(s_{\gamma}+s_{\alpha}\right)/d_{X}}+\sqrt{\frac{r_{n}}{n}}+\frac{k_{n}^{1/2-s_{\gamma}/d_{X}}}{\sqrt{n}}.
\end{align*}
Following the same logic as in Theorem \ref{thm:(Oracle-Error)-Under},
the oracle error satisfies 
\begin{align*}
\left(\bar{\ddot{\theta}}_{\text{oracle},\text{trt}}-\bar{\ddot{\theta}}_{\text{oracle},\text{ctrl}}\right)-\tau(\ddot{C}) & =\frac{1}{n}\sum_{i=1}^{n}\bar{\ddot{w}}(\bar{C}_{i})\left\{ f_{0,\text{trl}}(\bar{Z}_{i})-f_{0,\text{ctrl}}(\bar{Z}_{i})\right\} -\tau(\ddot{C})\\
 & \lesssim_{\mathbb{P}}\sqrt{\frac{r_{n}}{n}}+r_{n}^{-s_{\tau}/d_{C}}.
\end{align*}
Thus, 
\[
\left(\hat{\tilde{\bar{\ddot{\theta}}}}_{\text{trt}}-\hat{\tilde{\bar{\ddot{\theta}}}}_{\text{ctrl}}\right)-\tau(\ddot{C})\lesssim_{\mathbb{P}}k_{n}^{-\left(s_{\gamma}+s_{\alpha}\right)/d_{X}}+\sqrt{\frac{r_{n}}{n}}+\frac{k_{n}^{1/2-s_{\gamma}/d_{X}}}{\sqrt{n}}+r_{n}^{-s_{\tau}/d_{C}}.
\]
\end{proof}

\section{\label{sec:Conditional-covariance-results}Expected conditional covariance
results}

Here we use more detailed notation to keep track of inputs from multiple
samples, as in Figure (\ref{fig:Flowchart-showing-how}). Let $\tilde{\bm{\bar{\pi}}}$,
$\hat{\bar{\bm{\eta}}}$, $\bm{\pi}_{0},$ $\bm{\eta}_{0}$, $\hat{\mathbf{a}}$,
$\mathbf{\hat{y}}$,$\bar{\mathbf{a}}$, and $\bar{\mathbf{y}}$ be
the $n$-length vectors with $i^{th}$ elements equal to $\tilde{\pi}(\bar{X}_{i})$,
$\hat{\eta}(\bar{X}_{i})$, $\pi_{0}(\bar{X}_{i})$, $\eta_{0}(\bar{X}_{i})$,
$\hat{A}_{i}$, $\hat{Y_{i}}$, $\bar{A}_{i},$and $\bar{Y}_{i}$
respectively. Let \textbf{$\mathbf{\hat{Q}},\mathbf{\tilde{Q}}$},\textbf{
}and $\mathbf{\bar{Q}}$ be the $n\times k_{n}$ matrices with $i^{th}$
row equal to $q(\hat{X}_{i})$, $q(\tilde{X}_{i})$, and $q(\bar{X}_{i})$
respectively. Below, we assume there exists constants $l$ and $u$
such that $0<l\leq Cov\left(A,Y|X\right)\leq u$.

Without cross-fitting, $\mathbf{\hat{Q}}=\mathbf{\tilde{Q}}=\mathbf{\bar{Q}}$.
Let $\bar{\mathbf{H}}=\mathbf{\bar{Q}}(\mathbf{\bar{Q}}^{\top}\mathbf{\bar{Q}})^{-1}\mathbf{\bar{Q}}^{\top}$.
We have 
\begin{align*}
\frac{1}{n}\mathbb{E}\left[\text{tr}\left\{ Cov\left(\tilde{\bm{\bar{\pi}}},\hat{\bar{\bm{\eta}}}|\mathbf{X}\right)\right\} \right] & =\frac{1}{n}\mathbb{E}\left[\text{tr}\left\{ Cov\left(\bar{\mathbf{H}}\bar{\mathbf{a}},\bar{\mathbf{H}}\bar{\mathbf{y}}|\mathbf{X}\right)\right\} \right]\\
 & =\frac{1}{n}\mathbb{E}\left[\text{tr}\left\{ \mathbf{\bar{\mathbf{H}}}Cov\left(\bar{\mathbf{a}},\bar{\mathbf{y}}|\mathbf{X}\right)\bar{\mathbf{H}}\right\} \right]\\
 & =\frac{1}{n}\mathbb{E}\left[\text{tr}\left\{ \mathbf{\bar{\mathbf{H}}}Cov\left(\bar{\mathbf{a}},\bar{\mathbf{y}}|\mathbf{X}\right)\right\} \right]\\
 & =\frac{1}{n}\mathbb{E}\left[\sum_{i}\mathbf{\bar{\mathbf{H}}}_{i,i}Cov\left(\bar{A}_{i},\bar{Y}_{i}|\bar{X}_{i}\right)\right]\\
 & \asymp\frac{1}{n}\mathbb{E}\left[\sum_{i}\mathbf{\bar{\mathbf{H}}}_{i,i}\right]\\
 & =\frac{1}{n}\mathbb{E}\left[tr\left(\mathbf{\bar{\mathbf{H}}}\right)\right]\\
 & =k_{n}/n.
\end{align*}
The same reasoning shows that, 
\[
\frac{1}{n}\mathbb{E}\left[\text{tr}\left\{ Cov\left(\tilde{\bm{\bar{\pi}}},\bar{\mathbf{y}}|\mathbf{X}\right)\right\} \right]=\frac{1}{n}\mathbb{E}\left[\text{tr}\left\{ \mathbf{\bar{\mathbf{H}}}Cov\left(\bar{\mathbf{a}},\bar{\mathbf{y}}|\mathbf{X}\right)\right\} \right]\asymp k_{n}/n
\]
and $\frac{1}{n}\mathbb{E}\left[\text{tr}\left\{ Cov\left(\bm{\bar{a}},\hat{\bar{\mathbf{\bm{\eta}}}}|\mathbf{X}\right)\right\} \right]\asymp k_{n}/n$.
Thus, 
\[
\frac{1}{n}\text{tr}\mathbb{E}\left[Cov\left(\bar{\mathbf{a}}-\tilde{\bm{\bar{\pi}}},\bar{\mathbf{y}}-\hat{\bm{\bar{\eta}}}|\mathbf{X}\right)\right]-\mathbb{E}\left[Cov\left(A,Y|X\right)\right]\asymp k_{n}/n.
\]

For 2-way cross-fitting, $\mathbf{\hat{Q}}=\mathbf{\tilde{Q}}\neq\mathbf{\bar{Q}}$.
Let $\hat{\bar{\mathbf{N}}}=\mathbf{\bar{Q}}(\mathbf{\hat{Q}}^{\top}\mathbf{\hat{Q}})^{-1}\mathbf{\hat{Q}}^{\top}$.
In this case,

\begin{align*}
\frac{1}{n}\mathbb{E}\left[\text{tr}\left\{ Cov\left(\tilde{\bm{\bar{\pi}}},\hat{\bar{\bm{\eta}}}|\mathbf{X}\right)\right\} \right] & =\frac{1}{n}\mathbb{E}\left[\text{tr}\left\{ Cov\left(\hat{\bar{\mathbf{N}}}\mathbf{a},\hat{\bar{\mathbf{N}}}\mathbf{y}|\mathbf{X}\right)\right\} \right]\\
 & =\frac{1}{n}\mathbb{E}\left[\text{tr}\left\{ \hat{\bar{\mathbf{N}}}Cov\left(\mathbf{a},\mathbf{y}|\mathbf{X}\right)\hat{\bar{\mathbf{N}}}^{\top}\right\} \right]\\
 & =\frac{1}{n}\mathbb{E}\left[\text{tr}\left\{ \hat{\bar{\mathbf{N}}}^{\top}\hat{\bar{\mathbf{N}}}Cov\left(\mathbf{a},\mathbf{y}|\mathbf{X}\right)\right\} \right]\\
 & =\frac{1}{n}\mathbb{E}\left[\sum_{i}\left(\hat{\bar{\mathbf{N}}}^{\top}\hat{\bar{\mathbf{N}}}\right)_{i,i}Cov\left(\bar{A}_{i},\bar{Y}_{i}|\bar{X}_{i}\right)\right]\\
 & \asymp\frac{1}{n}\mathbb{E}\left[tr\left(\hat{\bar{\mathbf{N}}}^{\top}\hat{\bar{\mathbf{N}}}\right)\right]\\
 & =\frac{1}{n}\mathbb{E}\left[tr\left(\mathbf{\hat{Q}}(\mathbf{\hat{Q}}^{\top}\mathbf{\hat{Q}})^{-1}\mathbf{\bar{Q}}^{\top}\mathbf{\bar{Q}}(\mathbf{\hat{Q}}^{\top}\mathbf{\hat{Q}})^{-1}\mathbf{\hat{Q}}^{\top}\right)\right]\\
 & =\frac{1}{n}tr\left[\mathbb{E}\left((\mathbf{\hat{Q}}^{\top}\mathbf{\hat{Q}})^{-1}\mathbf{\bar{Q}}^{\top}\mathbf{\bar{Q}}\right)\right]\\
 & =\frac{1}{n}tr\left[\mathbb{E}\left((\mathbf{\hat{Q}}^{\top}\mathbf{\hat{Q}})^{-1}\right)\mathbb{E}\left(\mathbf{\bar{Q}}^{\top}\mathbf{\bar{Q}}\right)\right].
\end{align*}
To study the last line, we apply a result from \citet{Groves1969-bt}.
Let $\hat{\mathbf{A}}=\mathbf{\hat{Q}}^{\top}\mathbf{\hat{Q}}$. \citeauthor{Groves1969-bt}
show that $E(\hat{\mathbf{A}}^{-1})-E(\hat{\mathbf{A}})^{-1}$ is
p.s.d. as long as both expectations exist. Thus,
\begin{align*}
tr\left[\mathbb{E}\left((\mathbf{\hat{Q}}^{\top}\mathbf{\hat{Q}})^{-1}\right)\mathbb{E}\left(\mathbf{\bar{Q}}^{\top}\mathbf{\bar{Q}}\right)\right] & =tr\left\{ \mathbb{E}\left[\mathbf{A}^{-1}\right]\mathbb{E}\left[\mathbf{A}\right]\right\} \\
 & =tr\left\{ \left(\mathbb{E}\left[\mathbf{A}^{-1}\right]-\mathbb{E}\left[\mathbf{A}\right]^{-1}+\mathbb{E}\left[\mathbf{A}\right]^{-1}\right)\mathbb{E}\left[\mathbf{A}\right]\right\} \\
 & =tr\left\{ \left(\mathbb{E}\left[\mathbf{A}^{-1}\right]-\mathbb{E}\left[\mathbf{A}\right]^{-1}\right)\mathbb{E}\left[\mathbf{A}\right]\right\} +k_{n}\\
 & =tr\left\{ \mathbb{E}\left[\mathbf{A}\right]^{1/2}\left(\mathbb{E}\left[\mathbf{A}^{-1}\right]-\mathbb{E}\left[\mathbf{A}\right]^{-1}\right)\mathbb{E}\left[\mathbf{A}\right]^{1/2}\right\} +k_{n}\\
 & \geq k_{n},
\end{align*}
where the last line uses the fact that $E(\mathbf{A}^{-1})-E(\mathbf{A})^{-1}$
is p.s.d. Thus, for 2-way CF, 
\[
\frac{1}{n}\mathbb{E}\left[\text{tr}\left\{ Cov\left(\tilde{\bm{\bar{\pi}}},\hat{\bar{\bm{\eta}}}|\mathbf{X}\right)\right\} \right]\gtrsim k_{n}/n.
\]

\textcolor{teal}{}

\end{document}